\newcommand*\mysize{%
  \@setfontsize\mysize{8.5}{9.0}%
}
\newcommand{\nil}{\mathbf{0}}
\newcommand{\sop}[3][{}]{\mathcal{#2}_{#1}\!\left(#3\right)\!}
\newcommand{\E}{\mathcal{E}}
\newcommand{\nE}{\dot\E}
\newcommand{\unitary}[3][{}]{\text{#2}_{#1}(#3)}
\newcommand{\meas}[3][M]{\mathbb{M}_{#1}(#2 \rhd{} #3)}
\newcommand{\tmeas}[2]{\mathbb{M}_{23}^{01}(#1 \rhd{} #2)}
\newcommand{\varset}{\text{Var}}
\newcommand{\opset}{\text{Op}}
\newcommand{\measset}{\text{Meas}}
\newcommand{\chanset}{\text{Chan}}
\newcommand{\chtype}[1]{\widehat{#1}}
\newcommand{\ntype}{\mathbb{N}}
\newcommand{\btype}{\mathbb{B}}
\newcommand{\qtype}{\mathcal{Q}}
\newcommand{\confset}{\mathit{Conf}}
\newcommand{\tagset}{\mathit{Tag}}
\newcommand{\choiceset}{\mathit{Sched}}
\newcommand{\actset}{\mathit{Act}}
\newcommand{\hilbert}{\mathcal{H}}
\newcommand{\hilb}{\hilbert}
\newcommand{\qubit}{\widehat{\mathcal{H}}}
\newcommand{\qubits}[1]{\widehat{\mathcal{H}}^{\otimes{}#1}}
\newcommand{\ite}[3]{\textbf{if } #1 \textbf{ then } #2 \textbf{ else } #3}
\newcommand{\proc}[1]{\text{\textbf{#1}}}
\newcommand{\iconf}{\mathcal{C}}
\newcommand{\confl}{\mathmbox{\big\langle\hspace{-4.4pt}\big\langle}}
\newcommand{\confr}{\mathmbox{\big\rangle\hspace{-4.4pt}\big\rangle}}
\newcommand{\lrconf}[1]{\left\langle{#1}\right\rangle}
\newcommand{\slrconf}[1]{\overline{\left\langle{#1}\right\rangle}}
\newcommand{\conf}[1]{\confl{#1}\confr}
\newcommand{\confel}{\mathcal{C}}
\newcommand{\singleton}[1]{\overline{#1}}
\newcommand{\distelem}[2]{#1 \bullet #2}
\newcommand{\psum}[1]{\tensor[_{#1}]{\oplus}{}}
\newcommand{\rel}{\mathcal{R}}
\newcommand{\dist}[1]{\mathcal{D}(#1)}
\newcommand{\sdist}[1]{\mathcal{D}^{\leq}(#1)}
\newcommand{\density}[1]{\mathcal{DO}(#1)}
\newcommand{\sdensity}[1]{\mathcal{DO}^{\leq}(#1)}
\newcommand{\soset}[1]{\mathcal{SO}^{\leq}(#1)}
\newcommand{\tsoset}[1]{\mathcal{SO}(#1)}
\newcommand{\cnot}{\mathit{CNOT}}
\newcommand{\I}{\mathbb{I}}
\newcommand{\cfield}{\mathbb{C}}
\newcommand{\mass}[1]{|#1|}
\newcommand{\support}[1]{\lceil#1\rceil}
\DeclareMathOperator{\lift}{lift}
\newcommand{\tauarrow}{\xlongrightarrow{\tau}}
\newcommand{\oost}{\frac{1}{\sqrt{2}}}
\newcommand{\kp}{\ket{\psi}}
\newcommand{\kf}{\ket{\phi}}
\newcommand{\kz}{\ket{0}}
\newcommand{\ko}{\ket{1}}
\newcommand{\kpl}{\ket{+}}
\newcommand{\km}{\ket{-}}
\newcommand{\dket}[1]{\ketbra{#1}}
\newcommand{\confbot}{\confset_{\!\!\bot}}
\newcommand{\sconf}[1]{\singleton{\conf{#1}}}
\newcommand{\true}{\mathit{tt}}
\newcommand{\false}{\mathit{ff}}
\newcommand{\tags}{\colon\!}
\newcommand{\rulename}[1]{\mbox{\scriptsize\scshape #1}}
\newcommand{\blank}{{\,\cdot\,}}
\newcommand{\com}[1]{{\color{red} #1}}
\newtheorem{assumption}{Assumption}
\tikzset{arrow/.style={-stealth}}
\tikzset{tarrow/.style={-{Straight Barb}{Straight Barb}}}
\tikzset{parrow/.style={-stealth,decoration={snake,amplitude=1pt,segment length=6pt,post length=2pt},decorate,thick}}
\tikzset{conf/.style={font=\mysize}}
\tikzset{ci/.style={font=\mysize,draw, circle}}
\tikzset{psplit/.style={draw, circle, fill=black, inner sep=1.5pt}}
\tikzset{hsplit/.style={inner sep=0pt}}
\tikzset{csplit/.style={fill, circle, inner sep=2pt}}
\tikzset{weight/.style={font=\mysize,midway}}
\newtheorem{fact}{Fact}[section]
\begin{document}

\title{Quantum Bisimilarity is a Congruence\\ under Physically Admissible Schedulers\thanks{Research carried out within the National Centre on HPC, Big Data and Quantum Computing - SPOKE 10 (Quantum Computing) and partly funded from the European Union Next-GenerationEU - National Recovery and Resilience Plan (NRRP) – MISSION 4 COMPONENT 2, INVESTMENT N. 1.4 – CUP N. I53C22000690001.}
}

\author{Lorenzo Ceragioli\inst{1}
	\orcidlink{0000-0002-1288-9623} 
%	\orcidID{0000-0002-1288-9623} 
	\and
	Fabio Gadducci\inst{2}
	\orcidlink{0000-0003-0690-3051} 
%	\orcidID{0000-0003-0690-3051} 
	\and
	\newline Giuseppe Lomurno\inst{2}
	\orcidlink{0009-0000-0573-7974} 
%	\orcidID{0009-0000-0573-7974} 
	\and
	Gabriele Tedeschi\inst{2}
	\orcidlink{0009-0002-5345-9141}}
%	\orcidID{0009-0002-5345-9141}}
%
\authorrunning{L. Ceragioli et al.}
%% First names are abbreviated in the running head.
%% If there are more than two authors, 'et al.' is used.
%%
\institute{IMT School for Advanced Studies, Lucca, Italy \\
	\email{lorenzo.ceragioli@imtlucca.it}
	\and
	University of Pisa, Pisa, Italy \\
	\email{fabio.gadducci@unipi.it}
	\email{\{giuseppe.lomurno,gabriele.tedeschi\}@phd.unipi.it}
}

\titlerunning{Quantum Bisimilarity is a Congruence}

\maketitle

\begin{abstract}
	The development of quantum 
	%	computing
	algorithms and protocols calls for adequate modelling and verification techniques, which
	requires abstracting and focusing on the basic features of quantum concurrent systems, like CCS and CSP have done for their classical
	counterparts.
	So far, an equivalence relation is still missing that
	is a congruence for parallel composition and adheres to the limited discriminating power implied by quantum theory.
	%	$(i)$ is a congruence for parallel composition, $(ii)$ adheres to the limited discriminating power implied by quantum theory, and $(iii)$ makes the atomic observable properties explicit through a labeled approach.
	In fact, defining an adequate bisimilarity for quantum-capable, concurrent systems proved a difficult task, because unconstrained non-determinism allows to spuriously discriminate indistinguishable quantum systems. %, bypassing the physical limitations.
	We investigate this problem by enriching a linear quantum extension of CCS with simple physically admissible schedulers.
	We show that our approach suffices for deriving a well-behaved bisimilarity that satisfies the aforementioned desiderata.
\keywords{Quantum process calculi  \and bisimulation congruences.}
\end{abstract}

\section{Introduction}

%\red{
%	\textbf{
%		TODO}
%
%	\st{In ordine: rivedere dimostrazioni risistemando le appendici, ognuno fa le sue e controlla che ite sulle s (ex c) possa essere rimosso; poi ci dividiamo le sezioni e facciamo una passata di pulizia prima di aggiungere le cose nuove.}
%	\begin{enumerate}
%		\item scrivere intro
%		\item scrivere conclusioni
%		\item \st{risolvere semantica scheduler rand}
%		\item \st{esempio labelled}
%		\item \st{labelled la diamo ground, e poi thms: siml = siml chiusa per sop se no input (lore)}
%		\item \st{thm: sims = siml chiusa per sop (gab)}
%		\item \st{togliere tag con variabili e aggiornare notazione e aggiornare dimostrazioni}
%		\item \st{mettere massa nella sims}
%		\item \st{mettere $\ket{i}$ e partial density operators nel bkg}
%	\end{enumerate}
%}

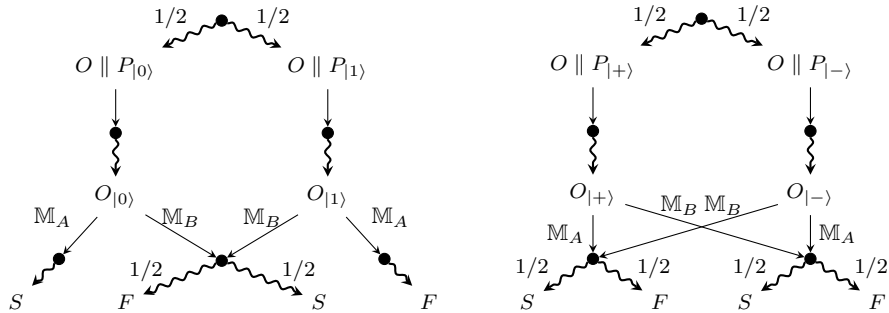
\begin{figure}[t]
	\centering
	\begin{tikzpicture}[node distance=5mm]
		\node [psplit] at (0,0) (p1) {};
		\node [conf, below left=3mm and 7mm of p1] (c12) {$O \parallel P_{\kz}$};
		\node [conf, below right=3mm and 7mm of p1] (c13) {$O \parallel P_{\ko}$};
		\node [psplit, below=of c12] (p12) {};
		\node [psplit, below=of c13] (p13) {};
		\node [conf, below=of p12] (c2) {$O_{\kz}$};
		\node [conf, below=of p13] (c3) {$O_{\ko}$};
		\node [psplit, below left=5mm and 3mm of c2] (p2) {};
		\node [psplit, below=of $(c2.south)!0.5!(c3.south)$] (p3) {};
		\node [psplit, below right=5mm and 3mm of c3] (p4) {};
		\node [conf, below left=3mm and 3mm of p2] (c4) {$S$};
		\node [conf, below right=3mm and 6mm of p2] (c5) {$F$};
		\node [conf, below left=3mm and 6mm of p4] (c6) {$S$};
		\node [conf, below right=3mm and 3mm of p4] (c7) {$F$};

		\draw [parrow] (p1) -- (c12) node[weight,above left=-.3mm and -1mm] {$1/2$};
		\draw [parrow] (p1) -- (c13) node[weight,above right=-.3mm and -1mm] {$1/2$};
		\draw [arrow] (c12) -- (p12) node[weight,left=1mm] { };
		\draw [arrow] (c13) -- (p13) node[weight,left=1mm] { };
		\draw [parrow] (p12) -- (c2) node[weight,above left=.3mm and 1mm] { };
		\draw [parrow] (p13) -- (c3) node[weight,above right=.3mm and 1mm] { };
		\draw [arrow] (c2) -- (p2) node[weight,above left] { $\mathbb{M}_A$ };
		\draw [arrow] (c2) -- (p3) node[weight,above] { $\mathbb{M}_B$ };
		\draw [arrow] (c3) -- (p3) node[weight,above] { $\mathbb{M}_B$ };
		\draw [arrow] (c3) -- (p4) node[weight,above right] { $\mathbb{M}_A$ };
		\draw [parrow] (p2) -- (c4) node[weight,above left=-1.5mm and 1mm] { };
		\draw [parrow] (p3) -- (c5) node[weight,above left=-1.5mm and 1mm] {$1/2$};
		\draw [parrow] (p3) -- (c6) node[weight,above right=-1.5mm and 1mm] {$1/2$};
		\draw [parrow] (p4) -- (c7) node[weight,above right=-1.5mm and 1mm] { };
	\end{tikzpicture}
	\hspace{.6cm}
	\begin{tikzpicture}[node distance=5mm]
		\node [psplit] at (0,0) (p1) {};
		\node [conf, below left=3mm and 7mm of p1] (c12) {$O \parallel P_{\kpl}$};
		\node [conf, below right=3mm and 7mm of p1] (c13) {$O \parallel P_{\km}$};
		\node [psplit, below=of c12] (p12) {};
		\node [psplit, below=of c13] (p13) {};
		\node [conf, below=of p12] (c2) {$O_{\kpl}$};
		\node [conf, below=of p13] (c3) {$O_{\km}$};
		\node [psplit, below=of c2] (p2) {};
		\node [psplit, below=of c3] (p3) {};
		\node [conf, below left=3mm and 6mm of p2] (c4) {$S$};
		\node [conf, below right=3mm and 6mm of p2] (c5) {$F$};
		\node [conf, below left=3mm and 6mm of p3] (c6) {$S$};
		\node [conf, below right=3mm and 6mm of p3] (c7) {$F$};

		\draw [parrow] (p1) -- (c12) node[weight,above left=-.3mm and -1mm] {$1/2$};
		\draw [parrow] (p1) -- (c13) node[weight,above right=-.3mm and -1mm] {$1/2$};
		\draw [arrow] (c12) -- (p12) node[weight,left=1mm] { };
		\draw [arrow] (c13) -- (p13) node[weight,left=1mm] { };
		\draw [parrow] (p12) -- (c2) node[weight,above left=.3mm and 1mm] { };
		\draw [parrow] (p13) -- (c3) node[weight,above right=.3mm and 1mm] { };
		\draw [arrow] (c2) -- (p2) node[weight,left] { $\mathbb{M}_A$ };
		\draw [arrow] (c2) -- (p3) node[weight,above left=1mm and 1mm] { $\mathbb{M}_B$ };
		\draw [arrow] (c3) -- (p2) node[weight,above right=1mm and 1mm] { $\mathbb{M}_B$ };
		\draw [arrow] (c3) -- (p3) node[weight,right] { $\mathbb{M}_A$ };
		\draw [parrow] (p2) -- (c4) node[weight,above left=-1.5mm and 1mm] {$1/2$};
		\draw [parrow] (p2) -- (c5) node[weight,above right=-1.5mm and 1mm] {$1/2$};
		\draw [parrow] (p3) -- (c6) node[weight,above left=-1.5mm and 1mm] {$1/2$};
		\draw [parrow] (p3) -- (c7) node[weight,above right=-1.5mm and 1mm] {$1/2$};
	\end{tikzpicture}

	\caption{Observer in parallel with indistinguishable qubit sources.}
	\label{fig:ex-zopm}
\end{figure}

%\anote{Solo per la versione ArXiv suggerisco di mettere noi i link degli orcid invece di scriverli brutalmente (disfacciamolo prima di consegnare aplas pero').}
Recent years have seen a flourishing development of \emph{quantum computation} and \emph{quantum communication} technologies.
Both of them exploit quantum phenomena like superposition and entanglement to achieve quantitative advantages with respect to their classical counterparts.
The former focuses on the (supposedly) higher computational power of quantum computers, while the latter on security and reliability properties of communication, featuring solutions for key distribution~\cite{nurhadiQuantumKeyDistribution2018}, cryptographic coin tossing~\cite{bennettQuantumCryptographyPublic2014}, direct communication~\cite{longQuantumSecureDirect2007}, and private information retrieval~\cite{gaoQuantumPrivateQuery2019}.
Quantum communication also promises to
%facilitate the implementation of large memories by suitable 
allow
linking multiple computers via the \emph{Quantum Internet}~\cite{caleffiQuantumInternetCommunication2018,zhangFutureQuantumCommunications2022},
therefore providing
quantum algorithms with large enough memories for practical applications.
%lies in distributed quantum computation, made possible by a \emph{Quantum Internet}~\cite{caleffiQuantumInternetCommunication2018,zhangFutureQuantumCommunications2022}.

With these advances, the need emerged for modelling and verification techniques
applicable to quantum distributed algorithms and protocols,
but an accepted standard is still missing.
%\red{
%\begin{itemize}
%\item ci sono diverse bisimilarita': i desiderata sono: congruenza per parallelo, label in modo da avere una nozione esplicita di osservabile, e il rispetto delle prescrizioni della quantyum theory
%\item infatti, gli stati quantistici non sono osservabili direttamente ma solo attraverso le misure, e questo limita la capacita' di discriminare
%\item nonostante i calcoli non permettano di leggere direttamente i valori, alcune bisimilarita' si comportano come se lo stato fosse visibile
%\item alcune soluzioni sono proposte con fix ad hoc, proponendo labelled che si comportano bene sui controesempi proposti
%\item popl dimostra che il problema e' legato al nondeterminismo che e' da vincolare, e propone una condizione di correttezza quantum ispirata alle proprieta' degli operatori di densita'
%\item il modo classico per vincolare le scelte e' quella di usare degli scheduler, ed e' quello che facciamo qui' con dei semplici scheduler sintattici
%\item attraverso gli scheduler diamo una saturated e mostriamo che soddisfa i requisiti richiesti
%\item poi diamo una versione labelled che corrisponde esattamente a quella saturata, esplicitando quali sono gli osservabili
%\end{itemize}
%}
%There have been 
Numerous works~\cite{lalireprocess2004,gaycommunicating2005,fengbisimulation2012,ceragioliQuantumBisimilarityBarbs2024}
rely on \textit{quantum process calculi}, an algebraic formalism successfully applied to classical protocols and concurrent systems.
While the features of these calculi are mostly comparable, the bisimilarities greatly vary.
% from one work to the other.
%The desiderata for a bisimilarity relation are: $(i)$ it is a congruence for the parallel operator, $(ii)$ it makes explicit the observables through labels, and $(iii)$ it is correct with respect to the prescriptions of quantum theory.
The desiderata for a bisimilarity relation are that it is a congruence for parallel composition and that it adheres to the limited discriminating power implied by quantum theory.
	Furthermore, it should make the atomic observable properties explicit through a labelled approach, thus making the verification simpler and more efficient to implement.
However, none of the previous proposals yields a relation satisfying all these criteria.

%Regarding the third point, 
Concerning the discriminating power, quantum theory prescribes that the state of a quantum system cannot be observed directly, but only through \emph{measurements}, which have a probabilistic outcome and cannot avoid altering the state they are measuring.
Moreover, there are different kinds of measurements, each capable of discriminating only some of the different states, while equating others.
This constraint limits the capability of 
discerning
%discriminating 
the behaviour of quantum systems.
Even though quantum process calculi only allow measurements to inspect the quantum state,~\cite{kubotaapplicationnoyear,davidsonformal2012} showed that some of the proposed bisimilarities behave as if they can implicitly compare quantum values, and highlighted some non-bisimilar processes that should be indistinguishable.
In~\cite{ceragioliQuantumBisimilarityBarbs2024}, the authors prove that this discrepancy with respect to quantum theory is due to the discriminating power of non-deterministic choices.
If not constrained to be based on classical information, non-determinism allows processes to act according to unknown quantum values, implicitly revealing them.

Consider e.g. a family of processes $P_{\kz}, P_{\ko}, P_{\kpl}$, and $P_{\km}$, with $P_{\kp}$ sending a qubit in state $\kp$.
Take two qubit sources: the first sends a qubit either in state $\kz$ or $\ko$ with equal probability, the second sends a qubit either in $\kpl$ or $\km$.
Quantum theory deems the values of the received qubits indistinguishable, as they yield the same result under any possible measurement.
%The fair distributions of processes  should therefore be bisimilar.
Nonetheless, consider an observer $O$ that receives the qubit and chooses non-deterministically which measurement to perform between
$\mathbb{M}_A$ and $\mathbb{M}_B$: the former telling apart $\kz$ from $\ko$ and equating $\kpl$ and $\km$, the latter equating $\kz$, $\ko$, $\kpl$ and $\km$.
\autoref{fig:ex-zopm} presents the evolution of the two sources
%represented as distributions and 
paired with the observer, where $O_{\kp}$ is the observer after receiving a qubit in state $\kp$, and the immediately distinguishable states $S$ and $F$ are chosen according to the result of the measurement.
Straight arrows models actions, while the squiggly ones represent the elements of a distribution, labelled by the probability (we omit $1$).
Take the system on the left.
If the measurement is chosen according to the value of the received qubit, the observer can perform $\mathbb{M}_A$ when receiving $\kz$ and $\mathbb{M}_B$ otherwise, therefore obtaining the state $S$ with probability $3/4$.
%Hence, $O$ can distinguish the two sources, as the system on the right cannot replicate this behaviour.
The system on the right cannot replicate this behaviour, thus $O$ distinguishes the two sources.
This contradicts the prescriptions of quantum theory.
However, it should be impossible to know the quantum state without first performing a measurement, hence such a combination of non-deterministic choices is not physically plausible.

In this paper,
we introduce a labelled version of the operational semantics of lqCCS~\cite{ceragioliQuantumBisimilarityBarbs2024}, a linearly typed quantum extension of CCS,
and we investigate
schedulers and scheduled bisimilarity~\cite{chatzikokolakisBisimulationDemonicSchedulers2009}.
Schedulers are usually employed to characterize ``admissible'' or ``realistic'' choices, e.g. the ones agnostic with respect to private data.
We use them to make non-deterministic choices of lqCCS compliant with quantum theory.

Our main result is that simple syntactic schedulers suffice for recovering a notion of bisimilarity that satisfies all our desiderata.
More in detail, we mark the available choices of processes with tags that do not
depend on quantum values, and we force schedulers to choose based on tags only.
We define a scheduled version of (probabilistic) saturated bisimilarity $\sim_s$,
pairing processes that are indistinguishable under any context and scheduler.
Our proposed $\sim_s$ is a congruence for parallel composition, and satisfies the indistinguishability property introduced by~\cite{ceragioliQuantumBisimilarityBarbs2024}, which lifts the known equivalence of quantum values to lqCCS processes.
Finally, we derive a labelled bisimilarity $\sim_l$, and we prove
it equivalent to $\sim_s$.
This characterizes the atomic observable properties of quantum capable processes, and paves the way for automated verification.

\paragraph{\textbf{Related Works.}}
Our work proposes a saturated and a labelled bisimilarity, both based on schedulers, and proves them equivalent.
The choice of saturated bisimilarity as the milestone behavioural equivalence is introduced in~\cite{ceragioliQuantumBisimilarityBarbs2024}, where it is shown that unconstrained non-deterministic contexts do not comply with the observational limitations of quantum theory.
That paper presents a different semantics for processes and contexts, constraining non-determinism only of the latter.
Here we instead treat them uniformly by adding tags to and constraining non-determinism in both of them, obtaining a congruence with respect to parallel composition.
Moreover, we additionally propose an equivalent labelled semantics,
explicitly representing the observable properties of concurrent quantum systems.
% and paving the way for automatic verification of bisimilarity.

Tagged
%(i.e. labelled) 
processes have been introduced in~\cite{chatzikokolakisMakingRandomChoices2007}, and are vastly used in
probabilistic systems for characterizing which choices are ``admissible'' or ``realistic''~\cite{canettiTimeBoundedTaskPIOAsFramework2006,andresInformationHidingProbabilistic2011,songDecentralizedBisimulationMultiagent2015}.
Our tags are reminiscent of the ones used by~\cite{chatzikokolakisMakingRandomChoices2007,chatzikokolakisBisimulationDemonicSchedulers2009} to prevent schedulers from choosing a move based on private data.
In the same works, the authors show that different tagging policies correspond to different classes of schedulers.
Our usage of tags is somehow similar, as we use them to impose limitations on what can affect the choice of schedulers, but our constraints are motivated by
the physical limitations prescribed by quantum theory instead of secrecy assumptions.
%, but on ,  we use them to make schedulers compliant with quantum theory.
The schedulers of~\cite{chatzikokolakisMakingRandomChoices2007,chatzikokolakisBisimulationDemonicSchedulers2009} are \emph{deterministic},
meaning that they choose a transition in a deterministic manner.
We extend them to randomized schedulers, which may choose how to reduce probabilistically~\cite{Segala95}.
However, our schedulers lack conditionals, which are present in~\cite{chatzikokolakisMakingRandomChoices2007}.
This is intended as we focus on simple, constrained schedulers and not general ones.
We use schedulers as subscripts of the transition relation, as is Section 5 of~\cite{chatzikokolakisBisimulationDemonicSchedulers2009}.

An equivalence between saturated and labelled bisimilarity for quantum protocols was introduced in~\cite{dengopen2012}, but their bisimilarity was later found too strict to adhere to the prescriptions of quantum theory~\cite{kubotasemi-automated2016,fengtoward2015-1}.
The most recent labelled bisimilarity for their calculus has been presented in~\cite{dengbisimulations2018}, which proposes a semantics made of sub-distributions, and compares the visible qubits of sub-distributions as a whole, instead of comparing single configurations. %\anote{sub-distributions o subdistributions? Appaiono entrambe}
However, no general property is proved about the adherence with quantum theory, and the proposed bisimilarity is not a
%This bisimilarity is not characterized as a saturated behavioural equivalence, and indeed is not a 
congruence with respect to the parallel operator, as shown in~\cite{ceragioliQuantumBisimilarityBarbs2024}.
We employ the same sub-distribution approach of~\cite{dengbisimulations2018} for our labelled bisimilarity, 
making it a congruence thanks to the adequate treatment of constrained schedulers, 
and we prove that our version correctly relates processes acting on indistinguishable quantum states.
%\anote{Concur citato in future work. Penso basti}

%One of the advantages of~\cite{ceragioliQuantumBisimilarityBarbs2024} over other approaches is that, even if constrained, non-deterministic sum is still capable of performing different choices according to \emph{classical} information.
%For example, the process $\mathcal{M}_{01}[q \rhd x].c!q + d!q$ could leak the measurement outcome to an external observer by choosing the channel $c$ when $x = 0$ and $d$ otherwise.
%In our setting, we recover this by applying a tagging policy that im $t \tags \mathcal{M}_{01}[q \rhd x]. t'_{x} \tags c!q + t'_{1-x}d!q$, where the tags leak information about the outcome; the latter correspond to $t \tags \mathcal{M}_{01}[q \rhd x].t' \tags c!q + t'' \tags d!q$, where the tags force the scheduler to resolve the non-deterministic sum in a uniform way.

\paragraph{\textbf{Synopsis.}}
%\anote{Mettere section maiuscolo. Fatto per ora a mano}
In Section~\ref{sec:bg} we give some background about
probability distributions and quantum computing.
In Section~\ref{sec:lqCCS} we present our scheduled semantics for
lqCCS.
In Section~\ref{sec:db} we
propose our behavioural equivalence and we investigate its properties, also providing an equivalent labelled characterization.
%In~\autoref{sec:real-world} we describe the capabilities of the novel bisimilarity through the lenses of well-known quantum protocols.
Finally,
%we compare with related works in~\autoref{sec:rw}, and 
we wrap-up the paper in Section~\ref{sec:conc}.
The full proofs and all the details about the type systems are postponed to the appendices.

\section{Background}\label{sec:bg}
We recall some background on probability distributions
and quantum computing, referring to~\cite{nielsenquantum2010} for further reading.
%Finally, we present density operators that model probability distributions of quantum systems and their evolution.
%We refer to for further reading on quantum computing.

\subsection{Probability Distributions}\label{probabilisticBackground}
A \emph{sub-probability distribution} over a set $S$ is a function $\Delta\colon\! S \to
	[0,1]$ such that $\sum_{s \in S}\Delta(s) \leq 1$.
When a distribution $\Delta$ satisfies $\sum_{s \in S}\Delta(s) = 1$ we say it is a \emph{probability distribution}.
%We write $\dist{S}$ for the set of finitely supported distributions over $S$, i.e.\ with $\Delta(s) = 0$ for all but a finite set of elements.
We call the \emph{support} of a distribution $\Delta$ the set $ \support{\Delta} = \{ s \in
	S\;|\;\Delta(s) > 0 \}$.
We denote with $\sdist{S}$ the set of sub-probability distributions with finite support, and with $\dist{S} \subsetneq \sdist{S}$ the probability ones.

%\red{DS indica solo quelle con supporto finito?}

For each $s \in S$, we let $\singleton{s}$ be the \emph{point distribution} $\singleton{s}(s) = 1$.
Given a finite set of non-negatives reals $\{p_i\}_{i \in I}$ such that $\sum_{i \in I} p_i = 1$, we write $\sum_{i \in I} \distelem{p_i}{\Delta_i}$ for the distribution
determined by $(\sum_{i \in I} \distelem{p_i}{\Delta_i})(s) = \sum_{i \in
		I}p_i\Delta_i(s)$.
The notation $\Delta_1 \psum{p} \Delta_2$ is a shorthand for
$\distelem{p}{\Delta_1} + \distelem{(1 - p)}{\Delta_2}$.

A relation $\mathcal{R} \subseteq \dist{S} \times \dist{S}$ is said to be
\emph{linear} if $(\Delta_1 \psum{p} \Delta_2)\;\mathcal{R}\;(\Theta_1 \psum{p}
	\Theta_2)$ for any $p \in [0, 1]$ whenever $\Delta_i\;\mathcal{R}\;\Theta_i$
for $i = 1, 2$.
Moreover, $\rel$ is \emph{left-decomposable} if $(\Delta_1 \psum{p}
	\Delta_2)\;\mathcal{R}\; \Theta$ implies $\Theta = (\Theta_1 \psum{p}
	\Theta_2)$ for some $\Theta_1, \Theta_2$ with $\Delta_i\;\mathcal{R}\;\Theta_i$ for $i = 1,2$ and for any $p \in [0, 1]$. Right-decomposability is defined symmetrically, and a relation is \emph{decomposable} when it is both left- and right-decomposable.

Given $\mathcal{R} \subseteq A \times \dist{B}$, its \emph{lifting}
$\text{lift}(\mathcal{R}) \subseteq \dist{A} \times \dist{B}$ is the smallest
linear relation such that $\overline{s}\;\text{lift}(\mathcal{R})\;\Theta$ when
$s\;\mathcal{R}\;\Theta$.
% Generalizing to $n$-ary relations, given $\rel \subseteq (A_1 \times \cdots \times A_n) \times \dist{B}$, its \emph{lifting} on $A_i$, $\lift_{A_i}(\rel) \subseteq (A_1 \times \cdots \times \dist{A_i} \times \cdots \times A_n) \times \dist{B}$ is the smallest relation that is linear on the $i$-th argument, i.e. such that $(a_1, \ldots, \overline{a_i}, \ldots, a_n)\,\lift_{A_i}(\rel)\,\Delta$ when $(a_1, \ldots, a_i, \dots , a_n)\,\rel\,\Delta$ and $a_i, \ldots, (\Delta \psum{p} \Delta'), \ldots a_n \;\lift_{A_i}(\rel)\;(\Theta \psum{p}
% \Theta')$ for any $p \in [0, 1]$ whenever  $a_i, \ldots, \Delta, \ldots, a_n \rel \Theta$ and $a_i, \ldots, \Delta', \ldots a_n \rel \Theta'$
We generalize this
to
$\lift_{A_i}(\rel) \subseteq (A_1 \times \dots \times \dist{A_i} \times \dots \times A_n) \times \dist{B}$
for the lifting on the $i$-th component of an $n$-ary relation $\rel \subseteq (A_1 \times \dots \times A_i \times \dots \times A_n) \times \dist{B}$, obtained by
fixing the other arguments.
%
% Generalizing to $n$-ary relations, we say that a relation $\rel \subseteq (A_1 \times \cdots \times A_n) \times \dist{B}$ is linear in its $i$-th element if
% $a_i, \ldots, \Delta, \ldots, a_n \rel \Theta$ and $a_i, \ldots, \Delta', \ldots a_n \rel \Theta'$ implies 
% $a_i, \ldots, (\Delta \psum{p} \Delta'), \ldots a_n \;\lift_{A_i}(\rel)\;(\Theta \psum{p} \Theta')$ for any $p \in [0, 1]$; and we write 
% $\lift_{A_i}(\rel) \subseteq (A_1 \times \cdots \times \dist{A_i} \times \cdots \times A_n) \times \dist{B}$ for the least linear relation
For example, $\lift_{A_n} (\rel)$ is defined as
% e' il least relation R' such that 
% for each $a_1, a_2,... a_i-1, a_i+1, ... a_n$, the relation
% \[
% lift(\{ (a_i, \Theta) | (a_1, ... a_n, \Theta) \in R \}) = \{ (\Delta, \theta) | (a_1, \Delta, a_n, \Theta) \in R' \}s
% \]
%
% \begin{align*}
% \lift_{A_n}(\rel) = \{ (a_1, \dots, a_{n-1}, \Delta, \Theta) \mid (\Delta, \Theta) \in \lift( fix_{a_1, \dots, a_{n-1}}(\rel)) \}\\
% \text{where } fix_{a_1, \dots, a_{n-1}}(\rel) = \{ (a_n, \Theta) | (a_1, ... a_{n-1}, a_n, \Theta) \in R \}
% \end{align*}
\[
	\{ (a_1, \dots, a_{n-1}, \Delta, \Theta) \mid (\Delta, \Theta) \in \lift(\{ (a_n, \Theta) | (a_1, ... a_{n-1}, a_n, \Theta) \in R \}) \}.
\]
%
% \[
% lift_i(\rel) = \{ (a_1, \dots, a_{i-1}, \Delta_i, a_{i+1}, \dots, a_n, \Theta) \mid (\Delta, \Theta) \in lift( fix_{a_1, \dots, a_{i-1}, a_{i+1}, \dots, a_n}(\rel)) \}
% \]
%
% 
% \[
% lift_i(\rel) = \{ (a_1, \Delta, a_n, \Theta) \mid (\Delta, \Theta) \in lift(\{ (a_i, \Theta) | (a_1, ... a_n, \Theta) \in R \}) \}
% \]
% 
% \[
%  = \{ (\Delta, \theta) | (a_1, \Delta, a_n, \Theta) \in lift_i(R) \}
% \]
\subsection{Quantum Computing}

An isolated physical system is associated to a \emph{Hilbert space} $\hilb$,
i.e. a complex vector space equipped with an inner product $\braket\blank$.
We indicate column vectors as $\kp$ and their conjugate transpose as
$\bra\psi = \kp^\dagger$.
The states of a system are \emph{unit vectors} in $\hilb$, i.e. vectors $\kp$ such that $\braket\psi =
	1$.
A two-dimensional physical system is known as a \emph{qubit}, and we denote its Hilbert space as $\qubit = \mathbb{C}^2$.
The vectors
$\ket0 = (1,0)^T$ and $\ket1 = (0,1)^T$
%$\{\ket0 = (1,0)^T, \ket1 = (0,1)^T\}$ 
form the \emph{computational basis} of
$\qubit$. Other important states are $\ket+ = \frac{1}{\sqrt{2}}(\ket0 + \ket1)$ and $\ket- = \frac{1}{\sqrt{2}}(\ket0 - \ket1)$, which form the \emph{Hadamard basis}.
In the quantum jargon, the states in the Hadamard basis are \emph{superpositions} with respect to the computational basis, as they are a linear combination of $\kz$ and $\ko$.
A third basis of $\qubit$ contains $\ket{i} = \oost(\kz + i\ko)$ and $\ket{-i} = \oost(\kz -i\ko)$.
Both $\kpl, \km$ and $\ket{i}, \ket{-i}$ are uniform superpositions of $\kz, \ko$, but they differ in the \emph{phase} of the $\ko$ coefficient.
%\emph{unit vector}
%$\kp$ (called \emph{state vector}), i.e.\ a vector such that $\braket\psi =
%1$. 
%The simplest example of a quantum physical system is a \emph{qubit}, which
%is associated with the two-dimensional Hilbert Space $\qubit = \cfield^2$. 
%The
%vectors $\{\ket0 = (1,0)^T, \ket1 = (0,1)^T\}$ form an orthonormal basis of
%$\qubit$, called the \emph{computational basis}. Other important vectors in
%$\qubit$ are $\ket+ = \frac{1}{\sqrt{2}}(\ket0 + \ket1)$ and $\ket-
%= \frac{1}{\sqrt{2}}(\ket0 - \ket1)$, which form the \emph{diagonal basis}, or
%Hadamard basis.

%Intuitively, different bases represent different observable properties of a quantum system. 
%Note that $\kpl$ and $\km$ are non-trivial linear combinations of $\kz$ and $\ko$, roughly meaning that
%the property associated with the computational basis is undetermined in $\kpl$ and $\km$.
%
%Symmetrically, $\kz$ and $\ko$ are themselves superpositions with respect to the diagonal basis.

We represent the state space of a composite physical system as the \emph{tensor
	product} of the state spaces of its components.
Consider the Hilbert spaces $\hilb_A$ with $\{\ket{\psi_i}\}_{i \in I}$ one of its bases,
and $\hilb_B$ with $\{\ket{\phi_j}\}_{j \in I}$ one of its bases.
We let their  tensor product $\hilb_A \otimes \hilb_B$ be the Hilbert space with bases $\{\ket{\psi_i} \otimes \ket{\phi_j}\}_{(i,j) \in I \times J}$,
where $\kp \otimes \kf$ is the Kronecker product.
%, defined as
%\begin{align*}
%	\begin{bmatrix} x_{1,1} & \cdots & x_{1,n} \\ \vdots & \ddots & \vdots\\ x_{m,1} & \cdots & x_{m,n} \end{bmatrix} \otimes A =
%	\begin{bmatrix} x_{1,1} A & \cdots & x_{1,n} A \\ \vdots & \ddots & \vdots \\ x_{m,1} A & \cdots &  x_{m,n} A \end{bmatrix}
%\end{align*}
We often omit the tensor product and write $\ket{\psi\phi}$ for $\ket{\psi}\otimes\ket{\phi}$.
We write $\qubits{n}$ for the $2^n$-dimensional Hilbert space defined as the tensor product of $n$
copies of $\qubit$ (i.e.\ the possible states of $n$ qubits).
A quantum state in $\hilbert_A \otimes \hilbert_B$ is  \emph{separable}
when it can be expressed as the Kronecker product of two vectors of
$\hilbert_A$ and $\hilbert_B$. Otherwise, it is \emph{entangled},
like the Bell state $\ket{\Phi^+} = \frac{1}{\sqrt{2}}(\ket{00} +
	\ket{11})$.
%The Bell states $\{\ket{\Phi^+}, \ket{\Phi^-}, \ket{\Psi^+}, \ket{\Psi^-}\}$ form the Bell basis for $\qubits{2}$, with
%\begin{align*}
%	\ket{\Phi^-} = \frac{1}{\sqrt{2}}(\ket{00} - \ket{11})\qquad
%	\ket{\Psi^+} = \frac{1}{\sqrt{2}}(\ket{01} + \ket{10})\qquad
%	\ket{\Psi^-} = \frac{1}{\sqrt{2}}(\ket{01} - \ket{10})
%\end{align*}

In quantum physics, the evolution of an
isolated system is described by a unitary transformation.
For each linear operator $A$ on $\hilbert$,
its \emph{adjoint} $A^\dag$
is the unique linear operator such that
$\mel{\psi}{A}{\phi} = \braket{A^\dag\psi}{\phi}$.
A linear operator $U$ is \emph{unitary} when $UU^\dag = U^\dag U = \I$, with $\I$ the identity matrix.
%In quantum physics, the evolution of a
%closed system is described by a unitary transformation: the state
%$\kp$ at time $t_0$ is related to $\ket{\psi^\prime}$ at time $t_1$ by a unitary operator $U$, which only depends
%on $t_0$ and $t_1$, i.e.\ $\ket{\psi^\prime} = U\kp$.
Quantum computers allow the programmer to manipulate registers via unitaries like
$H$, $X$, $Z$ and $\cnot$, satisfying
%The usual unitaries 
%$H$, $X$, $Z$ and $CNOT$ are defined by: 
$H \kz = \kpl$ and $H \ko = \km$;
$X \kz = \ko$ and $X \ko = \kz$; $Z \kpl = \km$ and $Z \km = \kpl$;
$\cnot \ket{10} = \ket{11}$, $\cnot \ket{11} = \ket{10}$ and
$\cnot\ket{0\psi} = \ket{0\psi}$ (all the other cases are defined by linearity).

%$\cnot \kp = \kp$ for $\kp = \ket{00}, \ket{01}$ (all the other cases are defined by linearity).

\subsection{Density operator formalism}

The density operator formalism puts together quantum systems and probability distributions by considering
mixed states, i.e.\ \emph{sub-probability distributions of quantum states}.
A point distribution $\singleton{\kp}$ (called a pure state) is represented by the matrix $\ketbra{\psi}$.
In general, a mixed state $\Delta \in \sdist{\qubits{n}}$ for $n$ qubits is represented as the matrix $\rho_\Delta \in \cfield^{2^n\times 2^n}$, known as its \emph{density operator}, with $\rho_\Delta = \sum_{\kp} \Delta(\kp)
		\dket{\psi}$.
We write $\sdensity{\hilbert}$ for the set of density operators of $\hilbert$ and $\density{\hilbert}$ for the set $\{\rho \in \sdensity{\hilb} \mid tr(\rho) = 1\}$, corresponding to probability distributions of quantum states.
For example, the mixed state $\singleton{\kz} \psum{1/3} \singleton{\kpl}$ is represented as the density operator
%	\[
$		\frac{1}{3} \ketbra{0} + \frac{2}{3} \ketbra{+}
	% = \frac{1}{3}
	%		\begin{bmatrix} 2 & 1 \\ 1 & 1 \end{bmatrix}
$.
%	\]

Note that the encoding of probabilistic mixtures of quantum states as density operators is not injective.
For example, $\frac{1}{2}\I$ is called the \emph{maximally mixed state} and
represents both distributions $\Delta_{C} = \singleton{\ket0} \psum{1/2}
	\singleton{\ket1}$ and $\Delta_{H} = \singleton{\ket+} \psum{1/2}
	\singleton{\ket-}$.
This is a desired feature, as the laws of quantum mechanics deem indistinguishable all the distributions that result in the same density operator.
%More in detail, each density operator represents an equivalence class of distributions of pure quantum states, induced by the following equivalence relation.
\begin{fact}\label{thm:qind}
	Two distributions of pure quantum states $\Delta, \Theta \in \dist{\hilb}$
	%$\Delta, \Theta \in \dist{\qubits{n}}$
	are indistinguishable for any physical observer whenever $$\rho_\Delta = \sum\nolimits_{\kp} \Delta(\kp) \dket{\psi} = \sum\nolimits_{\kp} \Theta(\kp) \dket{\psi} = \rho_\Theta.$$
\end{fact}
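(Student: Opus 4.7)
The plan is to reduce any ``physical observer'' to an agent that interacts with the system via a quantum measurement, and then to show, by a one-line trace calculation, that the outcome statistics of every such measurement depend on the ensemble $\Delta$ only through its density operator $\rho_\Delta$. This is the standard density-operator argument underlying the equivalence of ensembles, so I expect no conceptual obstacle; the only delicate point is fixing what counts as ``physical observation'' in the first place.

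First I would invoke the measurement postulate of quantum mechanics: every physical observation on a system whose state space is $\hilb$ can be described by a collection of operators $\{M_m\}_{m\in I}$ on $\hilb$ satisfying $\sum_m M_m^\dag M_m = \I$, where $m$ indexes the possible classical outcomes. The Born rule then says that the probability of outcome $m$ on a pure state $\kp$ is $\mel{\psi}{M_m^\dag M_m}{\psi}$. Lifting to a mixed state, the probability of outcome $m$ on a distribution $\Delta\in\dist{\hilb}$ is, by total probability, $\sum_{\kp}\Delta(\kp)\,\mel{\psi}{M_m^\dag M_m}{\psi}$.

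Next I would rewrite each summand using the trace identity $\mel{\psi}{A}{\psi} = \tr(A\dket{\psi})$ and exploit linearity of the trace to push the sum inside, giving
\[
\sum\nolimits_{\kp}\Delta(\kp)\,\mel{\psi}{M_m^\dag M_m}{\psi}
\;=\; \tr\!\Bigl(M_m^\dag M_m \sum\nolimits_{\kp}\Delta(\kp)\dket{\psi}\Bigr)
\;=\; \tr(M_m^\dag M_m\,\rho_\Delta).
\]
Thus the probability of every outcome $m$ of every measurement depends on $\Delta$ only through $\rho_\Delta$. Applying the same computation to $\Theta$ and using the hypothesis $\rho_\Delta=\rho_\Theta$, I would conclude that $\Delta$ and $\Theta$ induce identical outcome statistics for every physical measurement, hence are indistinguishable to any physical observer. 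The genuinely non-formal ingredient is the first step, i.e.\ accepting that every physical interaction reduces to such a measurement after tracing out auxiliary systems; this is a postulate rather than a theorem, which is also why the statement is labelled a Fact.
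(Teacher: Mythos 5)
Your argument is correct and is precisely the standard density-operator argument that the paper implicitly relies on: the statement is labelled a \emph{Fact} and given no proof in the paper, which simply appeals to the laws of quantum mechanics (citing Nielsen and Chuang), and your reduction of any observation to a generalized measurement followed by the trace computation $\sum_{\kp}\Delta(\kp)\mel{\psi}{M_m^\dag M_m}{\psi}=\tr(M_m^\dag M_m\,\rho_\Delta)$ is exactly the canonical justification. The only point worth making explicit is that ``any physical observer'' includes adaptive, multi-round strategies, which your argument covers because the post-measurement ensemble after outcome $m$ again has a density operator $M_m\rho_\Delta M_m^\dag/p_m$ determined solely by $\rho_\Delta$ (or, equivalently, via the dilation argument you sketch at the end).
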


Since density operators are just distributions of pure states, the same result is easily extended to indistinguishable distributions of mixed states.

When modelling composite systems, density operators are composed with the Kronecker product as well.
Differently from pure states, they can also describe local information about subsystems.
Let $\hilbert_{AB} = \hilbert_A \otimes \hilbert_B$
represent a composite system, with subsystems $A$ and $B$.
Given a (not
necessarily separable) $\rho^{AB} \in \hilbert_{AB}$, the state of
the subsystem $A$ is described as the \emph{reduced density
	operator} $\rho^A = \tr_B(\rho^{AB})$, with $\tr_B$ the \emph{partial trace over $B$}, defined as
the linear transformation such that
$\tr_B(\ketbra{\psi}{\psi'} \otimes \ketbra{\phi}{\phi'}) =
	\ketbra{\psi}{\psi'}\tr(\ketbra{\phi}{\phi'})$.

When applied to pure separable states, the partial trace returns the actual
state of the subsystem. When applied to an entangled state, instead, it
produces a mixed state, because ``forgetting'' the
information on the subsystem $B$ leaves us with only partial
information on subsystem $A$. For example, the partial trace over the first
qubit of $\dket{\Phi^+}$ is the maximally mixed state.

The dynamics of mixed states is given by \emph{trace non-increasing
	superoperators}, i.e. functions on density operators.
A superoperator $\mathcal{E}: \sdensity{\hilb} \to \sdensity{\hilb}$ on a $d$-dimensional Hilbert space $\hilb$ is a function defined by its \emph{Kraus operator sum decomposition} $\{E_i\}_{i = 1, \ldots, d^2}$, satisfying that
$\sop{E}{\rho} = \sum_i E_i\rho E_i^\dag$ and $\sum_i E_i^\dag E_i \sqsubseteq \I$,
where
%$\I$ is the identity operator on $\hilbert$ and 
$A \sqsubseteq B$ means that $B - A$ is a positive semidefinite matrix.
%(i.e., $\bra{\psi} (B - A) \ket{\psi} \geq 0$ for any $\ket{\psi}$).
We call $\soset{\hilbert}$ the set of trace non-increasing superoperators on $\hilbert$, and
$\tsoset{\hilbert} \subseteq \soset{\hilbert}$ the set of all
\emph{trace-preserving} superoperators, i.e.\ such that $\sum_i E_i^\dag E_i = \I$.
Roughly, trace-preserving superoperators map distributions of quantum states to distributions of quantum states, while the result of applying a trace non-increasing superoperator to a distribution may be a \emph{sub}-probability distribution.

Noticeably, the tensor product of superoperators is obtained by tensoring their Kraus decompositions, and any
unitary transformation $U$ can be seen as a superoperator, that we still denote as $U$, having $\{ U \}$ as its Kraus
decomposition.
%Other transformations are possible, like 
%the constant $\text{Set}_{\rho}$ superoperators, transforming any input state in the given state $\rho$; and the probabilistic combination of 
%unitaries having Kraus decomposition $\{\sqrt{p_i} U_i\}_i$ with $\sum_i p_i = 1$ and $U_i$ a unitary transformation
%(each $U_i$ is applied with a given probability, which is useful for modelling noisy channels and gates).

\emph{Quantum measurements} describe how to extract information from a physical system.
Performing a measurement on a quantum state returns a probabilistic classical result and causes the quantum state to change (i.e.\ to \emph{decay}).
%Thus, measurements alter the state of the qubits.
%Moreover, the result of a measurement is intrinsically probabilistic.
%
%A \emph{measurement operator} is a linear transformation 
%that associates each input quantum state with a probability and a resulting quantum state.
%A \emph{measurement} is then a set of possible classical
%outcomes: each of them is
%associated with a \emph{measurement operator} that encodes how the probability of the outcome and the resulting quantum state depends on the current state $\kp$.
%
%A measurement with $n$ different outcomes is a set $\mathbb{M} = \{M_m\}_{m=0, \ldots , k-1}$ of $k$ linear operators,
A measurement with $k$ different outcomes is a set $\mathbb{M} = \{M_m\}_{m=0}^{k-1}$ of $k$ linear operators,
satisfying the \emph{completeness} equation $\sum_{m=0}^{k-1}
	M_m^\dag M_m = \I$.
If the state of the system is $\rho$ before the
measurement, then the probability of $m$-th outcome occurring is $p_m = tr(M_m \rho M_m^\dagger)$. If $m$ is the outcome, then the state after the measurement will be
$M_m \rho M_m^\dagger /p_m$.
Note that each operator $M_m$ defines a trace non-increasing superoperator $\mathcal{M}_m$, with $\mathcal{M}_m(\rho) = M_m \rho M_m^\dagger$, and the resulting state after the $m$-th outcome is the normalization of $\mathcal{M}_m(\rho)$.
%\blu{In the following, we sometimes write $\mathbb{M}$ for the set of superoperators $\mathcal{M}_m$ induced by the linear operators $M_m \in \mathbb{M}$.}

%The simplest measurements project a state into one of the basis of $\hilbert$,
%e.g. $\mathbb{M}_{01} = \{M_0, M_1\}$ with $M_0 = \ketbra{0},M_1 = \ketbra{1}$ in the
%computational basis of $\qubit$, and $\mathbb{M}_\pm = \{M_+, M_-\}$ with $M_+ =
%	\ketbra{+}, M_- = \ketbra{-}$ in the Hadamard basis.
%As expected, applying the measurement $\mathbb{M}_{01}$ on $\kz$ returns the classical outcome $0$ and the state $\kz$ with probability $1$.
%When applying the same measurement on $\kpl$, instead, the result may be $0$ and $\kz$, or $1$ and $\ko$ with equal probability.
%Note also that measuring $\kz$ with $\mathbb{M}_\pm$ leads to either $0$ and $\kpl$, or $1$ and $\km$, also with equal probability.
%Let 
%$\mathbb{M}_{\pm i} = \{ \ketbra{i}, \ketbra{-i} \}$, measuring each one of $\kz$, $\ko$, $\kpl$ and $\km$ returns $0$ and $\ket{i}$, or $1$ and $\ket{-i}$ with equal probability.

The simplest measurements project a state into the elements of a basis,
e.g. $\mathbb{M}_{01} = \{\ketbra{0}, \ketbra{1}\}$ and $\mathbb{M}_\pm = \{\ketbra{+}, \ketbra{-}\}$ for the
computational and Hadamard basis of $\qubit$.
As expected, applying $\mathbb{M}_{01}$ to $\kz$ always returns the classical outcome $0$ and the state $\kz$.
When applying the same measurement on $\kpl$, instead, the result is $0$ and $\kz$, or $1$ and $\ko$ with equal probability.
Symmetrically, measuring $\kz$ with $\mathbb{M}_\pm$ leads to either $0$ and $\kpl$, or $1$ and $\km$, with equal probability.
Finally, applying the measurement $\mathbb{M}_{\pm i} = \{ \ketbra{i}, \ketbra{-i} \}$ to each of $\kz$, $\ko$, $\kpl$ and $\km$ returns $0$ and $\ket{i}$, or $1$ and $\ket{-i}$ with equal probability.

\section{A Quantum Process Algebra}\label{sec:lqCCS}
In the following sections we describe the syntax and the type system of lqCCS
processes, as well as a semantics decorated with schedulers.
Our process calculus is enriched with a linear type system,
%both for reflecting the no-cloning theorem and for disambiguate the visibility of qubits.
reflecting the \emph{no-cloning theorem} of quantum mechanics, which forbids quantum values to be copied or broadcast.
Moreover, we introduce tags in the syntax, naming
%allowing to "name" 
all the possible non-deterministic choices,
and schedulers in the semantics, choosing among
%which choose between 
the available tags.

\subsection{Syntax and Type System}\label{Syntax}

The syntax of tagged lqCCS processes is defined by the productions
\begin{align*}
	P \Coloneqq & \ t \tags \tau . P \mid (t,t) \tags \tau. P \mid t \tags \sop{E}{\tilde{e}}.P \mid t \tags \meas[]{\tilde{e}}{x}.P \mid t \tags c?x . P \mid \nil_{\tilde{e}} \\
	\mid        & \ t \tags c!e . P \mid \ite{e}{P}{P} \mid  P + P \mid P \parallel P \mid P \setminus c                                                                        \\
	e \Coloneqq & \ x \mid b \mid n \mid q \mid \neg e \mid e \lor e \mid e \leq e \mid e = e
\end{align*}
where $b \in \btype$, $n \in \ntype$, $q \in \qtype$, $x \in \varset$, $c \in
	\chanset$ with $\qtype$, $\varset$, $\chanset$ denumerable
sets of respectively qubit names, variables and channels, each typed.
The types for channels are $\widehat{\ntype}, \widehat{\btype}$ and $\widehat{\qtype}$.
We use $\tilde{e}$ to denote a (possibly empty) tuple $e_1, \ldots, e_n$ of expressions.

We assume a denumerable set $\tagset = \{t_1, t_2, \ldots\}$ of tags, and the actions are tagged with a tag $t \in \tagset$.
For the silent action $\tau$ we consider two possibilities, either a single tag or a pair of tags.
This is useful for using $\tau$ actions to write an abstract specification of a concrete protocol: $t \tags \tau$ models an action with tag $t$, $(t, t')\tags\tau$ models a synchronization.
The process $\nil_{\tilde{e}}$ \emph{discards} the qubits in ${\tilde{e}}$.
It behaves as a deadlock process that maintains ownership of the qubits in ${\tilde{e}}$ and makes them inaccessible to other processes.
Discard processes will be shown semantically equivalent to any deadlock process using the same qubits.
The process $\nil_{q_1,q_2}$ is, e.g., equivalent to $\nil_{q_2,q_1}$ and to $(c!q_1.c!q_2.\nil) \setminus c$, since $q_1$ and $q_2$ will never be available.
When $\tilde{e}$ is the empty sequence, we write $\nil$ to stress the equivalence with the nil process of standard CCS\@.
This feature of lqCCS allows marking which qubits are hidden to the environment, thus relieving bisimilar processes to agree on them.
%	Note that ``discard-like'' processes can be written also in qCCS, but are never used in the literature.
A symbol $\mathcal{E}$ denotes a trace-preserving superoperator on
$\qubits{n}$ for some $n > 0$, and we write $\mathcal{E} : \opset(n)$ to
indicate that $\mathcal{E}$ is a superoperator of arity $n$.
A symbol $\mathbb{M}$ denotes a measurement $\{M_0,\ldots,M_{k-1}\}$ with $k$ different outcomes;
we write $\mathbb{M} : \measset(n)$ to indicate that each $M_i$ acts on $n$ qubits, and denote $|\mathbb{M}|$ the cardinality $k$ of $\mathbb{M}$.
%We let $\mathbb{M}_{01}$ and $\mathbb{M}_\pm$ be the projective measurements over the computational and Hadamard basis respectively.
Recall that $\mathbb{M}_{01}, \mathbb{M}_\pm$ and $\mathbb{M}_{\pm i}$ are the projective measurements in the bases $\{ \kz, \ko \}, \{ \kpl, \km \}$ and $\{ \ket{i}, \ket{-i} \}$.
Parallel composition, non-deterministic sum and restriction are the standard CCS ones.
%We say that the channel
%name $c$ is \emph{bound} in $P \setminus c$ and \emph{free} otherwise. We
%denote with $\text{fc}(P)$ the set of free channels and with $\text{fv}(P)$ the
%set of free classical variables of $P$, defined as usual.

As for lqCCS, the visibility of qubits is enforced explicitly through a linear type system.
The typing system of \cite{ceragioliQuantumBisimilarityBarbs2024} is applicable also to tagged lqCCS processes, since the tags are annotations that do not change the ownership of qubits.
The typing judgment $\Sigma \vdash P$ indicate that the process $P$ is well-typed under the usage of the set of qubits $\Sigma \subseteq \qtype$.
The typing for processes is unique~\cite{ceragioliQuantumBisimilarityBarbs2024},
i.e.
whenever $\Sigma \vdash P$ and $\Sigma' \vdash P$ then $\Sigma = \Sigma'$. For this reason we will call $\Sigma_P$ the only context which types $P$.
The type system is available in Appendix \ref{sec:typeappendix}.

\begin{example}\label{ex:quantumlottery}
	Consider a quantum lottery $\proc{QL}  = \proc{Pr} \parallel \proc{An} $ formed by processes $\proc{Pr}$, which prepares a qubit used as a source of randomness, and $\proc{An}$, which receives it, measures it, and  announces the winner between Alice and Bob.
	\begin{align*}
		\proc{Pr} & = (t_1\tags X(q).t_3\tags c!q.\nil) + (t_2\tags H(q).t_3\tags c!q.\nil)                              \\
		\proc{An} & = t_4\tags c?x.t_4\tags \meas[{01}]{x}{y}.\ite{y = 0}{t_5\tags a!1.\nil_{x}}{ t_6\tags b!1.\nil_{x}}
	\end{align*}
	Intuitively, $\proc{Pr}$ can prepare and send a qubit by applying either $X$ or $H$ to its qubit $q$, and $\proc{An}$ announces that Alice wins with $a!1$ if the qubit is found in state $\kz$, or that Bob wins with $b!1$ if the received qubit is found in state $\ko$.
	The unique typing of $\proc{QL}$ is $\{q\} \vdash
		\proc{QL}$, with $a, b : \chtype{\ntype}$, $c : \chtype{\qtype}$, $y : \ntype$,
	and $q, x : \qtype$.
\end{example}

\begin{figure}[!t]
	\footnotesize
	\begin{gather*}
		\infer[\rulename{ITEt}]{\conf{\rho, \ite{e}{P}{Q}} \xlongrightarrow{\mu}_{s} \Delta}{\conf{\rho,P} \xlongrightarrow{\mu}_s \Delta & {e \Downarrow \true}} \qquad
		\infer[\rulename{ITEf}]{\conf{\rho, \ite{e}{P}{Q}} \xlongrightarrow{\mu}_{s} \Delta}{\conf{\rho,Q} \xlongrightarrow{\mu}_s \Delta & {e \Downarrow \false}} \\[0.2cm]
		\infer[\rulename{Tau}]{\conf{\rho, t \tags \tau . P} \xlongrightarrow{\tau}_{t} \singleton{\conf{\rho, P}}}{} \qquad
		\infer[\rulename{TauPair}]{\conf{\rho, (t, t') \tags \tau . P} \xlongrightarrow{\tau}_{(t, t')} \singleton{\conf{\rho, P}}}{} \\[0.2cm]
		\infer[\rulename{Send}]{\conf{\rho, t \tags c!e . P} \xlongrightarrow{c!v}_{t} \singleton{\conf{\rho, P}}}{e \Downarrow v} \qquad
		\infer[\rulename{Receive}]{\conf{\rho, t \tags c?x . P} \xlongrightarrow{c?v}_{t} \singleton{\conf{\rho, P[\sfrac{v}{x}]}}}{c : \chtype{\qtype} \Rightarrow v \in \Sigma_\rho}\\[0.2cm]
		\infer[\rulename{Restrict}]{\conf{\rho, P \setminus c} \xlongrightarrow{\mu}_s \Delta \setminus c}{\conf{\rho, P} \xlongrightarrow{\mu}_s \Delta & {\mu \not \in \{c!v, c?v\}}} \qquad
		\infer[\rulename{QOp}]{\conf{\rho, t \tags \sop{E}{\tilde{q}} . P} \xlongrightarrow{\tau}_{t} \singleton{\conf{\mathcal{E}^{\tilde{q}}(\rho), P}}}{} \\[0.2cm]
		\infer[\rulename{QMeas}]{\conf{\rho, t\tags \meas[]{\tilde{q}}{y} . P} \xlongrightarrow{\tau}_{t} \sum_{m=0}^{|\mathbb{M}| - 1} \distelem{p_m}{\singleton{\conf{\frac{1}{p_m}\mathcal{M}_m^{\tilde{q}}(\rho), P[\sfrac{m}{y}]}}}}{p_m = \tr(\mathcal{M}_m^{\tilde{q}}(\rho))} \\[0.2cm]
		\infer[\rulename{ParL}]{\conf{\rho, P \parallel Q} \xlongrightarrow{\mu}_s \Delta \parallel Q}{\conf{\rho, P} \xlongrightarrow{\mu}_s \Delta & {\mu \not \in \{ c?v \mid  v \in \Sigma_Q \}}} \qquad
		\infer[\rulename{ParR}]{\conf{\rho, P \parallel Q} \xlongrightarrow{\mu}_s P \parallel \Delta}{\conf{\rho, Q} \xlongrightarrow{\mu}_s \Delta &{\mu \not \in \{ c?v \mid  v \in \Sigma_P \}}} \\[0.2cm]
		\infer[\rulename{SumL}]{\conf{\rho, P + Q} \xlongrightarrow{\mu}_{s} \Delta}{\conf{\rho,P} \xlongrightarrow{\mu}_s \Delta} \qquad
		%		\infer[\rulename{{Reduce}}]{\conf{\rho, P \parallel Q} \xlongrightarrow{\tau}_{(t,t')} \singleton{\conf{\rho, P' \parallel Q'}}}
		%		{ \conf{\rho, P} \xlongrightarrow{\alpha}_{t} \singleton{\conf{\rho, P'}} &
		%			\conf{\rho, Q} \xlongrightarrow{\overline{\alpha}}_{t'} \singleton{\conf{\rho, Q'}}
		%		}
		\infer[\rulename{SynchL}]{\conf{\rho, P \parallel Q} \xlongrightarrow{\tau}_{(t,t')} \singleton{\conf{\rho, P' \parallel Q'}}}
		{ \conf{\rho, P} \xlongrightarrow{c!v}_t \singleton{\conf{\rho, P'}} &
			\conf{\rho, Q} \xlongrightarrow{c?v}_{t'} \singleton{\conf{\rho, Q'}}} \\[0.2cm]
		\infer[\rulename{SumR}]{\conf{\rho, P + Q} \xlongrightarrow{\mu}_{s} \Delta}{\conf{\rho,Q} \xlongrightarrow{\mu}_s \Delta} \qquad
		\infer[\rulename{SynchR}]{\conf{\rho, P \parallel Q} \xlongrightarrow{\tau}_{(t,t')} \singleton{\conf{\rho, P' \parallel Q'}}}
		{ \conf{\rho, Q} \xlongrightarrow{c!v}_t \singleton{\conf{\rho, Q'}} &
			\conf{\rho, P} \xlongrightarrow{c?v}_{t'} \singleton{\conf{\rho, P'}}}
	\end{gather*}

	\caption{Rules of lqCCS semantics.}
	\label{semantics}
\end{figure}

\subsection{Operational Semantics}\label{Semantics}

We describe a labelled semantics for lqCCS in terms of \emph{configurations} $\conf{\rho, P} \in \confset$, each composed by a global quantum state and a tagged lqCCS process.
Given a set $\Sigma = \{q_1,\ldots,q_n\} \subseteq \qtype$ and its associated Hilbert space $\hilbert_\Sigma = \qubits{n}$, a global
quantum state $\rho$ is a density operator in $\density{\hilbert_\Sigma}$.
%, where $q_i$ refers to the $i$-th qubit in $\rho$.
%Expressions $e$ are
%evaluated through a big step semantics $e \Downarrow v$ with $v$ a value, i.e.\
%either $n \in \ntype$, $b \in \btype$, or $x \in \varset$.
%We restrict
%ourselves to standard boolean and arithmetic operations, and therefore
%omit the rules and assume free variables are not
%evaluated.
The type system is extended to configurations by considering the qubits of the
underlying quantum state.
Let $\Sigma_\rho$ be the set of qubits appearing in $\rho$.
\begin{definition}
	Let $\conf{\rho, P} \in \confset$ and $\Delta \in \dist{\confset}$.
	We let
	$(\Sigma_\rho, \Sigma_P) \vdash \conf{\rho, P}$ if $\Sigma_P \subseteq \Sigma_\rho$, and $(\Sigma, \Sigma') \vdash \Delta$ if $(\Sigma, \Sigma') \vdash
		\iconf$ for any $\iconf \in \support{\Delta}$.
	%	Let $\conf{\rho, P} \in \confset$ and $\Delta \in \dist{\confbot}$.
	%	We let
	%	$(\Sigma_\rho, \Sigma_P) \vdash \conf{\rho, P}$ if $\Sigma_P \subseteq \Sigma_\rho$.
	%	We let $(\Sigma, \Sigma') \vdash \bot$ for any $\Sigma$ and $\Sigma'$, and $(\Sigma, \Sigma') \vdash \Delta$ if $(\Sigma, \Sigma') \vdash
	%		\iconf$ for any $\iconf$ in $\support{\Delta}$.
\end{definition}
Hereafter, we restrict ourselves to well-typed distributions,
and denote with $\Sigma_{\overline{P}}$ the set $\Sigma_\rho \setminus \Sigma_P$, i.e.\ the qubits only available to the environment.

Our semantics is decorated with \emph{syntactic schedulers}, which resolve non-determinism by choosing a tag.
The syntax of schedulers $s \in \choiceset$ for tagged lqCCS distributions is defined as follows
\[
	s \Coloneqq h \mid t \mid (t, t)
\]%A scheduler can identify a stopping state with the distinct symbol $h$, a tagged action or a pair of choices in order to select a synchronization.
A scheduler can stop the execution with the symbol $h$, select an action with a tag $t$, and select a synchronization with a pair $(t_1, t_2)$.
Intuitively, tags represent available visible options, upon which the scheduler can choose (possibly using a pair of them for synchronization).
%We also allow randomized schedulers that act like a convex combination of schedulers.

%\begin{definition}
%	\red{\textbf{Accorpare i casi prefissi e quelli binari infissi -- c'e' qualcosa che non funziona, ma sembra che if then else si possa togliere, controllare e rimuovere.
%			-- per eventuale versione extended valutare di dare la semantica dell ifthenelse di schedulers direttamente nella semantica (l'if ha una visibilita' ristretta al suo processo in questo caso).}}
%	The set of available choices for any given processes is found through the function $tm(\blank)$ defined inductively as
%	\begin{gather*}
%		tm(\ell : \tau.P) = \{\ell\} \qquad
%		tm(\ell : \sop{E}{\tilde{e}}.P) = \{\ell\} \qquad
%		tm(\ell : \meas{\tilde{e}}{x}.P) = \{\ell\}         \\
%		tm(\nil_{\tilde{e}}) = \{\} \qquad
%		tm(\ite{e}{P}{Q}) = tm(P) \cup tm(Q) \\
%		tm(P + Q) = tm(P) \cup tm(Q) \qquad
%		tm(P \parallel Q) = tm(P) \cup tm(Q) \qquad
%		tm(P \setminus c) = tm(P)
%	\end{gather*}
%\end{definition}

Assume a set $Act$ of actions, containing $\tau$, $c!v$ and $c?v$ for any channel $c$ and value $v$.
The semantics of lqCCS is a probabilistic LTS (pLTS), i.e. a triple $(\confset, Act, \longrightarrow)$, with $\longrightarrow\ \subseteq \confset \times \actset \times \choiceset \times \dist{\confset}$.
A transition $(\conf{\rho, P}, \mu, s, \Delta) \in\ \longrightarrow$ is also denoted as $\conf{\rho, P} \xlongrightarrow{\mu}_s \Delta$.
% Since scheduler choices are a relevant aspect of quantum bisimilarity, we enrich the transition with the locus of the move $Loc$, encoding which portion of the process evolves in the given configuration.
% More in details, a location built from atomic choices in $c \in C = \{+_l, +_r, \parallel_l, \parallel_r\}$ where $l$ stands for left, $r$ stands for right for the given operator.
% A locus is either an element of $C^*$ for any move that is not a synchronization, or a couple of them otherwise.
% Formally, $Loc$ is composed by the elements of $2^{C^*}$ with cardinality smaller or equal than $2$.
% \red{Però non possiamo avere l'insieme vuoto di stringhe . Io scriverei che sono o stringhe o copie di stringhe }

The transition relation $\longrightarrow$ is the smallest relation
over configurations with closed processes
that satisfies the rules in~\autoref{semantics}.
As expected, tagged actions require a matching tag on the transition.
%The rules behave as for regular lqCCS on configurations, except for measurements where the substitution also applies to labels with variables.
%The newly introduced rules instead deal with choice operators. The convex combination of choices possibly moves a single configuration in a convex combination of distributions according to the respective choices.
Expressions $e$ are
evaluated through a big step semantics $e \Downarrow v$ with $v$ a value, i.e.\
either $n \in \ntype$, $b \in \btype$, or  $q \in \qtype$.
We restrict to arithmetic and logical operations, and therefore omit the rules and assume that free variables are not
evaluated.
Note that the value of the qubits can only be observed through measurements, which alter such value and have a probabilistic outcome.
In particular, the expression $q = q'$ just compares two qubit names, and not their values.
In rule $\rulename{QOp}$, the superoperator $\mathcal{E}$ is applied to the qubits in $\tilde{q}$. Since $\tilde{q}$ can be smaller than the whole $\Sigma_\rho$,
we define $\mathcal{E}^{\tilde{q}}$ as
the superoperator that acts on the whole $\rho$ but ``ignores'' the qubits outside $\tilde{q}$.
More precisely, $\mathcal{E}^{\tilde{q}}$ is
obtained by composing
$(i)$ a suitable set of SWAP unitaries to bring the qubits $\tilde{q}$ in the first positions;
$(ii)$ the tensor product of the superoperator $\mathcal{E}$ with the identity on
untouched qubits on the right; and
$(iii)$ the inverse of the SWAP operators of point $(i)$ to recover the original order of qubits~\cite{lalireRelationsQuantumProcesses2006}.
In rule $\rulename{QMeas}$, given a measurement $\mathbb{M} = \{M_m\}$, for each $m$, $\mathcal{M}_m$ stands for the trace non-increasing superoperator such that $\mathcal{M}_m(\sigma) = M_m \sigma M_m^\dagger$, and $\mathcal{M}_m^{\tilde{q}}$ is defined as before.
In the rules $\rulename{ParL}$ and $\rulename{ParR}$ the parallel composition of a distribution $\Delta = \sum_{i \in I} \distelem{p_i}{\sconf{\rho_i, P_i}}$ and a process $Q$ is defined as $\sum_{i \in I} \distelem{p_i}{\sconf{\rho_i, P_i \parallel Q}}$, and similarly for the restriction $\Delta \setminus c$.
Notice that in $\rulename{ParL}$ we require that $P$ does not receive a qubit that is already owned by $Q$, and symmetrically in $\rulename{ParR}$.

\begin{example}\label{ex:quantumlottery2}
	The pLTS semantics of the quantum lottery $\proc{QL}$ from~\autoref{ex:quantumlottery} on quantum state $\dket{0}$ is in~\autoref{fig:ex-ql}, where
	$\proc{An}'$ stands for $t_4\tags\meas[01]{x}{y}.\proc{An}''$, and $\proc{An}''$ is $\ite{y = 0}{t_5\tags a!1.\nil_x}{t_6\tags b!1.\nil_x}$.
	To simplify the presentation, we draw $\xlongrightarrow{s\tags\mu}$ instead of $\xlongrightarrow{\mu}_s$.
	A scheduler for $\proc{QL}$ must decide $(i)$ which unitary
	$\proc{Pr}$ applies to the qubit $q$, either $X$ (i.e. $s = t_1$) or $H$ (i.e. $s = t_2$);
	and $(ii)$ if the qubit is sent to $\proc{An}$ or to the external environment (notice that the channel is not restricted), i.e. the available moves are with label $\tau$ and scheduler $(t_3, t_4)$, or with label $c!q$ and $s = t_3$.
	Intuitively, if $\proc{Pr}$ chooses $X$, then Bob will win, otherwise either Alice or Bob will win with the same probability.
	Note that at the beginning, in configuration $\conf{\ketbra{0}, \proc{QL}}$, the choice $t_4$ cannot be taken because the quantum state has no qubit apart from those in $\Sigma_{\proc{QL}}$.
\end{example}

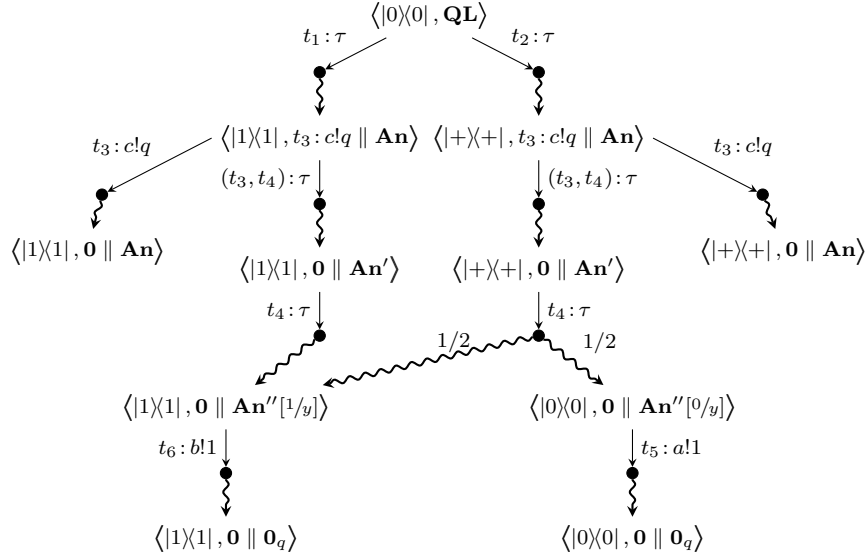
\begin{figure}[t]
	\centering
	\begin{tikzpicture}[node distance=5mm]
		\node [conf] (c1) at (0, 0) {$\conf{\ketbra{0}, \proc{QL}}$};
		\node [psplit, below left=4mm and 5mm of c1] (p12) {};
		\node [psplit, below right=4mm and 5mm of c1] (p13) {};
		\node [conf, below= of p12] (c2) {$\conf{\ketbra{1}, t_3\tags c!q \parallel\proc{An}}$};
		\node [conf, below= of p13] (c3) {$\conf{\ketbra{+}, t_3\tags c!q \parallel\proc{An}}$};
		\node [psplit, below=of c2] (p2) { };
		\node [psplit, below left=4mm and 14mm of c2] (p2') { };
		\node [psplit, below=of c3] (p3) { };
		\node [psplit, below right=4mm and 14mm of c3] (p3') { };
		\node [conf, below left=4mm and -10mm of p2'] (c4) {$\conf{\ketbra{1}, \nil \parallel\proc{An}}$};
		\node [conf, below=of p2] (c5) {$\conf{\ketbra{1}, \nil \parallel\proc{An}'}$};
		\node [conf, below=of p3] (c6) {$\conf{\ketbra{+}, \nil \parallel\proc{An}'}$};
		\node [conf, below right=4mm and -10mm of p3'] (c7) {$\conf{\ketbra{+}, \nil \parallel\proc{An}}$};
		\node [psplit, below=of c5] (p5) { };
		\node [psplit, below=of c6] (p6) { };
		\node [conf, below left=6mm and -3mm of p5] (c8) {$\conf{\ketbra{1}, \nil \parallel\proc{An}''[\sfrac{1}{y}]}$};
		\node [conf, below right=6mm and -3mm of p6] (c9) {$\conf{\ketbra{0}, \nil \parallel\proc{An}''[\sfrac{0}{y}]}$};
		\node [psplit, below=of c8] (p8) { };
		\node [psplit, below=of c9] (p9) { };
		\node [conf, below=of p8] (c8f) {$\conf{\ketbra{1}, \nil \parallel \nil_q}$};
		\node [conf, below=of p9] (c9f) {$\conf{\ketbra{0}, \nil \parallel \nil_q}$};

		\draw [arrow] (c1) -- (p12) node[weight,above left] {$t_1\tags\tau$};
		\draw [arrow] (c1) -- (p13) node[weight,above right] {$t_2\tags\tau$};
		\draw [parrow] (p12) -- (c2) node[weight,above left=.3mm and 1mm] { };
		\draw [parrow] (p13) -- (c3) node[weight,above left=.3mm and 1mm] { };
		\draw [arrow] (c2) -- (p2) node[weight,left] {$(t_3, t_4)\tags\tau$};
		\draw [arrow] (c3) -- (p3) node[weight,right] {$(t_3, t_4)\tags\tau$};
		\draw [arrow] (c2.west) -- (p2') node[weight,above left] {$t_3\tags c!q$};
		\draw [arrow] (c3.east) -- (p3') node[weight,above right] {$t_3 \tags c!q$};
		\draw [parrow] (p2') -- (c4) node[weight,left] { };
		\draw [parrow] (p2) -- (c5) node[weight,above left] { };
		\draw [parrow] (p3) -- (c6) node[weight,right] { };
		\draw [parrow] (p3') -- (c7) node[weight,above right] { };
		\draw [arrow] (c5) -- (p5) node[weight,left] {$t_4\tags\tau$};
		\draw [arrow] (c6) -- (p6) node[weight,right] {$t_4\tags\tau$};
		\draw [parrow] (p5) -- (c8) node[weight,right] { };
		\draw [parrow] (p6) -- (c8) node[weight,above right] {$1/2$};
		\draw [parrow] (p6) -- (c9) node[weight,above right] {$1/2$};
		\draw [arrow] (c8) -- (p8) node[weight,left] {$t_6\tags b!1$};
		\draw [arrow] (c9) -- (p9) node[weight,right] {$t_5\tags a!1$};
		\draw [parrow] (p8) -- (c8f) node[weight,right] { };
		\draw [parrow] (p9) -- (c9f) node[weight,above right] { };
	\end{tikzpicture}
	\caption{Semantics of the quantum lottery process.}
	\label{fig:ex-ql}
\end{figure}

It is useful to consider a pLTS as a simple LTS between (sub-)distributions.
%We let $\mathcal{C} \longrightarrow_s \Delta$ if $\mathcal{C} \xlongrightarrow{\tau}_s \Delta$ and
First, we add a ``deadlock'' configuration $\bot$, allowing distributions to evolve in sub-distributions if a move is not legal for some elements of the support of the distribution~\cite{dengbisimulations2018}.
Formally, we let $\confbot = \confset \cup \{ \bot \}$ and define $bot(\longrightarrow) \in \confbot \times \actset \times \choiceset \times \dist{\confbot}$ as the least relation such that $bot(\longrightarrow)\ \supseteq\ \longrightarrow$ and $\mathcal{C}\ bot(\xlongrightarrow{\mu}_{s})\ \overline{\bot}$ if there is no $\Delta$ such that $\mathcal{C} \xlongrightarrow{\mu}_{s} \Delta$.
We extend typing to $\confbot$ by imposing $(\Sigma, \Sigma') \vdash \bot$ for any $\Sigma$ and $\Sigma'$.
Since an element $\Delta$ of $D(\confbot)$ represents a sub-distribution, we call \emph{mass} the probability of all the configurations in $\Delta$ different from $\bot$: $\mass{\Delta} = 1 - \Delta(\bot)$.

%To do so, we first restrict ourselves to $\tau$-labelled transitions only, and we add a distinguished $\bot$ element to distributions such that
%$\mathcal{C} \xlongrightarrow{\tau}_{s}  \overline{\bot}$ for all $\mathcal{C} \in \confbot = \confset \cup \{ \bot \}$ such that there is no $\Delta$ such that $\mathcal{C} \xlongrightarrow{\tau}_{s}  \Delta$.
% Then, we consider $\text{lift}(\xlongrightarrow{\mu}_{s})\colon \dist{\confbot} \times \dist{\confbot}$.
Then, we lift the semantics to deal with distributions of schedulers and distributions of configurations.
Consider
$\lift_{\confbot}(\lift_{\choiceset}(bot(\longrightarrow))) : \dist{\confbot} \times \actset \times \dist{\choiceset} \times \dist{\confbot}$, where we first lift the transition relation to distributions of schedulers, hence allowing for randomized schedulers~\cite{Segala95}, and then lift over the input configurations.
When clear from the context, we will adopt $\longrightarrow$ also as the symbol for the lifting of the semantics to distributions, and we will write $\longrightarrow_s$ for $\longrightarrow_{\overline{s}}$ with $\overline{s}$ the randomized scheduler that behaves as $s$ with probability 1.
To avoid clashing notation, we will use $\delta$ instead of $\Delta$ for distributions of schedulers, e.g. in the transition $\confel \tauarrow_\delta \Delta$.

\begin{example}\label{ex:quantumlottery3}
	Recall \autoref{ex:quantumlottery2}. In the lifted semantics, it is possible to define a randomized scheduler choosing the unitary by tossing a fair coin, i.e. obtaining the following sequence of reductions
	\begin{align*}
		 & \sconf{\ketbra{0}, \proc{QL}} \xlongrightarrow{\tau}_{\overline{t_1} \psum{1/2} \overline{t_2}}
		\sconf{\ketbra{1}, c!q \parallel\proc{An}}
		\!\psum{\frac{1}{2}}\!
		\sconf{\ketbra{+}, c!q \parallel\proc{An}}                                                         \\
		 & \ \xlongrightarrow{\tau}_{\overline{(t_3, t_4)}}
		\sconf{\ketbra{1}, \nil \parallel\proc{An}'}
		\!\psum{\frac{1}{2}}\!
		\sconf{\ketbra{+}, \nil \parallel\proc{An}'}                                                       \\
		 & \ \xlongrightarrow{\tau}_{\overline{t_4}}
		\sconf{\ketbra{1}, \nil \parallel\proc{An}''[\sfrac{1}{y}]}
		\!\psum{\frac{3}{4}}\!
		\sconf{\ketbra{0}, \nil \parallel\proc{An}''[\sfrac{0}{y}]}
		\xlongrightarrow{a!1}_{\overline{t_5}}
		\overline{\bot}\!
		\psum{\frac{3}{4}}\!
		\sconf{\ketbra{0}, \nil \parallel \nil_{q}}
	\end{align*}
	Note that in the last step we end up in a sub-distribution, since the choice $t_5$ with action $a!1$ is not available for $\sconf{\ketbra{1}, \nil \parallel\proc{An}''[\sfrac{1}{y}]}$, which transitions to $\overline{\bot}$.
\end{example}

It is worth noting that the typing is preserved by $\tau$-transitions.
\begin{restatable}[Typing Preservation]{theorem}{typingpreservation}\label{thm:typepreservation}
	If $(\Sigma_\rho, \Sigma_P) \vdash \conf{\rho, P}$ and $\conf{\rho, P} \xlongrightarrow{\tau}_{s} \Delta$ then
	$(\Sigma_\rho, \Sigma_P) \vdash \Delta$.
\end{restatable}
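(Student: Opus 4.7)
The plan is to proceed by induction on the derivation of $\conf{\rho, P} \xlongrightarrow{\tau}_s \Delta$ in the pLTS of Figure~\ref{semantics}. Since only $\tau$-transitions are considered, I inspect the rules whose conclusion carries a $\tau$-label, namely Tau, TauPair, QOp, QMeas, SynchL, SynchR, together with the structural rules ITEt/ITEf, Restrict, ParL/ParR, SumL/SumR that forward a $\tau$ from a premise. For each $\conf{\rho', P'}$ in $\support{\Delta}$, I must establish both $\Sigma_{\rho'} = \Sigma_\rho$ and $\Sigma_{P'} \subseteq \Sigma_\rho$.

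The structural rules are dispatched by the inductive hypothesis applied to the premise, using the fact that the typing of a compound process $P$ constrains the qubit contexts of its immediate sub-processes, so that the invariant $\Sigma_{\text{sub}} \subseteq \Sigma_\rho$ already holds at the premise. For Tau and TauPair the quantum state is unchanged, and the continuation is a strict subterm of $P$ whose qubit set is contained in $\Sigma_P$ by inspection of the linear typing rules (a $\tau$-prefix introduces no new free qubit name).

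For QOp the compound $\mathcal{E}^{\tilde{q}}$ is the composition of SWAP unitaries, a tensoring of $\mathcal{E}$ with the identity on untouched qubits, and the inverse SWAPs; all three preserve the dimension and the underlying set of qubit names of $\rho$, so $\Sigma_{\mathcal{E}^{\tilde{q}}(\rho)} = \Sigma_\rho$, while $\Sigma_{P'} \subseteq \Sigma_P \subseteq \Sigma_\rho$ for the continuation. The QMeas case is analogous: each $\mathcal{M}_m^{\tilde{q}}$ acts on the same Hilbert space, renormalization by $p_m$ does not alter the qubit support, and the substitution $[\sfrac{m}{y}]$ replaces a classical variable of type $\ntype$, leaving the qubit usage of $P$ untouched.

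The delicate case is SynchL (symmetrically SynchR), where $P \parallel Q$ synchronises on some channel $c$ with value $v$, with $P$ performing $c!v$ and $Q$ performing $c?v$ while $\rho$ is left unchanged. If $v$ is a classical value the claim is immediate, since neither $\Sigma_P$ nor $\Sigma_Q$ changes. If $v$ is a qubit name, the linear type system forces $v \in \Sigma_P$ for the sender and yields, after reception, $\Sigma_{P'} = \Sigma_P \setminus \{v\}$ and $\Sigma_{Q'} = \Sigma_Q \cup \{v\}$, so that $\Sigma_{P' \parallel Q'} = \Sigma_P \cup \Sigma_Q = \Sigma_{P \parallel Q} \subseteq \Sigma_\rho$. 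The main obstacle is making this ownership transfer formally precise on top of the linear type system, i.e.\ establishing the sender/receiver typing shape as a subject reduction lemma for $c!v$ and $c?v$ transitions; once this auxiliary lemma is in place, the remainder of the induction is routine bookkeeping.
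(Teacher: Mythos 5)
Your proposal is correct and follows essentially the same route as the paper: the paper proves the statement as an immediate corollary of Theorem~\ref{thm:quasipreservation} (Typing Quasi-Preservation), which is exactly the auxiliary subject-reduction lemma for $c!v$ and $c?v$ transitions that you identify as the missing ingredient for the \rulename{SynchL}/\rulename{SynchR} case, established there by the same induction on the transition derivation with the context update $\Sigma \setminus \{v\}$ for the sender and $\Sigma \cup \{v\}$ for the receiver. The only difference is presentational: the paper strengthens the statement to arbitrary labels up front and specialises to $\tau$ at the end, whereas you run the $\tau$-induction directly and invoke the generalisation only where needed.
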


%\begin{example}
%	The semantics of $\proc{QL}$ from~\autoref{ex:quantumlottery} on quantum state $\dket{0}$ is as follows
%	\begin{align*}
%		 & \singleton{\conf{\dket{0}, \proc{QL}}}
%		\to \singleton{\confl\dket{+},\meas[M_{01}]{q}{x}.((\ite{x = 0}{a!1}{b!1}) \parallel \nil_q)\confr}                                                  \\
%		 & \ \to \Big(\singleton{\conf{\dket{0}, \ite{0 = 0}{a!1}{b!1}}} \psum{1/2} \singleton{\conf{\dket{1}, \ite{1 = 0}{a!1}{b!1}}}\Big) \parallel \nil_q \\
%		 & \ \equiv \Big(\singleton{\conf{\dket{0}, a!1}} \psum{1/2} \singleton{\conf{\dket{1}, b!1}}\Big) \parallel \nil_q
%	\end{align*}
%\end{example}

As hinted by the previous theorem, type preservation does not hold in general.
However, the updated typing context is uniquely determined by the transition label, and it is still a subset of the qubits available in the quantum state.
\begin{restatable}[Typing Quasi-Preservation]{theorem}{quasitypingpreservation}\label{thm:quasipreservation}
	% For any $(\Sigma_\rho, \Sigma_P)$ and action $\mu \in \actset$, there exists $\Sigma_\mu \subseteq \Sigma_\rho$ such that forall $\conf{\rho, P} \in \confset$ with $(\Sigma_\rho, \Sigma_P) \vdash \conf{\rho, P}$
	% and forall $\Delta \in \dist{\confset}$ with $\conf{\rho, P} \xlongrightarrow{\mu} \Delta$, then $(\Sigma_\rho, \Sigma_\mu) \vdash \Delta$.
	%		Let $(\Sigma_\rho, \Sigma_P) \vdash \iconf$ and $\mu \in Act$.
	%		There exists a unique $\Sigma_\mu \subseteq \Sigma_\rho$ such that for any $\Delta \in \dist{\confset}$ with $\iconf \xlongrightarrow{\mu}_s \Delta$, $(\Sigma_\rho, \Sigma_\mu) \vdash \Delta$.
	%
	%	Let $(\Sigma, \Sigma') \vdash \iconf$, $\Delta \in \dist{\confset}$ and $\mu \in Act$ with
	%	$\iconf \xlongrightarrow{\mu}_s \Delta$.
	%	Then $\Sigma''$ exists such that $(\Sigma, \Sigma'') \vdash \Delta$, and it only depends on $\Sigma'$ and $\mu$.
	Let $\Sigma \subseteq \qtype$ and $\mu \in Act$. Then there exists $\Sigma'$ such that for all
	$(\Sigma'', \Sigma) \vdash \iconf$ and $\Delta \in \dist{\confset}$ with $\iconf \xlongrightarrow{\mu}_s \Delta$
	it holds $(\Sigma'', \Sigma') \vdash \Delta$.

	%
	%	Let $(\Sigma_\rho, \Sigma_P) \vdash \iconf$, $\Delta \in \dist{\confset}$ and $\mu \in Act$ with
	%	$\iconf \xlongrightarrow{\mu}_s \Delta$.
	%	Then $\Sigma_\mu$ exists such that $(\Sigma_\rho, \Sigma_\mu) \vdash \Delta$, and it only depends on $\Sigma_P$ and $\mu$.
	% \anote{rivedere dimostrazione.}

	%		\blu{
	%		A function $upd$ exists such that 	
	% 		if  $(\Sigma_\rho, \Sigma) \vdash \iconf$ and $\iconf \xlongrightarrow{\mu}_s \Delta$, $(\Sigma_\rho, \Sigma_\mu) \vdash \Delta$
	% 		then $(\Sigma_\rho, upd(\Sigma, \mu)) \vdash \Delta$.
	%		}

	% \red{
	% \textbf{Sistemare l'ordine dei quantificatori.}	
	% 	$(\Sigma_\rho, \Sigma_P) \vdash \iconf$ and $(\Sigma_\rho, \Sigma_P) \vdash \iconf'$, allora 
	%
	% 	$(\Sigma_\rho, \Sigma_P) \vdash \iconf \oplus \iconf'$
	%
	% 	$\iconf \oplus \iconf' \xrightarrow{\mu} \Delta_1 \oplus \Delta_2$.
	%
	% 	Per il teorema $(\Sigma_\rho, \Sigma_\mu) \vdash \Delta_1$ e $(\Sigma_\rho, \Sigma_\mu') \vdash \Delta_2$, ma vorremmo anche anche $\Sigma_\mu = \Sigma_\mu'$
	% }
\end{restatable}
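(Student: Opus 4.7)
The plan is to fix $\Sigma'$ explicitly as a function of $\mu$ and $\Sigma$ — namely $\Sigma' = \Sigma \setminus \{v\}$ if $\mu = c!v$ with $c : \chtype{\qtype}$, $\Sigma' = \Sigma \cup \{v\}$ if $\mu = c?v$ with $c : \chtype{\qtype}$, and $\Sigma' = \Sigma$ in every other case (silent actions and classical I/O) — and then to show by induction on the derivation of $\conf{\rho, P} \xlongrightarrow{\mu}_s \Delta$ that every $\iconf' \in \support{\Delta}$ is typed by $(\Sigma'', \Sigma')$. Note that $\Sigma''$, the quantum context of $\rho$, is never altered by any transition, since no rule creates or destroys qubits of the underlying state; what varies between $\Sigma$ and $\Sigma'$ is only the linear process context.

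The routine rules \rulename{Tau}, \rulename{TauPair}, \rulename{QOp}, and \rulename{QMeas} are immediate from~\autoref{thm:typepreservation}, which already covers the $\tau$-labelled cases. For \rulename{Send}, when $c$ is classical the process's linear context is unaffected, whereas if $c : \chtype{\qtype}$ the linear typing rule for $t \tags c!e.P$ forces $v \in \Sigma$ and types $P$ under $\Sigma \setminus \{v\}$; \rulename{Receive} is symmetric, with the side condition $v \in \Sigma_\rho$ ensuring the added qubit is physically present, and the uniqueness of typing yielding $\Sigma_{P[\sfrac{v}{x}]} = \Sigma \cup \{v\}$. For \rulename{ITEt}, \rulename{ITEf}, \rulename{SumL}, \rulename{SumR}, and \rulename{Restrict}, the label $\mu$ of the conclusion coincides with that of the premise and the process context of the conclusion matches the premise's, so the induction hypothesis applies directly; the restriction rule additionally requires that $\mu$ is not an I/O on $c$, which is guaranteed by its side condition, hence the $\Sigma'$ computed from $\mu$ is unchanged by restriction.

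The principal difficulty lies in the compositional rules \rulename{ParL}, \rulename{ParR}, \rulename{SynchL}, and \rulename{SynchR}, where the context of the composite splits as $\Sigma_{P \parallel Q} = \Sigma_P \cup \Sigma_Q$, a disjoint union enforced by the linear typing of parallel composition. For \rulename{ParL}, the induction hypothesis yields the appropriate $\Sigma'_P$ for the left subderivation, while the side condition $\mu \not\in \{c?v \mid v \in \Sigma_Q\}$ forbids doubly-owning a qubit already held by $Q$, so $\Sigma'_P \cup \Sigma_Q$ is still disjoint and equals the $\Sigma'$ prescribed by $\mu$ on the whole composition; \rulename{ParR} is symmetric. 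I expect the synchronization rules to be the main obstacle: here $\mu = \tau$ demands $\Sigma' = \Sigma$ even though a qubit can be transferred internally from one component to the other. The key observation is that when $c : \chtype{\qtype}$, the send subderivation removes $v$ from $\Sigma_P$ while the matching receive subderivation adds $v$ to $\Sigma_Q$, so $(\Sigma_P \setminus \{v\}) \cup (\Sigma_Q \cup \{v\}) = \Sigma_P \cup \Sigma_Q = \Sigma$, and disjointness is maintained because $v$ was removed from $\Sigma_{P'}$ on the sending side before being added to $\Sigma_{Q'}$ on the receiving side; the classical synchronization case is simpler since no linear qubit moves. This closes the induction and establishes that $\Sigma'$ depends only on $\mu$ and $\Sigma$.
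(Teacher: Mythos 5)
Your proposal is correct and follows essentially the same route as the paper: the paper likewise defines the update $\Sigma' = f(\Sigma,\mu)$ exactly as you do and proceeds by induction on the derivation, with the same case analysis for the parallel and synchronization rules (including the observation that $(\Sigma_P \setminus \{v\}) \cup (\Sigma_Q \cup \{v\}) = \Sigma$). One caveat: you justify the $\tau$-labelled base cases by appeal to Theorem~\ref{thm:typepreservation}, but in the paper that theorem is itself a corollary of this one, so the citation is circular as stated — those cases are, however, immediate from the corresponding typing rules, so the fix is trivial.
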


%\red{Let $P$ be a process and $\mu \in Act$. Then there exists $\Sigma_\mu$ such that for all 
%$(\Sigma_\rho, \Sigma_P) \vdash \iconf$ and $\Delta \in \dist{\confset}$ with $\iconf \xlongrightarrow{\mu}_s \Delta$
%it holds $(\Sigma_\rho, \Sigma_\mu) \vdash \Delta$.}

Hereafter, we make the standard assumption that
% tags are such that the 
processes are tagged in a deterministic way, meaning that the evolution of distributions is uniquely determined by the scheduler~\cite{chatzikokolakisMakingRandomChoices2007,chatzikokolakisBisimulationDemonicSchedulers2009}.
\begin{assumption}
	For any distribution $\Delta \in \dist{\confbot}$, scheduler distribution $\delta$ and action $\mu$, if $\Delta \xlongrightarrow{\mu}_\delta \Delta'$ and $\Delta \xlongrightarrow{\mu}_\delta \Delta''$, then $\Delta' = \Delta''$.
\end{assumption}

\section{Behavioural Equivalence}\label{sec:db}

We start by defining a saturated bisimilarity \emph{\`a la}~\cite{bonchigeneral2014}, a natural notion of behavioural equivalence that pairs systems when they are indistinguishable for any observer.
We show that schedulers guarantee adherence with the prescriptions of quantum theory, as contexts cannot choose their move based on the unknown states of unmeasured qubits, a problem previously highlighted in~\cite{ceragioliQuantumBisimilarityBarbs2024}.
Finally, we give an equivalent characterization in terms of labelled bisimulations.

\subsection{Saturated Bisimilarity}

%We start by defining barbs, i.e.\ atomic observable properties of the lqCCS
%distributions.
%We just need to observe failure, i.e. if our configuration is in $\bot$.
%Note that these are purely classical barbs.
%\anote{mettere massa}
%\begin{definition}
%	A \emph{failure barb} is a predicate $\uparrow_{p}$ on distributions of configurations such that
%	$\Delta$ satisfies $\uparrow_{p}$, written $\Delta\!\uparrow_{p}$, iff $\Delta(\bot) = p$;
%\end{definition}

In saturated bisimilarities,
contexts $B[\blank]$ are processes with a typed hole that play the
role of process-discriminating observers.
\begin{definition}
	A context $B[\blank]_{\Sigma}$ is generated by the production
	$B[\blank]_{\Sigma} \Coloneqq [\blank]_{\Sigma} \parallel P$, typed
	according to the rules in Appendix~\ref{sec:typeappendix},
	and to the following one
	\[
		\infer[\rulename{Hole}]{\Sigma' \vdash [\blank]_{\Sigma} \parallel P}
		{\Sigma' \setminus \Sigma \vdash P  & \Sigma \subseteq \Sigma'}
	\]
\end{definition}
A process $P$ is
applied to contexts by replacing the hole with $P$.
Intuitively, a context $\Sigma' \vdash B[\blank]_{\Sigma}$ is a function that
given a process $P$ returns a process $B[P]$ obtained by replacing $P$ for $[\blank]$,
where $\Sigma$ is the typing context of the valid inputs and $\Sigma'$ the one
of the outputs.
Note that a context can own some qubits and each qubit cannot be referred to in both $P$ and $B[\blank]$.
We apply $\Sigma' \vdash B[\blank]_{\Sigma}$ to configurations $(\Sigma_\rho, \Sigma_P) \vdash \conf{\rho, P}$ obtaining $(\Sigma_\rho, \Sigma') \vdash \conf{\rho, B[P]}$ when
$\Sigma' \subseteq \Sigma_\rho$ and $\Sigma = \Sigma_P$, i.e.\ when the qubits referred by $B[\blank]$ are defined in $\rho$ and the process $P$ is as prescribed by $B[\blank]$.
We write $B[\conf{\rho, P}]$ for $\conf{\rho, B[P]}$, $B[\bot]$ for $\bot$, and
$B[\Delta]$ for the distribution obtained by applying $B[\blank]$ to the support of $\Delta$.
It is trivial to show that if $\Delta$ and $\Theta$ are typed by the same typing context, then $B[\Delta]$ is defined if and only if $B[\Theta]$ is defined.

% In the following, we only consider well-typed distributions and contexts, and we impose bisimulations to be over distributions of the same type (thus we avoid specifying types).
A saturated bisimulation is a relation over distributions where related pairs must have the same type and mass, and must reduce in related distributions under every possible context.
% A saturated bisimulation for a transition relation is any relation over distributions satisfying the same set of barbs that is preserved by the transition relation under every possible context.
\begin{definition}[Saturated Bisimilarity]
	A relation $\rel \subseteq \dist{\confbot} \times \dist{\confbot}$ is a
	\emph{saturated bisimulation} if $\Delta\,\rel\,\Theta$ implies
	$(\Sigma, \Sigma') \vdash \Delta$ and $(\Sigma, \Sigma') \vdash \Theta$ for some $\Sigma, \Sigma'$, $\mass{\Delta} = \mass{\Theta}$ and
	for any context $B[\blank]_{\Sigma'}$ it holds that
	\begin{itemize}
		\item whenever $B[\Delta] \tauarrow_\delta \Delta'$, there exists $\Theta'$
		      such that $B[\Theta] \tauarrow_\delta \Theta'$ and $\Delta'\;\rel\;\Theta'$;
		\item whenever $B[\Theta] \tauarrow_\delta \Theta'$, there exists $\Delta'$
		      such that $B[\Delta] \tauarrow_\delta \Delta'$ and $\Delta'\;\rel\;\Theta'$.
	\end{itemize}

	Let \emph{saturated bisimilarity}, denoted $\sim_s$, be the largest saturated bisimulation.
\end{definition}
Notice that $\sim_s$ is a congruence with respect to $\parallel$ by definition.
We now compare our proposed bisimilarity with the prescriptions of quantum theory.

\subsection{Assessment of Saturated Bisimilarity}\label{sec:ass}

As a first result, we notice that $\sim_s$ is a linear relation, meaning that the convex combination of bisimilar distributions yields bisimilar distributions.
\begin{restatable}{theorem}{linearityCongruence}\label{thm:linearity}
	%	If $\Delta_1 \sim_s \Theta_1, \Delta_2 \sim_s \Theta_2$ and $p\in [0,1]$ then $\Delta_1 \psum p \Delta_2 \sim_s \Theta_1 \psum p \Theta_2$.
	If $\Delta_i \sim_s \Theta_i$ for $i\! =\! 1,2$ and $p\in [0,1]$ then $\Delta_1 \psum p \Delta_2 \sim_s \Theta_1 \psum p \Theta_2$.
	%\blu{If $\Delta_i \sim_s \Theta_i$, $i = 1,2$, and $0 \leq p \leq 1$ then $\Delta_1 \psum p \Delta_2 \sim_s \Theta_1 \psum p \Theta_2$.}
	%	$\sim_s$ is linear.
\end{restatable}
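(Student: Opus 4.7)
The plan is to exhibit the convex closure of $\sim_s$ as a saturated bisimulation. Concretely, I would define
\[
\mathcal{R} = \{ (\Delta_1 \psum{p} \Delta_2,\ \Theta_1 \psum{p} \Theta_2) \mid \Delta_i \sim_s \Theta_i \text{ for } i=1,2,\ p \in [0,1] \},
\]
observe that taking $p=1$ yields $\sim_s \subseteq \mathcal{R}$, and then prove that $\mathcal{R}$ is a saturated bisimulation. Since $\sim_s$ is the largest such relation, this gives $\mathcal{R} \subseteq \sim_s$, which is exactly the statement.

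The type and mass conditions are almost immediate: if $\Delta_i \sim_s \Theta_i$ then both sides share the common type, which is inherited by $\psum{p}$-combinations; the mass map is linear, so $\mass{\Delta_1 \psum{p} \Delta_2} = p\mass{\Delta_1} + (1-p)\mass{\Delta_2} = p\mass{\Theta_1}+(1-p)\mass{\Theta_2} = \mass{\Theta_1 \psum{p} \Theta_2}$. The transfer property then proceeds as follows. Fix a context $B[\blank]_{\Sigma'}$. Since $B[\blank]$ has the shape $[\blank] \parallel Q$ and application just rewrites each configuration pointwise, context application distributes over convex combination:
\[
B[\Delta_1 \psum{p} \Delta_2] = B[\Delta_1] \psum{p} B[\Delta_2], \qquad B[\Theta_1 \psum{p} \Theta_2] = B[\Theta_1] \psum{p} B[\Theta_2].
\]
Suppose $B[\Delta_1] \psum{p} B[\Delta_2] \tauarrow_\delta \Phi$. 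Using the decomposability of the lifted transition relation, I would split $\Phi = \Phi_1 \psum{p} \Phi_2$ with $B[\Delta_i] \tauarrow_\delta \Phi_i$ for $i=1,2$. Applying $\Delta_i \sim_s \Theta_i$ with the context $B[\blank]$ yields $\Psi_i$ with $B[\Theta_i] \tauarrow_\delta \Psi_i$ and $\Phi_i \sim_s \Psi_i$; linearity of the lifted relation then gives $B[\Theta_1] \psum{p} B[\Theta_2] \tauarrow_\delta \Psi_1 \psum{p} \Psi_2$, and the resulting pair $(\Phi_1 \psum{p} \Phi_2, \Psi_1 \psum{p} \Psi_2)$ lies in $\mathcal{R}$ by construction. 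The converse direction is symmetric.

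The main obstacle is the decomposability step: given a single lifted transition from $B[\Delta_1]\psum{p}B[\Delta_2]$, I must legitimately split it into coherent component transitions. This is not part of $\mathcal{R}$'s definition but a property of the lifted semantics. I would argue it through the characterization of $\tauarrow_\delta$ as the smallest linear relation built from the transition relation on point configurations: any lifted transition factors through some expression of $\Delta$ as a convex combination of point distributions $\sum_i q_i \overline{\iconf_i}$ with $\overline{\iconf_i} \tauarrow_\delta \Phi_i^{(0)}$ and $\Phi = \sum_i q_i \Phi_i^{(0)}$. Refining this decomposition so that the $\iconf_i$ partition according to the supports of $B[\Delta_1]$ and $B[\Delta_2]$ gives the required splitting $\Phi = \Phi_1 \psum{p} \Phi_2$; the determinism assumption on tagged processes ensures that the resulting $\Phi_i$ are uniquely determined and coincide with $B[\Delta_i]$'s transition under $\delta$. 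Once this decomposability lemma is in hand, the rest of the proof is the routine pattern sketched above.
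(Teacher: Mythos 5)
Your proposal is correct and matches the paper's argument in substance: the paper derives linearity by showing that the convex hull $Cv$ is a $b$-compatible up-to function (Lemma~\ref{lem:cvcomp}), and the proof of that lemma is exactly your computation --- distribute the context over the convex combination, invoke left-decomposability of the lifted transition relation $\tauarrow_\delta$ (which the paper simply cites from the literature rather than re-deriving), transfer componentwise via the hypothesis $\Delta_i \sim_s \Theta_i$, and recombine by linearity. Your direct exhibition of the binary convex-combination relation as a saturated bisimulation is a repackaging of the same argument without the up-to machinery, and your sketch of the decomposability step (factoring the lifted transition through point distributions and using the determinism assumption to make the split coherent) is a sound way to fill in what the paper leaves to a citation.
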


The second property we address is purely quantum, and lifts to lqCCS the indistinguishability relations between quantum states of~\autoref{thm:qind}.

This property has been originally defined in~\cite{ceragioliQuantumBisimilarityBarbs2024}, and states that the same process, acting on two different but indistinguishable mixed quantum states, exhibits a behaviour that cannot be distinguished by any observer, therefore yielding bisimilar distributions.
%Formally,
%it requires $\sim_s$ to pair the distribution $\sum_{i} p_i \cdot \conf{\rho_i, P}$ and $\sum_{j} q_j \cdot \conf{\sigma_j, P}$ whenever 
%$\sum_i p_i \cdot \rho_i = \sum_j q_j \cdot \rho_j$.

%\begin{restatable}{theorem}{propertyA}\label{thm:propertyA}
%	For any $P, \rho, \sigma, p$, $\singleton{\conf{\rho, P}} \psum{p} \singleton{\conf{\sigma, P}} \sim_{cs} \singleton{\conf{\rho \psum{p} \sigma, P}}$
%	where $\rho \psum{p} \sigma$ is defined as the density operator $p\rho + (1-p)\sigma$.
%\end{restatable}
%
\begin{restatable}{theorem}{propertyA}\label{thm:propertyA}
	If $\sum_i p_i \cdot \rho_i = \sum_j q_j \cdot \rho_j$ then
	$\sum_{i} p_i \bullet \sconf{\rho_i, P} \sim_s \sum_{j} q_j \bullet \sconf{\sigma_j, P}.$
\end{restatable}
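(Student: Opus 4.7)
The plan is to exhibit a saturated bisimulation $\mathcal{R}$ containing the pair in the statement. The guiding intuition is that every rule in \autoref{semantics} that touches the quantum state does so only through operations that are linear on density operators: superoperators $\mathcal{E}^{\tilde{q}}$ in $\rulename{QOp}$, and the outcome probabilities $\tr(\mathcal{M}_m^{\tilde{q}}(\rho))$ together with the unnormalised post-states $\mathcal{M}_m^{\tilde{q}}(\rho)$ in $\rulename{QMeas}$. Hence two distributions of configurations should be related whenever, grouped by residual process, their weighted sums of density operators coincide, mirroring at the process level the identification of density operators in \autoref{thm:qind}.

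Concretely, I would define $\mathcal{R} \subseteq \dist{\confbot} \times \dist{\confbot}$ as the set of pairs $(\Delta,\Theta)$ with $(\Sigma,\Sigma')\vdash\Delta,\Theta$, $\mass{\Delta}=\mass{\Theta}$, and such that for each process $P'$ the sub-distributions $\Delta|_{P'} = \sum_i p_i \bullet \sconf{\rho_i, P'}$ and $\Theta|_{P'} = \sum_j q_j \bullet \sconf{\sigma_j, P'}$ satisfy $\sum_i p_i = \sum_j q_j$ and $\sum_i p_i \rho_i = \sum_j q_j \sigma_j$. The hypothesis then places the two distributions of the statement in $\mathcal{R}$, since the only non-empty class is $P' = P$. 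Closure under contexts is immediate because any $B[\blank]_{\Sigma'} = [\blank]_{\Sigma_P} \parallel Q$ only rewrites $P'$ into $P' \parallel Q$ and leaves quantum states and weights untouched, so $\Delta \mathcal{R} \Theta$ entails $B[\Delta]\mathcal{R} B[\Theta]$.

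For the bisimulation clause, I would use the deterministic-tagging Assumption together with the shared per-class process support of $\mathcal{R}$-pairs to argue that any scheduler move $\Delta \tauarrow_\delta \Delta'$ is matched by a uniquely determined $\Theta \tauarrow_\delta \Theta'$ whose support has the same residual processes, and the same $\bot$-mass, as $\Delta'$. It only remains to verify that the weighted-sum invariant is preserved by every rule. Rules that do not inspect the quantum state preserve it pointwise, while $\rulename{QOp}$ preserves it by linearity: $\sum_i p_i \mathcal{E}^{\tilde{q}}(\rho_i) = \mathcal{E}^{\tilde{q}}(\sum_i p_i \rho_i)$.

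The main obstacle is the measurement rule $\rulename{QMeas}$, where each individual post-configuration is renormalised by its outcome probability and so no pointwise matching between the supports of $\Delta'$ and $\Theta'$ exists; the invariant survives only class-by-class, after cancellation. Explicitly, setting $R = \sum_i p_i \rho_i$, the lifted measurement yields for each outcome $m$ a class of process $P''[\sfrac{m}{y}]$ with total probability $\sum_i p_i \tr(\mathcal{M}_m^{\tilde{q}}(\rho_i)) = \tr(\mathcal{M}_m^{\tilde{q}}(R))$ and weighted-sum operator $\sum_i p_i \mathcal{M}_m^{\tilde{q}}(\rho_i) = \mathcal{M}_m^{\tilde{q}}(R)$, once the per-configuration factors $1/\tr(\mathcal{M}_m^{\tilde{q}}(\rho_i))$ cancel against the matching weights introduced by the rule. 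Both quantities depend only on $R$, so the same values arise on the $\Theta$ side, establishing $\Delta' \mathcal{R} \Theta'$ and concluding that $\mathcal{R}$ is a saturated bisimulation.
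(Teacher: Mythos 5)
Your proof is correct, and the two key computations — linearity of $\mathcal{E}^{\tilde{q}}$ for $\rulename{QOp}$ and the cancellation of the per-outcome normalisation factors for $\rulename{QMeas}$ — are exactly the ones the paper's proof turns on. The architecture, however, is genuinely different. The paper proves only the binary instance $\sconf{\rho \psum{p} \sigma, P} \sim_s \sconf{\rho, P} \psum{p} \sconf{\sigma, P}$, using the small relation $\{(\bot,\bot)\}\cup\{(\sconf{\rho\psum{p}\sigma,P},\ \sconf{\rho,P}\psum{p}\sconf{\sigma,P})\}$ shown to be a bisimulation \emph{up to convex hull}, and then assembles the general statement from transitivity and the separately established linearity of $\sim_s$ (\autoref{thm:linearity}), so it leans on the compatibility machinery behind \autoref{thm:uptovalid}. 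You instead take the ``saturated closure'' of that idea as your relation from the start: all pairs whose per-residual-process aggregated density operators agree. This buys you a direct proof with no up-to technique, no appeal to linearity of $\sim_s$, and no transitivity step, at the price of a bulkier invariant to preserve. Two points you gloss over but should make explicit to match the rigour of the paper's lemmas: first, distinct source classes $P_1\neq P_2$ may collapse onto the same residual process after a step, so the invariant for the target class is obtained by summing the per-source-class identities, not by a one-to-one correspondence of classes; second, your claims that the matching transition of $\Theta$ exists, has the same residual processes, and the same $\bot$-mass all rest on the fact that transition availability under a given scheduler, the residual processes, and the induced superoperators depend only on the process and not on the attached quantum state — precisely the content of the paper's \autoref{thm:alwaysbot} and \autoref{thm:alwaysmove}, which you would need to state and prove by induction on the semantics just as the paper does.
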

\begin{proofsketch}
	We prove by induction $\Delta = \sconf{\rho , P} \psum{p} \sconf{\sigma, P} \sim_s \sconf{\rho \psum{p} \sigma, P}= \Theta$, and the theorem follows by transitivity.
	First, we show that $\Delta$ replicates the moves of $\Theta$ by the linearity of superoperators, i.e. $\E(\rho) \psum{p} \E(\sigma) = \E(\rho \psum{p} \sigma)$, and similarly for measurements.
	Then, we show that $\Theta$ simulates $\Delta$. 
	Here, the presence of schedulers is key: it forbids 
	$\Delta$ from combining different non-deterministic choices to perform a move that would not be available to $\Theta$ (see~\autoref{ex:broken-nondet}). 
%\blue{	We prove that $\Delta = \sconf{\rho , P} \psum{p} \sconf{\sigma, P} \sim_s \sconf{\rho \psum{p} \sigma, P}= \Theta$, from which the theorem follows by transitivity.
%	We can show by induction that $\Delta$ replicates the behaviour of $\Theta$, exploiting the fact that superoperators are linear, i.e. $\E(\rho) \psum{p} \E(\sigma) = \E(\rho \psum{p} \sigma)$, and similarly for measurements.
%	Then we show that $\Theta$ simulates $\Delta$ for the same reasons. In this direction the presence of schedulers is key.
%%	 \red{because it guarantees us that all the choices available to $\Delta$ are also available to $\Theta$.}\anote{questa frase voledo si può tagliare} 
%	 Without schedulers, $\Theta$ would be less powerful than $\Delta$, as the latter could combine different non-deterministic choices (
%%	 see also 
%\red{as in~\autoref{ex:broken-nondet}}). 
%	}
\end{proofsketch}
%Together with linearity, the previous result allow us to show that distributions associating the same weighted sum of density operators to the same processes are bisimilar.
%\begin{corollary}\anote{gab: per me o spieghiamo a cosa corrisponde intuitivamente avere la stessa weighted sum, o taglierei paragrafo e corollario.}
%	If $\sum_{\rho} \Delta\big(\conf{\rho, P}\big) \rho = \sum_{\rho} \Theta\big(\conf{\rho, P}\big) \rho$ for any $P$ then $\Delta \sim_s \Theta$.
%\end{corollary}

%Note for example that $\sconf{\ketbra{0}, c!q}\psum{{1}/{2}}\sconf{\ketbra{1}, c!q}$ and $\sconf{\ketbra{+}, c!q}\psum{{1}/{2}}\sconf{\ketbra{-}, c!q}$ of~\ref{ex:broken-nondet} are both bisimilar to $\sconf{\frac{1}{2}{I}, c!q}$.
%More in general, the equivalence classes represented by density operators are lifted to lqCCS distributions, yielding the equivalence $\rel \subseteq\ \sim_{cs}$
%relating $\Delta$ and $\Theta$ deterministic whenever
%$\sum_{\rho} \Delta\big(\conf{\rho, P, R}\big) \rho = \sum_{\rho} \Theta\big(\conf{\rho, P, R}\big) \rho$ for all $P, R$.
%\red{Note also that, thanks to the linearity of $\sim_{cs}$, the property above is not limited to syntactically identical processes.}

This result directly derives from the use of schedulers, and it is not common in quantum versions of CCS: it holds only for specific distributions in~\cite{ceragioliQuantumBisimilarityBarbs2024} and in~\cite{dengbisimulations2018}.
To see the role of schedulers, consider the following example about indistinguishable qubit sources, formalizing the intuition of~\autoref{fig:ex-zopm}.

\begin{example}\label{ex:broken-nondet}
	Consider a pair of non-biased random qubit sources, the first sending a qubit in state $\kz$ or
	$\ko$, the second in state $\kpl$ or $\km$.
	Quantum theory prescribes that these two sources cannot be
	distinguished by any observer, as the received qubit behaves the same~\cite{nielsenquantum2010}.
%	 (see~\cite{nielsenquantum2010}).
	Indeed, the
	(mixed) states of the qubits sent by both
%	the two 
	sources are represented by the 
%	same 
	density operator $\frac{1}{2}\I$.
	Fittingly, the lqCCS encodings of these sources
	 are bisimilar by \autoref{thm:propertyA}: $\Delta_{01} \sim_s \Delta_{\pm}$, for
	\begin{align*}
		\Delta_{01}  &= \singleton{\conf{\dket{0}, t_0\tags c!q}} \psum{1/2} \singleton{\conf{\dket{1}, t_0\tags c!q}} \sim_s 
		\singleton{\lrconf{\frac{1}{2}\I, t_0\tags c!q}}\\
		\Delta_{\pm} &= \singleton{\conf{\dket{+}, t_0\tags c!q}} \psum{1/2} \singleton{\conf{\dket{-}, t_0\tags c!q}} \sim_s
				\singleton{\lrconf{\frac{1}{2}\I, t_0\tags c!q}}
	\end{align*}
	By contrast, assume the \emph{unscheduled semantics} that ignores tags,
	defined as
		$\xlongrightarrow{\mu}_u\ = \lift_{\confbot}(\bigcup_{s} bot(\xlongrightarrow{\mu}_s)) : 
		\dist{\confbot} \times \actset \times \dist{\confbot}$, 
	and let the \emph{unscheduled bisimilarity} $\sim_{us}$ be the saturated bisimilarity for this transition relation.
	The unscheduled bisimilarity erroneously discriminates the two sources.
	To see that $\Delta_{01} \not\sim_{us} \Delta_{\pm}$, take $B[\blank] = [\blank] \parallel t_1\tags c?x.(P+Q)$ where
	\begin{gather*}
		P =  t_2\tags \meas[{01}]{x}{y} .R\ \text{, and }\
		Q =  t_3\tags \meas[{\pm i}]{x}{y} .R\text{, with}\\
		R =  (\ite{y=0}{t_4\tags \tau}{\nil}) \parallel \nil_{x}
	\end{gather*}
	Notice that $P$ and $Q$ perform different measurements and
	make the outcome observable by enabling a $\tau$-transition only when $y = 0$.

	Let $\confel_{\psi}$ be $\conf{\dket{\psi}, P+Q}[\sfrac{q}{x}]$, then $B[\Delta_{01}]$ reduces to $\Delta_{01}' = \singleton{\confel_0} \psum{1/2} \singleton{\confel_1}$, and $B[\Delta_{\pm}]$ can only reduce to $\Delta_{\pm}' = \singleton{\confel_+} \psum{1/2} \singleton{\confel_-}$ to match this move.
	The following moves are available for $\confel_0$ and $\confel_1$
	\begin{align*}
		{\confel_0} & \xlongrightarrow{\tau}_{\singleton{t_2}}     \overline{\conf{\ketbra{0}, R[\sfrac{0}{y}]}} = \Delta_0,           \\
		{\confel_1} & \xlongrightarrow{\tau}_{\singleton{t_3}}     \overline{\conf{\ketbra{i}, R[\sfrac{0}{y}]}} \psum{1/2}
		\overline{\conf{\ketbra{-i}, R[\sfrac{1}{y}]}} = \Delta_1
		%		{\confel_{+/-}} & \xlongrightarrow{\tau}_{\singleton{t_2}} \overline{\conf{\ketbra{0}, R[\sfrac{0}{y}]}} \psum{1/2}
		%		\overline{\conf{\ketbra{1}, R[\sfrac{1}{y}]}},                                                                          \\
		%		{\confel_{+/-}} & \xlongrightarrow{\tau}_{\singleton{t_3}} \overline{\conf{\ketbra{i}, R[\sfrac{0}{y}]}} \psum{1/2}
		%		\overline{\conf{\ketbra{-i}, R[\sfrac{1}{y}]}}.
	\end{align*}
	Consider now the convex combination of the two distributions above, $\Delta_{01}'' = \Delta_0 \psum{1/2} \Delta_1$,
	and notice that ${\confel_0} \xlongrightarrow{\tau}_{u} \Delta_0$ and ${\confel_1} \xlongrightarrow{\tau}_{u} \Delta_1$.
	In the unscheduled semantics, $\Delta_{01}'$ can mix the two choices and reduce $\confel_0$ with $t_2$ and $\confel_1$ with $t_3$, formally,
		$\Delta_{01}' \xlongrightarrow{\tau}_{u} \Delta_{01}''$.
%	\[
%		\Delta_{01}'' = \overline{\conf{\ketbra{0}, R[\sfrac{0}{y}]}} \psum{1/2} (\overline{\conf{\ketbra{i}, R[\sfrac{0}{y}]}} \psum{1/2} \overline{\conf{\ketbra{-i}, R[\sfrac{1}{y}]}}).
%	\]
%	It is easy to see that $\Delta_{01}'$ does not reduce to $\Delta_{01}''$ neither using the scheduler $t_2$, nor $t_3$.
%	However, 
	
%
%	At this point, since we ignore schedulers, $\Delta_{01}'$ can mix the two choices and reduce $\confel_0$ with $t_2$ and $\confel_1$ with $t_3$, formally,  obtaining
%	\[
%		\Delta_{01}'' = \overline{\conf{\ketbra{0}, R[\sfrac{0}{y}]}} \psum{1/2} (\overline{\conf{\ketbra{i}, R[\sfrac{0}{y}]}} \psum{1/2} \overline{\conf{\ketbra{-i}, R[\sfrac{1}{y}]}}).
%	\]
	Finally, note that $\Delta_{01}'' \longrightarrow_{\overline{t_4}} \Theta_{01}$ with $\mass{\Theta_{01}} = 3/4$.

	Consider now $\Delta_{\pm}'$, all the available moves for $\confel_{+}$ and $\confel_{-}$ follow
	\begin{align*}
		{\confel_{+}} / \confel_{-} & \xlongrightarrow{\tau}_{\singleton{t_2}} \overline{\conf{\ketbra{0}, R[\sfrac{0}{y}]}} \psum{1/2}
		\overline{\conf{\ketbra{1}, R[\sfrac{1}{y}]}},                                                                                  \\
		{\confel_{+}} / \confel_{-} & \xlongrightarrow{\tau}_{\singleton{t_3}} \overline{\conf{\ketbra{i}, R[\sfrac{0}{y}]}} \psum{1/2}
		\overline{\conf{\ketbra{-i}, R[\sfrac{1}{y}]}}.
	\end{align*}
	It is easy to check that $\Delta_{\pm}'$ cannot replicate the behaviour of $\Delta_{01}'$: for any
	choice of $t_2$ and $t_3$ it will reduce to a distribution $\Theta_{\pm}$ with $\mass{\Theta_{\pm}} = 1/2$.
	
	Schedulers solve this issue by forbidding to combine moves labelled by different choices:
	$\Delta_{01}'$ cannot mix the two choices in our scheduled semantics, and this allows us to prove~\autoref{thm:propertyA}.
	
\end{example}

\autoref{ex:broken-nondet} is paradigmatic, where different mixtures of quantum states are discriminated because the moves are chosen
according to the value of some received qubit, which in theory should be unknown.
Our schedulers do not allow processes to replicate this behaviour, forcing distributions to make a reasonable choice based on classically determined tags.

\subsection{Labelled Bisimilarity}\label{sec:siml}
In this section we give an equivalent labelled characterization of our saturated bisimilarity, thus making explicit the observable properties of lqCCS processes.
%Roughly, such observable properties are: the visible portion of the underlying quantum state (i.e. of the qubits that are not owned by the process), and their labelled reductions.
%Moreover, to match the saturated version, the labelled bisimilarity must be closed for any superoperator application over the visible quantum state.

We need some auxiliary definitions.
First, given a distribution $\Delta$ of lqCCS processes, we let $qs(\Delta)$ be its quantum state $\sum_{\langle{\rho, P}\rangle \in \support{\Delta}} \Delta(\conf{\rho, P}) \cdot \rho$.
Notice that $qs(\Delta)$ is a
%partial 
density operator in $\sdensity{\hilb}$, with $\tr(qs(\Delta)) = \mass{\Delta}$.

As a second ingredient, we allow applying superoperators to configurations and distributions.
%Note that such a superoperator represents a selective measurement, where the experiment continues only if the result is the expected one.
%Indeed, applying $\mathcal{E}$ to a density operator may return a partial density operator. 
%A common way to apply a trace non-increasing superoperator is to normalize the outcome, namely, given 
%Formally, given $\E \in \sdensity{\hilb}$, we let 
%$\nE : \density{\hilbert} \to \density{\hilbert}$ 
%to be defined as $\nE(\rho) = \E(\rho) / tr(\E(\rho))$.
%Roughly, the normalized version $\nE$ only consider the resulting state and not its probability.
%In the following, we apply functions over density operators to configurations, letting $f(\conf{\rho, P}) = \conf{f(\rho), P}$.
%We also write $f(\Delta)$ for the result of applying $f$ to the elements of the support of $\Delta$. 
%
%Formally, given $\E \in \soset{\hilbert}$, we let
%$\nE : \confbot \to \confbot$
%to be defined as $\nE(\conf{\rho, P}) = \nE(\bot) = \bot$ if $\tr(\E(\rho)) = 0$, and $\nE(\conf{\rho, P)}) = \conf{\frac{\E(\rho)}{\tr(\E(\rho))}, P}$, otherwise.
%We also write $\nE(\Delta)$ for the result of applying $\nE$ to the elements of the support of $\Delta$.
%Roughly, the normalized version $\nE$ only consider the resulting state and not its probability.
Given $\E \in \soset{\hilbert}$, we define
$\nE : \dist{\confset} \to \dist{\confset}$
as
$$\nE\left(\sum_i \distelem{p_i}{\conf{\rho_i, P_i}}\right) = \sum_i \distelem{\frac{p_i\cdot tr(\E(\rho_i))}{p_{\E}}}\overline{\left\langle {\frac{\E(\rho_i)}{\tr(\E(\rho_i))}, P_i}\right\rangle}$$
where $p_{\E} = \sum_i p_i\cdot tr(\E(\rho_i))$.
%We also write $\nE(\Delta)$ for the result of applying $\nE$ to the elements of the support of $\Delta$.
Roughly, if $\E$ is trace-preserving, $\nE$ just applies $\E$ in the configurations.
If $\E$ is trace non-increasing, $\nE$ applies $\E$ to each $\rho_i$ and normalizes the resulting matrix; the new weight of the $i$-th configuration is the conditioned probability of being in $i$ knowing that $\E$ happened.
We extend $\nE$ to $\dist{\confbot}$ imposing that $\nE(\Delta \psum{p} \bot) = \nE(\Delta) \psum{p} \bot$ if $\mass{\Delta} = 1$.
Given $\Sigma, \Sigma' \vdash \Delta$, if $\E$ is defined on $\tilde{q}$, which are only some of the qubits in $\Sigma$, we write $\nE(\Delta)$ for $\nE^{\tilde{q}}(\Delta)$ (recall that $\nE^{\tilde{q}}$ extends $\nE$ by tensoring it with the identity).

We can now define our labelled bisimilarity.
In addition to the usual conditions about labelled transitions, labelled bisimulations are required to satisfy two additional conditions:
paired distributions must share the same \emph{environment}, i.e., the portion of the quantum state that is immediately visible to the context; and the relation
must be closed for the application of normalized superoperators over the qubits of the environment.
This superoperator-closure is needed for proving that labelled bisimilarity has the same observing power of saturated bisimilarity,
as a context $B[\blank] = [\blank] \parallel Q$ can read and modify the qubits of the environment.
\begin{definition}[Labelled Bisimilarity]\label{def:labBisim}
	A relation $\rel \subseteq \dist{\confbot} \times \dist{\confbot}$ is a
	\emph{labelled bisimulation} if $\Delta\,\rel\,\Theta$ implies
	$(\Sigma, \Sigma') \vdash \Delta$ and $(\Sigma, \Sigma') \vdash \Theta$ for some $\Sigma, \Sigma'$, and it holds that
	\begin{itemize}
		\item $\tr_{\Sigma'}(qs(\Delta)) = \tr_{\Sigma'}(qs(\Theta))$;
		      %		\item $\nE(\Delta)\ \rel\ \nE(\Theta)$ for any trace preserving superoperator $\mathcal{E} \in \tsoset{\hilbert_{\Sigma \setminus \Sigma'}}$;
		\item $\nE(\Delta)\ \rel\ \nE(\Theta)$ for any superoperator $\mathcal{E} \in \soset{\hilbert_{\Sigma \setminus \Sigma'}}$;
		      %		with Kraus decomposition $\{ E \}$ for some $E$;
		      %			\red{
		      %				\item whenever $\Delta = \Delta_M \psum{p} \Delta_{\overline{M}}$ is an $M$-decomposition, there exists an $M$-decomposition
		      %				$\Theta = \Theta_M \psum{p} \Theta_{\overline{M}}$ such that $\Delta_M\,\rel\,\Theta_M$ and $\Delta_{\overline{M}}\,\rel\,\Theta_{\overline{M}}$. -- \textbf{vedere se funziona senza.}}
		\item whenever $\Delta \xlongrightarrow{\mu}_\delta \Delta'$, there exists $\Theta'$
		      such that $\Theta \xlongrightarrow{\mu}_\delta \Theta'$ and $\Delta'\;\rel\;\Theta'$;
		\item whenever $\Theta \xlongrightarrow{\mu}_\delta \Theta'$, there exists $\Delta'$
		      such that $\Delta \xlongrightarrow{\mu}_\delta \Delta'$ and $\Delta'\;\rel\;\Theta'$.
	\end{itemize}

	Let \emph{labelled bisimilarity}, denoted $\sim_l$, be the largest labelled bisimulation.
\end{definition}

Note that superoperators acting over qubits that are not in the process may affect some of the qubits of the process too, due to entanglement.
\begin{example}
	Let $\confel = \conf{ \ketbra{\Phi^+}, \meas[01]{q_2}{x}.c!x.\nil_{q_2}}$ for $(\{ q_1, q_2 \}, \{ q_2 \}) \vdash \confel$.
	Consider the trace non-increasing superoperator $\mathcal{E}^{q_1} = \{\ketbra{0}\}$ projecting the state of the first qubit (in the environment) to the value $\ketbra{0}$.
	The result of applying $\nE^{q_1}$ to $\confel$ is $\conf{ \ketbra{00}, \meas[01]{q_2}{x}.c!x.\nil_{q_2}}$, where also the second qubit held by the process is updated.
\end{example}

Since the additional requirements of $\sim_l$ are inspired by discriminating contexts, it is to be expected that $\sim_s \ \subseteq \ \sim_l$.
Indeed, the converse is also true. %we can conclude the following.
\begin{restatable}{theorem}{corrcompl}\label{thm:corrcompl}
	For any $\Delta$ and $\Theta$, $\Delta \sim_l \Theta$ if and only if $\Delta \sim_s \Theta$.
\end{restatable}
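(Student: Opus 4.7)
The claim asserts that the saturated and labelled formulations coincide, so the natural plan is to prove the two inclusions separately, with the standard observation that one direction is essentially routine while the other requires a congruence argument.

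For the direction $\sim_s \subseteq \sim_l$, the strategy is to check that $\sim_s$ satisfies each of the four clauses of \autoref{def:labBisim}. The $\tau$-transition clauses are immediate: taking the trivial context $B[\blank] = [\blank]$ in the saturated definition forces $\sim_s$ to match internal moves. The mass equality $|\Delta|=|\Theta|$ comes for free from the definition of $\sim_s$. For the environment-trace condition $\tr_{\Sigma'}(qs(\Delta)) = \tr_{\Sigma'}(qs(\Theta))$, I would build, for any contradicting pair of reduced states, a discriminating context that acts on the environment qubits by a measurement whose outcome probabilities distinguish the two mixed states (this is possible by \autoref{thm:qind} together with the standard tomographic argument that unequal density operators admit a separating measurement). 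The hard-looking clause is the closure under $\nE$ with $\E \in \soset{\hilbert_{\Sigma\setminus\Sigma'}}$: here I would realise $\E$ inside a context via a Stinespring-style dilation, using an auxiliary qubit register owned by the context together with unitaries, a projective measurement, and a classical channel that broadcasts whether the trace-decreasing branch occurred; the normalization built into $\nE$ then matches exactly the conditional post-state that the context observes.

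For the direction $\sim_l \subseteq \sim_s$, the plan is the usual one for labelled-versus-contextual equivalences: show that $\sim_l$ is a congruence for parallel composition (and for restriction), so that it is automatically closed under every context $B[\blank]_{\Sigma'} = [\blank]_{\Sigma'} \parallel P$, after which the $\tau$-matching clause of $\sim_l$ specialized to $B[\Delta]$ and $B[\Theta]$ immediately yields the saturated bisimulation clause. Concretely, I would exhibit the candidate relation
\[
\rel = \{(\Delta \parallel P,\ \Theta \parallel P) \mid \Delta \sim_l \Theta,\ P \text{ well-typed process compatible with }\Delta,\Theta\}
\]
(extended by restrictions as needed) and verify it is a labelled bisimulation. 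The trace-reduction clause is preserved because composing with a process $P$ only adds $P$'s qubits to the visible side, and partial traces compose; the superoperator closure clause transports directly since a superoperator on the environment of $\Delta \parallel P$ is a superoperator on a subset of the environment of $\Delta$ (plus $P$-owned qubits, handled by tensoring with the identity).

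The genuine obstacle is the transition-matching clause for $\rel$. A move of $\Delta \parallel P$ has three shapes: a move of $\Delta$ via $\rulename{ParL}$, a move of $P$ via $\rulename{ParR}$, or a synchronization via $\rulename{SynchL/R}$. The first case is covered directly by $\sim_l$ on $\Delta,\Theta$. The second case requires that $P$'s move is well-defined against both $\Delta$ and $\Theta$; here the typing-quasi-preservation theorem (\autoref{thm:quasipreservation}) and the environment-trace equality of $\sim_l$ together ensure that $P$'s guards, measurements, and quantum operations on environment qubits produce the same classical/probabilistic branchings on both sides, which is precisely what the superoperator-closure clause of $\sim_l$ buys us (measurements factor as families of trace non-increasing superoperators). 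The synchronization case is the subtlest: an input $c?v$ of $P$ must be matched against outputs of $\Delta$ and $\Theta$; since $\sim_l$ matches labelled $c!v$ moves symmetrically, one can couple them via the same scheduler distribution $\delta$ used by the synchronization. The assumption that tagging is deterministic is crucial here, since it lets us transport the scheduler used on the $\Delta$-side unchanged to the $\Theta$-side, preserving the mass accounting into $\confbot$. Once these three cases are verified, closure under restriction is routine, $\rel$ is shown to be a labelled bisimulation, hence $\rel \subseteq \sim_l$, which yields that $\sim_l$ is a saturated bisimulation, completing the remaining inclusion.
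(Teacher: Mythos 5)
Your overall architecture (two inclusions, dual-context arguments one way, a congruence/closure argument the other way) is the right one, and several of your key observations match the paper's proof: the measurement-based separating context for unequal environment traces, the realisation of trace non-increasing superoperators by a context (the paper builds a measurement directly from the Kraus operators plus a completion term rather than a Stinespring dilation, but the effect is the same), and the role of deterministic tagging in transporting schedulers. However, there are two genuine gaps.

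First, in the direction $\sim_s \subseteq \sim_l$ you only discharge the $\tau$-matching clauses, but the last two clauses of \autoref{def:labBisim} quantify over \emph{all} actions $\mu$, including $c!v$ and $c?v$. Saturated bisimulation only constrains $\tau$-moves of $B[\Delta]$, so matching a visible action of $\Delta$ requires constructing a context performing the dual action and forcing a synchronisation: for an input $c?v$ a context $[\blank] \parallel t \tags c!v$, and for an output $c!v$ a context that receives and then \emph{tests} the received value (so that $\Theta$ cannot cheat by outputting a different qubit name). After the synchronisation one is left with $\Delta' \parallel R \sim_s \Theta' \parallel R$ for a residual $R$, and one must still cancel $R$ to conclude $\Delta' \sim_s \Theta'$; this cancellation (the paper's \autoref{thm:ignoreSending}, proved via tag-renaming lemmas) is not free and is entirely absent from your argument. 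Second, in the direction $\sim_l \subseteq \sim_s$ your candidate relation $\rel = \{(\Delta \parallel P, \Theta \parallel P)\}$ is not closed under the transitions it must match: when $P$ performs a measurement on environment qubits, $\Delta \parallel P$ evolves to a convex combination $\sum_i p_i \cdot (\nE_i(\Delta) \parallel P_i)$ with \emph{distinct} continuations $P_i$ and distinct superoperators $\E_i$, which is not of the form $\Delta'' \parallel P'$ for a single $P'$. You must close $\rel$ under convex combinations and justify that this closure is sound, which is exactly why the paper works with a saturated bisimulation up-to convex hull and contexts ($Cv \circ B$), together with the compatibility lemmas and the left-decomposability of the lifted transition relation. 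Without the convex-hull machinery (or an explicit proof that $\sim_l$ is linear), the verification of your candidate relation breaks at precisely the measurement case.
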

\begin{proofsketch}
	First, we prove that $\sim_s$ is a labelled bisimulation: we assume $\Delta \sim_s \Theta$ and show that if the conditions of~\autoref{def:labBisim} do not hold, then there exists a distinguishing context $B[\blank]$ contradicting our assumption.
	If the visible quantum states differ, then $B[\blank]$ just performs a measurement.
	If $\nE(\Delta) \not\sim \nE(\Theta)$, then $B[\blank]$ applies a trace non-increasing superoperator on visible qubits.
	Finally, if $\Delta$ performs a labelled transition (e.g. an input action) that cannot be matched by $\Theta$, then $B[\blank]$ perform the ``dual'' transition (e.g. an output action).

	Then, we prove that $\sim_l$ is a saturated bisimulation.
	Given $\Delta \sim_l \Theta$, and a generic context $B[\blank]$, we consider all the three possible transitions for $B[\Delta]$:
	i) $\Delta$ moves with a $\tau$ action, thus also $\Theta$ does;
	ii) $B[\blank]$ and $\Delta$ synchronize, meaning that $\Delta$ moves with a visible action, and so does $\Theta$;
	iii) $B[\blank]$ moves, possibly measuring or modifying the visible qubits, but thanks to the first two bullet points of \autoref{def:labBisim}, $B[\Delta]$ and $B[\Theta]$ express the same behaviour.
	%	\blue{We prove that $\sim_s$ is a labelled bisimulation. 
	%	Given $\Delta \sim_s \Theta$, i) they must have identical visible quantum states, since visible qubits can be inspected by a context $B[\blank]$; 
	%	ii) $\nE(\Delta) \sim \nE(\Theta)$, since there exists a context $B[\blank]$ which applies trace non-increasing superoperators on visible qubits; 
	%	iii) If $\Delta$ performs a labelled transition (eg. an input action), there exists a context $B[\blank]$ which perform the "dual" transition (eg. an output action).
	%	
	%	Then we prove that $\sim_l$ is a saturated bisimulation.
	%	Given $\Delta \sim_l \Theta$, we prove by induction that there are three possible transition of a generic $B[\Delta]$:
	%	i) $\Delta$ moves with a $\tau$ action, thus also $\Theta$ does;
	%	ii) $B[\blank]$ and $\Delta$ synchronize, meaning that $\Delta$ moves with a visible action, and so does $\Theta$;
	%	iii) $B[\blank]$ moves, possibly measuring or modifying the visible qubits, but thanks to the first two bullet point of $\autoref{def:labBisim}$, $B[\Delta]$ and $B[\Theta]$ express the same behaviour.} 
\end{proofsketch}

We conclude with two real-world examples, quantum teleportation~\cite{qteleportation} and superdense coding~\cite{SDC}.
\begin{example}
	The objective of quantum teleportation is to allow Alice to send quantum information to Bob %at arbitrary distance
	without a quantum channel.
	%physically exchanging the medium that holds the information.
	%This does not result in faster-than-light communication, as classical communication is required for Bob to recover the quantum state.
	%Such process does not result in faster-than-light communication because a classical channel is required to send classical information between the parties to perform the necessary correction on the receiving medium.
	Alice and Bob must have each one of the qubits of an entangled pair $\ket{\Phi^+}$.
	The protocol works as follows: Alice performs a fixed set of unitaries to the qubit to transfer and to her part of the entangled pair;
	then, she measures the qubits and sends the classical outcome to Bob, which applies different unitaries to his own qubit according to the received information.
	In the end, the qubit of Bob will be in the state of Alice's one, and the entangled pair is discarded.
	%and the entangled pair will decay in a maximally mixed state.
	%Note that no assumption is made about the security of the classical channel nor about the authentication of the parties involved.
	%Note that Alice is not required to know the state of the qubit to send.

	Consider the following encoding of the protocol $\proc{Tel} = (\proc{A} \parallel \proc{B}) \setminus c$, where we assume that Alice ($\proc{A}$) and Bob ($\proc{B}$) already share an entangled pair ($q_1,q_2$) (we write $(n)_2$ to stress that $n$ is in binary representation)
	\begin{align*}
		\proc{A}    & = t \tags \unitary{CNOT}{q_0,q_1}.t \tags \unitary{H}{q_0}.t \tags \tmeas{q_0,q_1}{x}.(t \tags c!x \parallel \nil_{q_0,q_1}) \\
		\proc{B}    & = t' \tags c?y.\ \ite{y = (00)_2}{t' \tags \I(q_2).t' \tags \mathit{out}!q_2                                                       \\
		            & \quad\;\;\,}
		{( \ite{y = (01)_2}{
		t' \tags \unitary{X}{q_2}.t' \tags  \mathit{out}!q_2                                                                                             \\
		            & \quad\;\;\,}
		{( \ite{y = (10)_2}{t' \tags \unitary{Z}{q_2}.t' \tags \mathit{out}!q_2                                                                          \\
		            & \quad\;\;\,}
		{t' \tags \unitary{ZX}{q_2}.t' \tags \mathit{out}!q_2})}) }                                                                                      \\
		%		\proc{Tel}  & = (\proc{A} \parallel \proc{B}) \setminus m
		%		\\
		%		\proc{Spec} & = t\tags\unitary{SWAP}{q_0,q_2}.t\tags\tau.t\tags\tau.t'\tags\tau.t'\tags\tau.(t'\tags\text{out}!q_2 \parallel \nil_{q_0,q_1})
		%		\\
		%		\proc{Spec} & = t\tags\unitary{SWAP}{q_0,q_2}.
		%			t\tags\tau.
		%			t\tags\tau.
		%			(
		%			t\tags a!0
		%			\parallel
		%			(t'\tags a?x.
		%			t'\tags\tau.
		%			t'\tags\text{out}!q_2.\nil_{q_0,q_1})
		%			)\setminus a
		%			\\
		\proc{Spec} & = t\tags\unitary{SWAP}{q_0,q_2}.t\tags\tau.t\tags\tau.(t,t')\tags\tau.t'\tags\tau.(t'\tags\text{out}!q_2 \parallel \nil_{q_0,q_1})
	\end{align*}
	where $\mathbb{M}_{23}^{01}$ is the two-qubit measurement in the basis $\{ \ket{00}, \ket{01}, \ket{10}, \ket{11}\}$.
	We let $\Delta = \singleton{\conf{\dket{\psi}\otimes\dket{\Phi^+}, \proc{Tel}}}$, with $\Theta = \singleton{\conf{\dket{\psi}\otimes\dket{\Phi^+}, \proc{Spec}}}$ its specification, for $\kp = \alpha\kz + \beta\ko$, and sketch the proof for $\Delta \sim_{l} \Theta$ below.
	Note that $\proc{Spec}$ simply states that, after the tagged operations, the state of qubits is swapped and the correct state is communicated over the expected channel.

	Since there is no qubit in $\dket{\psi\Phi^+}$ apart from the ones in $\Sigma_{\proc{Tel}}$, the environment of the two distributions is trivially the same, and no superoperator is to be considered.
	The evolution of $\Delta$ and $\Theta$ is exactly the same until a send over the unrestricted channel $out$ is reached.
	The relevant steps are
	\begin{align*}
		%	& \sconf{\dket{\psi\Phi^+}, (\proc{A} \parallel \proc{B}) \setminus m, \nil} \text{ where } \psi = \alpha\kz + \beta\ko                                                                                  \\
		 & \Delta \xlongrightarrow{\tau}_t^3
		\sum\nolimits_{n = 0}^3 \distelem{\frac{1}{4}}{\sconf{\dket{n} \otimes \dket{\psi_n}, (t\tags m!n \parallel \nil_{q_0,q_1} \parallel \proc{B}) \setminus c}} \\
		 & \quad\xlongrightarrow{\tau}_{(t,t')}\xlongrightarrow{\tau}_{t'} \ \Delta' =
		\sum\nolimits_{n = 0}^3 \distelem{\frac{1}{4}}{\sconf{\dket{n} \otimes \dket{\psi}, (\nil_{q_0,q_1} \parallel t'\tags \mathit{out}!q_2) \setminus c}}        \\
		 & \Theta  \xlongrightarrow{\tau}_t^3  \xlongrightarrow{\tau}_{(t,t')}\xlongrightarrow{\tau}_{t'}  \
		\Theta' = \sconf{\dket{\Phi^+}\otimes\dket{\psi}, t'\tags\mathit{out}!q_2 \parallel \nil_{q_0,q_1}}
		%   & \sconf{\dket{\psi\Phi^+}, \proc{Spec}, \nil}                                                                                                                                                \\
		%   & \quad\longsquiggly_\diamond \sconf{\dket{\Phi^+\psi}, \tau.\tau.\tau.\tau.(\text{out}!q_2 \parallel \nil_{q_0,q_1}), \nil} \\
		%   & \quad\longsquiggly_\diamond^4 \Theta' = \sconf{\dket{\Phi^+\psi}, \text{out}!q_2 \parallel \nil_{q_0,q_1}, R} \\
	\end{align*}
	where $\ket{\psi_{0}} = \kp$, $\ket{\psi_{1}} = \beta\kz + \alpha\ko$, $\ket{\psi_{2}} = \alpha\kz - \beta\ko$, $\ket{\psi_{3}} = \beta\kz - \alpha\ko$,
	and where, abusing notation, we use $\ket{0} = \ket{00}, \ket{1} = \ket{01}, \ket{2} = \ket{10}$ and $\ket{3} = \ket{11}$ when speaking of pairs of qubits.
	The last move is then a send on the channel $out$ for both distributions
	\begin{align*}
		\Delta' \xlongrightarrow{out!q_2}_{t'}
		 & \ \Delta'' = \sum\nolimits_{n = 0}^3 \distelem{\frac{1}{4}}{\sconf{\dket{n} \otimes \dket{\psi}, (\nil_{q_0,q_1})}} \\
		\Theta' \xlongrightarrow{out!q_2}_{t'}
		 & \ \Theta'' = \sconf{\dket{\Phi^+}\otimes\dket{\psi}, \nil_{q_0,q_1}}
	\end{align*}
	The bisimilarity then is easily checked as $\Delta''$ and $\Theta''$ are in deadlock, and $\tr_{q_0,q_1}(qs(\Delta'')) = \dket{\psi} = \tr_{q_0,q_1}(qs(\Theta''))$.

	%Note that $\Delta' \sim_s \Theta'$ trivially does not hold because $\Theta'$ is not decomposable.
	% The two distribution are also not deemed bisimilar by distribution bisimilarity \citet{feng_toward_2015-1, deng_bisimulations_2018} also because $\Theta'$ is not decomposable, since it is transition consistent. \red{mi sa invece che sono bisimili, proprio perchè transition consistent => la relazione non va decomposta}
	% See~\autoref{sec:rw} for further details on the comparision between distributions

\end{example}

\begin{example}
	%	Assume Alice and Bob have each a qubit of a Bell pair $\ket{\Psi^+}$. The protocol allows Alice to communicate a 
	%	\red{distribution} of two-bit integers to Bob by sending to him her single qubit.
	%	\anote{gab: a sequence? Lore: e' rimasto d aPOPL dove avevamo considerato una generalizzazione del protocollo, direi di levare, propongo in blu.} 
	Assume Alice and Bob have each a qubit of $\ket{\Phi^+}$.
	The protocol allows Alice to communicate a
	two-bit integer to Bob by sending her single qubit.

	The protocol is as follows: Alice chooses an integer in $[0,3]$ and encodes it by applying suitable transformations to her qubit, which is then sent to Bob;
	Bob receives the qubit and decodes
	it
	%	the distribution
	by performing CNOT and $\text{H} \otimes \I$ on the pair of qubits (the received qubit and his original one).
	Finally, he measures the qubits in the standard basis, recovering the integer chosen by Alice.
		%	, and then a measurement on the standard basis.
	%	By measuring the qubits, Bob recovers the integer chosen by Alice.

	We consider the following encoding of the protocol 	$\proc{SDC} = (\proc{A} \parallel \proc{B}) \setminus c$.
	\begin{align*}
		\proc{A}  =\
		                & (t_0\tags \I(q_0). t\tags c!q_0) +  (t_1\tags X(q_0). t\tags c!q_0)                                                           \\
		                & + (t_2\tags Z(q_0). t\tags c!q_0) +  (t_3\tags ZX(q_0). t\tags c!q_0)                                                         \\
		\proc{B}    =\  & t'\tags c?x. t'\tags \unitary{CNOT}{x, q_1}.t'\tags \unitary{H}{x}. t'\tags \tmeas{x,q_1}{y}.t'\tags out!y.\nil_{x,q_1} \\
		%		\proc{SDC}    =\  & (\proc{A} \parallel \proc{B}) \setminus c                                                                                \\
		\proc{Spec}   =\
		                & t_0\tags\tau. (t,t')\tags\tau. t'\tags\tau .t'\tags\tau . t'\tags\tau . t'\tags out!0. \nil_{q_0,q_1}\ +                \\
		                & t_1\tags\tau. (t,t')\tags\tau. t'\tags\tau .t'\tags\tau . t'\tags\tau . t'\tags out!1. \nil_{q_0,q_1}\ +                \\
		                & t_2\tags\tau. (t,t')\tags\tau. t'\tags\tau .t'\tags\tau . t'\tags\tau . t'\tags out!2. \nil_{q_0,q_1}\ +                \\
		                & t_3\tags\tau. (t,t')\tags\tau. t'\tags\tau .t'\tags\tau . t'\tags\tau . t'\tags out!3. \nil_{q_0,q_1}
	\end{align*}

	We let $\Delta = \singleton{\conf{\dket{\Phi^+}, \proc{SDC}}}$, with $\Theta = \singleton{\conf{\dket{\Phi^+}, \proc{Spec}}}$ its specification.
	Note that $\proc{Spec}$ states that the choice of the initial unitary determines the final value to be communicated.

	Since there is no qubit in $\dket{\Phi^+}$ apart from the ones in $\Sigma_{\proc{SDC}}$,
	the environment of the two distributions is trivially the same, and no superoperator is to be considered.
	The evolution of $\Delta$ and $\Theta$ are as follows, for $n = 0,1,2,3$
	\begin{align*}
		%	& \sconf{\dket{\psi\Phi^+}, (\proc{A} \parallel \proc{B}) \setminus m, \nil} \text{ where } \psi = \alpha\kz + \beta\ko                                                                                  \\
		 & \Delta \xlongrightarrow{\tau}_{t_n} \xlongrightarrow{\tau}_{(t,t')}\xlongrightarrow{\tau}_{t'}^2
		\sconf{\dket{n}, \tmeas{q_0,q_1}{y}.t'\tags out!y.\nil_{q_0,q_1} \setminus c}                       \\
		 & \quad\xlongrightarrow{\tau}_{t'} \Delta_n = \sconf{\dket{n}, t'\tags out!n.\nil_{q_0,q_1} \setminus c} \\
		 & \Theta  \xlongrightarrow{\tau}_{t_n} \xlongrightarrow{\tau}_{(t,t')}\xlongrightarrow{\tau}_{t'}^3  \
		\Theta_n = \sconf{\dket{\Phi^+}, t'\tags out!n.\nil_{q_0,q_1} \setminus c}
		%   & \sconf{\dket{\psi\Phi^+}, \proc{Spec}, \nil}                                                                                                                                                \\
		%   & \quad\longsquiggly_\diamond \sconf{\dket{\Phi^+\psi}, \tau.\tau.\tau.\tau.(\text{out}!q_2 \parallel \nil_{q_0,q_1}), \nil} \\
		%   & \quad\longsquiggly_\diamond^4 \Theta' = \sconf{\dket{\Phi^+\psi}, \text{out}!q_2 \parallel \nil_{q_0,q_1}, R} \\
	\end{align*}
	where we use $\ket{n}$ for the two qubits binary representation of the natural number $n \in \{0,1,2,3\}$, namely $\ket{0} = \ket{00}, \ket{1} = \ket{01}, \ket{2} = \ket{10}$ and $\ket{3} = \ket{11}$.
	%	where we write for $n$ we mean its binary representation in $\ket{n}$, i.e. $\ket{0} = \ket{00}, \ket{1} = \ket{01}, \ket{2} = \ket{10}, \ket{3} = \ket{11}$.

	Finally, note that $\Delta_n \sim_{l} \Theta_n$ for any $n$, because both distributions $\Delta_n$ and $\Theta_n$ sends the number $n$ on the channel $out$, and reduce to deadlock distributions with empty quantum environment.
	%	\begin{align*}
	%		\Delta' \xlongrightarrow{c!{q_2}}_{t'}
	%		 & \ \Delta'' = \sum\nolimits_{n = 0}^3 \distelem{\frac{1}{4}}{\sconf{\dket{n} \otimes \dket{\psi}, (\nil_{q_0,q_1})}} \\
	%		\Theta' \xlongrightarrow{c!{q_2}}_{t'}
	%		 & \ \Theta'' = \sconf{\dket{\Phi^+}\otimes\dket{\psi}, \nil_{q_0,q_1}}
	%	\end{align*}
	%	The bisimilarity is then is easily checked as $\Delta''$ and $\Theta''$ are in deadlock, and $tr_{q_0,q_1}(qs(\Delta'')) = \dket{\psi} = tr_{q_0,q_1}(qs(\Theta''))$.
\end{example}

%\section{Better Schedulers}\label{sec:better}
%\input{Sections/better}

%\section{Real World Example: Quantum Coin Flipping}\label{sec:real-world}
%\input{Sections/real-world}
%
%\section{Related Works}\label{sec:rw}
%\input{Sections/rw}

\section{Conclusions and Future Work}\label{sec:conc}
We introduced a labelled version of lqCCS~\cite{ceragioliQuantumBisimilarityBarbs2024} enriched by tag-based schedulers.
Resorting to simple schedulers allowed us to constrain processes so that they perform physically admissible choices only, i.e. independent of the quantum states.
This suffices for making our proposed saturated bisimilarity $\sim_s$ compliant with the limited observational power prescribed by quantum theory (\autoref{thm:propertyA}).
Moreover, $\sim_s$ is by definition a congruence with respect to parallel composition.
Finally, we characterized the atomic observable properties of lqCCS by deriving a labelled bisimilarity $\sim_l$, provably
equivalent to $\sim_s$ (\autoref{thm:corrcompl}).

To the best of our knowledge, our work is the first behavioural equivalence for distributed, non-deterministic quantum systems that
$(i)$ is a congruence for the parallel operator, $(ii)$ abides the prescriptions of quantum theory, and $(iii)$ makes explicit the observables through labels.

\paragraph{\textbf{Future Work.}}
A fourth desideratum of a behavioural equivalence is being decidable.
Our labelled bisimilarity goes in this direction, saving us from comparing processes under every possible context.
However, the prescribed closure for superoperators still requires considering an infinite number of cases.
As future work we will investigate if this condition can be safely removed from the definition of $\sim_l$, possibly in some specific cases.
This would allow the bisimilarity of distributions of configurations to be decided.
Moreover, we will investigate symbolic approaches for comparing lqCCS processes directly,
i.e. for guaranteeing that they are bisimilar for all “ground” systems obtained by instantiating the quantum input.
These approaches all maintain the same semantic model, that of probability distributions of configurations. 
We are investigating an alternative model, made of "quantum distributions", which generalize probability distributions by pairing a process with a single (partial) density operator, 
encoding both the quantum state and the probability, in the style of~\cite{concur2024}.
%A semantics of these distributions
 Such semantics would satisfy \autoref{thm:propertyA} by construction, and seems more adequate to model quantum protocols.
Finally, a different line of research is to compare our tag-based approach with those of semantic schedulers, and to extend the capability of schedulers while preserving our results.

\bibliographystyle{splncs04}
\bibliography{references}

\begin{thebibliography}{10}
\providecommand{\url}[1]{\texttt{#1}}
\providecommand{\urlprefix}{URL }
\providecommand{\doi}[1]{https://doi.org/#1}

\bibitem{andresInformationHidingProbabilistic2011}
Andr{\'e}s, M.E., Palamidessi, C., {van Rossum}, P., Sokolova, A.: Information
  hiding in probabilistic concurrent systems. Theoretical Computer Science
  \textbf{412}(28),  3072--3089 (2011)

\bibitem{bennettQuantumCryptographyPublic2014}
Bennett, C.H., Brassard, G.: Quantum cryptography: Public key distribution and
  coin tossing. Theoretical Computer Science  \textbf{560},  7--11 (2014)

\bibitem{qteleportation}
Bennett, C.H., Brassard, G., Cr\'epeau, C., Jozsa, R., Peres, A., Wootters,
  W.K.: Teleporting an unknown quantum state via dual classical and
  {E}instein-{P}odolsky-{R}osen channels. Physical Review Letters  \textbf{70},
   1895--1899 (1993)

\bibitem{SDC}
Bennett, C.H., Wiesner, S.J.: Communication via one- and two-particle operators
  on {E}instein-{P}odolsky-{R}osen states. Physical Review Letters
  \textbf{69},  2881--2884 (1992)

\bibitem{bonchigeneral2014}
Bonchi, F., Gadducci, F., Monreale, G.V.: A general theory of barbs, contexts,
  and labels. ACM Transactions on Computational Logic  \textbf{15}(4),  1--27
  (2014)

\bibitem{bonchipower2017}
Bonchi, F., Silva, A., Sokolova, A.: The power of convex algebras. In: Meyer,
  R., Nestmann, U., Herbstritt, M. (eds.) CONCUR 2017. LIPIcs, vol.~85, pp.
  23:1--23:18. Schloss Dagstuhl\textendash Leibniz-Zentrum f\"ur Informatik
  (2017)

\bibitem{caleffiQuantumInternetCommunication2018}
Caleffi, M., Cacciapuoti, A.S., Bianchi, G.: Quantum {I}nternet: From
  communication to distributed computing! In: Benediktsson, J.A., Dressler, F.
  (eds.) NANOCOM 2018. pp. 3:1--3:4. {ACM} (2018)

\bibitem{canettiTimeBoundedTaskPIOAsFramework2006}
Canetti, R., Cheung, L., Kaynar, D.K., Liskov, M.D., Lynch, N.A., Pereira, O.,
  Segala, R.: Time-bounded task-{PIOA}s: A framework for analyzing security
  protocols. In: Dolev, S. (ed.) DISC 2006. LNCS, vol.~4167, pp. 238--253.
  Springer (2006)

\bibitem{concur2024}
Ceragioli, L., Gadducci, F., Lomurno, G., Tedeschi, G.: Effect semantics for
  quantum process calculi. In: Majumdar, R., Silva, A. (eds.) CONCUR 2024.
  LIPIcs, vol.~311, pp. 16:1--16:22. Schloss Dagstuhl - Leibniz-Zentrum
  f{\"{u}}r Informatik (2024)

\bibitem{ceragioliQuantumBisimilarityBarbs2024}
Ceragioli, L., Gadducci, F., Lomurno, G., Tedeschi, G.: Quantum bisimilarity
  via barbs and contexts: Curbing the power of non-deterministic observers.
  Proceedings of the ACM on Programming Languages  \textbf{8}(POPL),
  43:1269--43:1297 (2024)

\bibitem{chatzikokolakisBisimulationDemonicSchedulers2009}
Chatzikokolakis, K., Norman, G., Parker, D.: Bisimulation for demonic
  schedulers. In: de~Alfaro, L. (ed.) FOSSACS 2009. LNCS, vol.~5504, pp.
  318--332. Springer (2009)

\bibitem{chatzikokolakisMakingRandomChoices2007}
Chatzikokolakis, K., Palamidessi, C.: Making random choices invisible to the
  scheduler. In: Caires, L., Vasconcelos, V.T. (eds.) CONCUR 2007. LNCS,
  vol.~4703, pp. 42--58. Springer (2007)

\bibitem{davidsonformal2012}
Davidson, T.A.S.: Formal Verification Techniques Using Quantum Process
  Calculus. Ph.D. thesis, University of Warwick (2012)

\bibitem{dengbisimulations2018}
Deng, Y.: Bisimulations for probabilistic and quantum processes. In: Schewe,
  S., Zhang, L. (eds.) CONCUR 2018. LIPIcs, vol.~118, pp. 2:1--2:14. Schloss
  Dagstuhl\textendash Leibniz-Zentrum f\"ur Informatik (2018)

\bibitem{dengopen2012}
Deng, Y., Feng, Y.: Open bisimulation for quantum processes. In: Baeten,
  J.C.M., Ball, T., de~Boer, F.S. (eds.) TCS 2012. LNCS, vol.~7604, pp.
  119--133. Springer (2012)

\bibitem{fengbisimulation2012}
Feng, Y., Duan, R., Ying, M.: Bisimulation for quantum processes. ACM
  Transactions on Programming Languages and Systems  \textbf{34}(4),
  17:1--17:43 (2012)

\bibitem{fengtoward2015-1}
Feng, Y., Ying, M.: Toward automatic verification of quantum cryptographic
  protocols. In: Aceto, L., de~Frutos{-}Escrig, D. (eds.) CONCUR 2015. LIPIcs,
  vol.~42, pp. 441--455. Schloss Dagstuhl\textendash Leibniz-Zentrum f\"ur
  Informatik (2015)

\bibitem{gaoQuantumPrivateQuery2019}
Gao, F., Qin, S., Huang, W., Wen, Q.: Quantum private query: A new kind of
  practical quantum cryptographic protocol. Science China Physics, Mechanics \&
  Astronomy  \textbf{62}(7),  70301 (2019)

\bibitem{gaycommunicating2005}
Gay, S.J., Nagarajan, R.: Communicating quantum processes. In: Palsberg, J.,
  Abadi, M. (eds.) POPL 2005. pp. 145--157. ACM (2005)

\bibitem{heinosaariMathematicalLanguageQuantum2011}
Heinosaari, T., Ziman, M.: The Mathematical Language of Quantum Theory: From
  Uncertainty to Entanglement. {Cambridge University Press} (2011)

\bibitem{hennessyexploring2012}
Hennessy, M.: Exploring probabilistic bisimulations, part {{I}}. Formal Aspects
  of Computing  \textbf{24}(4-6),  749--768 (2012)

\bibitem{kubotaapplicationnoyear}
Kubota, T., Kakutani, Y., Kato, G., Kawano, Y., Sakurada, H.: Application of a
  process calculus to security proofs of quantum protocols. In: Arabnia, H.R.,
  Gravvanis, G.A., Solo, A.M.G. (eds.) FCS 2012. pp. 141--147. CSREA Press
  (2012)

\bibitem{kubotasemi-automated2016}
Kubota, T., Kakutani, Y., Kato, G., Kawano, Y., Sakurada, H.: Semi-automated
  verification of security proofs of quantum cryptographic protocols. Journal
  of Symbolic Computation  \textbf{73},  192--220 (2016)

\bibitem{lalireRelationsQuantumProcesses2006}
Lalire, M.: Relations among quantum processes: {B}isimilarity and congruence.
  Mathematical Structures in Computer Science  \textbf{16}(3),  407--428 (2006)

\bibitem{lalireprocess2004}
Lalire, M., Jorrand, P.: A process algebraic approach to concurrent and
  distributed quantum computation: Operational semantics. CoRR
  \textbf{quant-ph/0407005} (2004)

\bibitem{longQuantumSecureDirect2007}
Long, G.l., Deng, F.g., Wang, C., Li, X.H., Wen, K., Wang, W.Y.: Quantum secure
  direct communication and deterministic secure quantum communication.
  Frontiers of Physics in China  \textbf{2}(3),  251--272 (2007)

\bibitem{nielsenquantum2010}
Nielsen, M.A., Chuang, I.L.: Quantum Computation and Quantum Information: 10th
  Anniversary Edition. Cambridge University Press (2010)

\bibitem{nurhadiQuantumKeyDistribution2018}
Nurhadi, A.I., Syambas, N.R.: Quantum key distribution ({QKD}) protocols: A
  survey. In: ICWT 2018. pp.~1--5. IEEE (2018)

\bibitem{sangiorgienhancements2011}
Pous, D., Sangiorgi, D.: Enhancements of the bisimulation proof method. In:
  Sangiorgi, D., Rutten, J. (eds.) Advanced Topics in Bisimulation and
  Coinduction, pp. 233--289. Cambridge University Press (2011)

\bibitem{Segala95}
Segala, R.: Modeling and Verification of Randomized Distributed Real-time
  Systems. Ph.D. thesis, Massachusetts Institute of Technology (1995)

\bibitem{songDecentralizedBisimulationMultiagent2015}
Song, L., Feng, Y., Zhang, L.: Decentralized bisimulation for multiagent
  systems. In: Weiss, G., Yolum, P., Bordini, R.H., Elkind, E. (eds.) AAMAS
  2015. pp. 209--217. ACM (2015)

\bibitem{zhangFutureQuantumCommunications2022}
Zhang, P., Chen, N., Shen, S., Yu, S., Wu, S., Kumar, N.: Future quantum
  communications and networking: {A} review and vision. {IEEE} Wireless
  Communications  \textbf{31}(1),  141--148 (2024)

\end{thebibliography}

\clearpage
\appendix

\renewcommand{\com}[1]{}

%\section{General Quantum Operations}\label{sec:proofpreliminaries}
%\input{Sections/appendix_quantum_proof_preliminaries}

\section{Full lqCCS}\label{sec:typeappendix}
The full type system for lqCCS
	{\small
		\begin{gather*}
			\begin{matrix}
				\infer[\rulename{Nil}]{\Sigma \vdash \nil_{\tilde{e}}}{\tilde{e} \in \tilde{\Sigma}} \qquad
				\infer[\rulename{Tau}]{\Sigma \vdash t \tags \tau . P}{\Sigma \vdash P} \qquad
				\infer[\rulename{TauPair}]{\Sigma \vdash (t, t') \tags \tau . P}{\Sigma \vdash P} \qquad
				\infer[\rulename{Restrict}]{\Sigma \vdash P \setminus c}{\Sigma \vdash P}                                                                                                                                                                                                                                                                                                                                                 \\[0.2cm]
				\infer[\rulename{Sum}]{\Sigma \vdash P + Q}{\Sigma \vdash P                                                & \Sigma \vdash Q}                                                                                                                                       \qquad
				\infer[\rulename{QOp}]{\Sigma \vdash t \tags \sop{E}{\tilde{e}} . P}{\mathcal{E} : \opset(n)               & |E| = n                                                                                                                                                       & \tilde{e} \in \tilde{E}                                                           & E \subseteq \Sigma & \Sigma \vdash P}                    \\[0.2cm]
				\infer[\rulename{QMeas}]{\Sigma \vdash t \tags \meas{\tilde{e}}{y} . P}{M : \measset(n)                    & |E| = n                                                                                                                                                       & \tilde{e} \in \tilde{E}                                                           & E \subseteq \Sigma & y : \ntype       & \Sigma \vdash P} \\[0.2cm]
				\infer[\rulename{CRecv}]{\Sigma \vdash t \tags c?x . P}{c : \chtype{T}                                     & x : T \in\{\btype,\ntype\}                                                                                                                                    & \Sigma \vdash P} \qquad
				\infer[\rulename{QRecv}]{\Sigma \vdash t \tags c?x . P}{c : \chtype{\qtype}                                & x : \qtype                                                                                                                                                    & \Sigma \cup \{x\} \vdash P}                                                                                                                  \\[0.2cm]
				\infer[\rulename{QSend}]{\Sigma \vdash t \tags c!e .P}{c : \chtype{\qtype}                                 & e \in \Sigma                                                                                                                                                  & \Sigma \setminus \{ e \} \vdash P}                                         \qquad
				\infer[\rulename{CSend}]{\Sigma \vdash t \tags c!e.P}{c : \chtype{T}                                       & e : T \in\{\btype,\ntype\}                                                                                                                                    & \Sigma \vdash P}                                                                                                                             \\[0.2cm]
				\infer[\rulename{ITE}]{\Sigma \vdash \ite{e}{P_1}{P_2}}{e : \btype                                         & \Sigma \vdash P_1                                                                                                                                             & \Sigma \vdash P_2} \qquad
				\infer[\rulename{Par}]{\Sigma_1 \cup \Sigma_2 \vdash P_1 \parallel P_2}{\Sigma_1 \cap \Sigma_2 = \emptyset & \Sigma_1 \vdash P_1                                                                                                                                           & \Sigma_2 \vdash P_2}
			\end{matrix}
		\end{gather*}}

\begin{lemma}\label{lem:quantumsubst}
	If $\Sigma \cup \{x\} \vdash P$ and $v \not\in \Sigma$ then $\Sigma \cup \{v\} \vdash P[\sfrac{v}{x}]$.
\end{lemma}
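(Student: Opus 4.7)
The plan is to proceed by structural induction on $P$ (equivalently, on the typing derivation $\Sigma \cup \{x\} \vdash P$, since typing is unique by~\cite{ceragioliQuantumBisimilarityBarbs2024}). In each case, I would invert the applicable typing rule, apply the induction hypothesis to its premises—making sure the side condition $v \not\in \Sigma'$ for the smaller context $\Sigma'$ still holds, which it does because $\Sigma' \subseteq \Sigma$—and then reassemble the derivation with $\Sigma \cup \{v\}$ in place of $\Sigma \cup \{x\}$.

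Most cases are routine. For the leaves (such as $\nil_{\tilde{e}}$) and the rules that consume qubits (\rulename{QOp}, \rulename{QMeas}, \rulename{QSend}), I would simply observe that if $\tilde{e} \subseteq \Sigma \cup \{x\}$ then $\tilde{e}[\sfrac{v}{x}] \subseteq \Sigma \cup \{v\}$, because the only qubit name that changes is $x \mapsto v$. The congruence cases (restriction, $\tau$-prefix, sum, conditional) follow by a direct application of the induction hypothesis to the continuation(s). For the binding prefixes \rulename{CRecv}, \rulename{QRecv} and \rulename{QMeas}, I would rely on alpha-conversion to rename the bound variable away from both $x$ and $v$ so that no capture occurs, and then invoke the IH on the body.

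The case requiring genuine care is parallel composition $P_1 \parallel P_2$. Inverting \rulename{Par} gives $\Sigma_1, \Sigma_2$ with $\Sigma_1 \cap \Sigma_2 = \emptyset$ and $\Sigma_1 \cup \Sigma_2 = \Sigma \cup \{x\}$, so $x$ lies in exactly one side; WLOG $x \in \Sigma_1$ and $\Sigma_2 \subseteq \Sigma$. Applying the IH to $P_1$ yields $(\Sigma_1 \setminus \{x\}) \cup \{v\} \vdash P_1[\sfrac{v}{x}]$, using $v \notin \Sigma \supseteq \Sigma_1 \setminus \{x\}$. For $P_2$, the linearity of the type system guarantees that $x$ cannot be owned by $P_2$, so any residual occurrence of $x$ in $P_2$ may only appear inside a classical expression through a qubit-name equality, and such occurrences do not affect qubit-typing; hence $\Sigma_2 \vdash P_2[\sfrac{v}{x}]$ follows. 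Finally, since $v \notin \Sigma \supseteq \Sigma_2$, the disjointness $((\Sigma_1 \setminus \{x\}) \cup \{v\}) \cap \Sigma_2 = \emptyset$ holds, and \rulename{Par} reassembles the derivation of $\Sigma \cup \{v\} \vdash (P_1 \parallel P_2)[\sfrac{v}{x}]$.

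The main obstacle I expect is precisely the bookkeeping in the parallel case: one must justify both that the owned qubit $x$ sits on exactly one side and that non-owning occurrences of $x$ (inside classical expressions on the other side) do not obstruct the substitution. The hypothesis $v \not\in \Sigma$ is exactly the ingredient needed to preserve the disjointness required by \rulename{Par} after replacing $x$ with $v$; once this is pinpointed, the rest of the induction is mechanical.
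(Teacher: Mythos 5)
Your proposal is correct and follows essentially the same route as the paper's proof: structural induction on $P$ (equivalently on the unique typing derivation), with the parallel case resolved by locating $x$ in exactly one component, applying the induction hypothesis there, and using $v\notin\Sigma$ to preserve the disjointness side condition of \rulename{Par}. If anything, your treatment is slightly more careful than the paper's, which silently glosses over alpha-conversion for binders and over non-owning occurrences of $x$ in the other parallel component.
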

\begin{proof}
	By structural induction on $P$.
	Let $P = \nil_{\tilde{e}}$. If $\Sigma \cup \{x\} \vdash P$ then it must be that $x \not\in \Sigma$ and $x \in \tilde{e}$, thus trivially $\Sigma \cup \{v\} \vdash \nil_{\tilde{e}}[\sfrac{v}{x}]$.

	Let $P = c?y.P'$ for some process $P'$ where $c : \chtype{\qtype}$.
	The only applicable rule is \rulename{QRecv}.
	If $y = x$ then it trivially follows by inductive hypothesis.
	Otherwise, by inductive hypothesis $\Sigma \cup \{v, x\} \vdash P'$ and thus $\Sigma \cup \{v\} \vdash c?y.P$.

	Let $P = c!e.P'$ for some process $P'$ where $c : \chtype{\qtype}$.
	The only applicable rule is \rulename{QSend}.
	However, $e \neq v$ since $v \not\in \Sigma$ for any $\Sigma \vdash P$ by hypothesis.
	Then the conclusion follows trivially by inductive hypothesis.

	Let $P = P_1 \parallel P_2$ for some processes $P_1$ and $P_2$. By the hypothesis of the \rulename{Par} rule, $\Sigma_1 \vdash P_1$ and $\Sigma_2 \vdash P_2$ where $x$ is either in $\Sigma_1$ or $\Sigma_2$ but not in both.
	Without loss of generality, assume $x \in \Sigma_1$.
	By inductive hypothesis $\Sigma_1[\sfrac{v}{x}] \vdash P_1[\sfrac{v}{x}]$ and since $v \not\in \Sigma$ it is also true that $v \not\in \Sigma_2$, thus $\Sigma_2 \vdash P_2[\sfrac{v}{x}]$.
	Therefore, the mutual exclusivity requirement still holds, and we can reapply the \rulename{Par} rule.

	All the other cases are trivial application of the inductive hypothesis on the premises of the only applicable rule.
\qed\end{proof}

\quasitypingpreservation*
\begin{proof}
	Consider the function $f$ which takes as input a typing context and an action
	\[
		f(\Sigma, \mu) = \begin{cases}
			\Sigma \cup \{ q \}      & \text{if $\mu = c?q$ with $c : \chtype{\qtype}$} \\
			\Sigma \setminus \{ q \} & \text{if $\mu = c!q$ with $c : \chtype{\qtype}$} \\
			\Sigma                   & \text{otherwise}                                 \\
		\end{cases}
	\]
	Let $(\Sigma'', \Sigma) \vdash \iconf$ and $\iconf \xlongrightarrow{\mu}_s \Delta$. We will prove that $(\Sigma'', f(\Sigma, \mu)) \vdash \Delta$.

	By induction on the derivation of $\xlongrightarrow{\mu}_s$.
	The base cases $\rulename{Tau}$, $\rulename{TauPair}$, $\rulename{Send}$, $\rulename{Receive}$, $\rulename{QOp}$, and $\rulename{QMeas}$ 
	are trivial by correspondence with their respective typing rules.

	Cases $\rulename{ITEt}$, $\rulename{ITEf}$, $\rulename{SumL}$, $\rulename{SumR}$, and $\rulename{Restrict}$ hold by induction, since respectively by the typing rules $\rulename{Sum}$, 
	$\rulename{ITE}$ or $\rulename{Restrict}$, $\iconf$ and $\Delta$ must have the same type.

	For  case \rulename{ParL} (and symmetrically \rulename{ParR}), by the typing rule \rulename{Par} there exists $P_1$, $P_2$, $\Sigma_1$ and $\Sigma_2$ such that $\iconf = \conf{\rho, P_1 \parallel P_2}$, $\Sigma = \Sigma_1 \cup \Sigma_2$ with $\Sigma_1 \cap \Sigma_2 = \emptyset$ and $(\Sigma'', \Sigma_1) \vdash P_1$.
	By induction $(\Sigma'', f(\Sigma_1, \mu)) \vdash \Delta$. We must verify that $(\Sigma'', f(\Sigma_1, \mu) \cup \Sigma_2) \vdash \Delta \parallel P_2$ with $f(\Sigma_1, \mu) \cap \Sigma_2 = \emptyset$. If $\mu = \tau$ or $\mu = c!v, c?v$ with $c$ a classical channel, then $f(\Sigma_1, \tau) = \Sigma_1$, which trivially satisfies the requirements.
	If $c!v$ with $c : \chtype{\qtype}$ then $f(\Sigma_1, c!v) = \Sigma_1 \setminus \{v\}$, but by the \rulename{QSend} rule $v \in \Sigma_1$ and by \rulename{Par} $v \not\in \Sigma_2$, thus the the requirements hold.

	For case \rulename{SynchL} (and symmetrically \rulename{SynchR}), assume $\iconf = \conf{\rho, P \parallel Q}$ with $\conf{\rho, P} \xlongrightarrow{c!v}_t \sconf{\rho, P'}$, $\conf{\rho, Q} \xlongrightarrow{c?v}_{t'} \sconf{\rho, Q'}$ and $\Delta = \sconf{\rho, P' \parallel Q'}$.
	If $c$ is a classical channel then as for the parallel case, there are no change to the typing context as indicated by the update function $f$.
	If $c : \chtype{\qtype}$, then, as for the parallel case, there exists $\Sigma_1$ and $\Sigma_2$ such that $\Sigma = \Sigma_1 \cup \Sigma_2$ with $\Sigma_1 \cap \Sigma_2 = \emptyset$ and $(\Sigma'', \Sigma_1) \vdash P$, $(\Sigma'', \Sigma_2) \vdash Q$.
	By induction and application of $f$, $(\Sigma'', \Sigma_1 \setminus \{v\}) \vdash P'$ and $(\Sigma'', \Sigma_2 \cup \{v\}) \vdash Q'$. But, as for the previous case $v \in \Sigma_1$ thus $v \in \Sigma''$, and by \rulename{Par} $v \not\in \Sigma_2$.
	Thus, $f(\Sigma_1, c!v) \cup f(\Sigma_2, c?v) = (\Sigma_1 \setminus \{v\}) \cup (\Sigma_2 \cup \{v\}) = \Sigma = f(\Sigma, \tau)$.
\qed\end{proof}

\typingpreservation*
\begin{proof}
	Immediately follows from the proof~\autoref{thm:quasipreservation}, by noticing that for $\iconf \xlongrightarrow{\tau} \Delta$, the unique $\Sigma_\tau$ is exactly $\Sigma_P$.
\qed\end{proof}

Hereafter, we write $env(\Delta)$ for $\tr_{\Sigma'}(qs(\Delta))$ when $(\Sigma, \Sigma') \vdash \Delta$.

We now prove some results about the semantics and its lifting as LTS over distributions of configurations.
	Recall that $B[P] = P \parallel Q$ with $B[\blank] = [\blank] \parallel Q$.
\begin{lemma}\label{thm:contextpreservesmoves}
	For any $B[\blank]$ and $\Delta$, if $\Delta\! \tauarrow_\delta\! \Delta' \neq \overline{\bot}$ then $B[\Delta]\! \tauarrow_\delta\! B[\Delta'] \neq \overline{\bot}$.
\end{lemma}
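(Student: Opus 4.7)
The plan is to unfold the lifted transition $\Delta \tauarrow_\delta \Delta'$ into its pointwise constituents, apply the rule $\rulename{ParL}$ at each point, and reassemble the result. Concretely, by the definition of $\lift_{\confbot}(\lift_{\choiceset}(bot(\longrightarrow)))$ I can write $\Delta = \sum_i p_i \bullet \sconf{\iconf_i}$ (allowing $\iconf_i = \bot$) and $\delta = \sum_j q_j \bullet \overline{s_j}$, together with transitions $\iconf_i\, bot(\tauarrow_{s_j})\, \Delta_{ij}$ such that $\Delta' = \sum_{ij} p_i q_j \bullet \Delta_{ij}$.

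For each pair $(i,j)$ with $\iconf_i = \conf{\rho_i, P_i} \neq \bot$ and $\Delta_{ij} \neq \overline{\bot}$ (a genuine transition in $\longrightarrow$), I would invoke rule $\rulename{ParL}$ with label $\mu = \tau$; its side condition $\mu \notin \{c?v \mid v \in \Sigma_Q\}$ is vacuously satisfied because $\tau \neq c?v$. This yields $B[\iconf_i] = \conf{\rho_i, P_i \parallel Q} \tauarrow_{s_j} \Delta_{ij} \parallel Q = B[\Delta_{ij}]$. For the trivial case $\iconf_i = \bot$, the convention $B[\bot] = \bot$ together with $\bot\, bot(\tauarrow_{s_j})\, \overline{\bot}$ transfers directly.

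For pairs $(i,j)$ where $\iconf_i \neq \bot$ but $\Delta_{ij} = \overline{\bot}$, i.e.\ $\iconf_i$ has no $\tau$-transition with scheduler $s_j$, I need that $B[\iconf_i]$ also admits no such transition, so that the bot-fallback carries over to $B[\overline{\bot}] = \overline{\bot} = B[\Delta_{ij}]$. This is where the deterministic tagging assumption is used: since each tag (resp.\ each pair of tags) identifies at most one prefix in the whole contextualized process, and $s_j$ already names a (possibly absent) prefix in $P_i$, the parallel composition with $Q$ cannot introduce via $\rulename{ParR}$ or $\rulename{SynchL}/\rulename{SynchR}$ any extra $\tau$-transition with the same scheduler tag $s_j$. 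Hence $B[\iconf_i]\, bot(\tauarrow_{s_j})\, \overline{\bot}$ as required. This is the main obstacle and really the only non-routine step; everything else is bookkeeping on the liftings.

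Recombining, the two liftings give $B[\Delta] = \sum_i p_i \bullet \sconf{B[\iconf_i]} \tauarrow_\delta \sum_{ij} p_i q_j \bullet B[\Delta_{ij}] = B\left[\sum_{ij} p_i q_j \bullet \Delta_{ij}\right] = B[\Delta']$, using that $B[-]$ commutes with convex combinations by its definition on $\dist{\confbot}$. Finally, $B[\Delta'] \neq \overline{\bot}$ follows from $\Delta' \neq \overline{\bot}$: there exists $\iconf \in \support{\Delta'}$ with $\iconf \neq \bot$, and then $B[\iconf] \neq \bot$ lies in the support of $B[\Delta']$, concluding the proof.
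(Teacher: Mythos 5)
Your proof takes the same route as the paper's: the paper's entire argument is ``$B[\Delta]=P\parallel\Delta$ for some $P$; the move is obtained via the parallel rule thanks to decomposability and linearity of $\tauarrow_\delta$'', i.e.\ exactly your decompose--apply-$\rulename{ParL}$--recombine scheme, so you have spelled out what the paper leaves implicit. The step you rightly single out as the only non-routine one --- showing that $B[\iconf_i]$ is still stuck under $s_j$ whenever $\iconf_i$ is --- is not addressed at all in the paper's one-line proof, and your appeal to deterministic tagging does settle it whenever the tag $s_j$ is carried by some configuration in $\support{\Delta}$: if $Q$ also carried it, that configuration composed with $Q$ would have two distinct $s_j$-transitions, contradicting the assumption. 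Be aware, though, that your phrase ``$s_j$ already names a (possibly absent) prefix in $P_i$'' hides a residual case: if $s_j$ occurs in \emph{no} $P_i$ but does occur in $Q$, nothing forbids $B[\iconf_i]$ from moving via $\rulename{ParR}$, and then the (unique, by determinism) $\delta$-transition of $B[\Delta]$ does not land on $B[\Delta']$. For a point scheduler $\delta=\overline{t}$ this cannot arise, since the hypothesis $\Delta'\neq\overline{\bot}$ forces some $P_{i_0}$ to own $t$, and this covers every use the paper makes of the lemma; for a genuinely randomized $\delta$ one branch's tag could live entirely in $Q$, and there the statement itself --- not merely your proof or the paper's --- is in difficulty.
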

\begin{proof}
	Note that $B[\Delta] = P \parallel \Delta$ for some $P$.
	Then, the move is obtained via the $\rulename{ParRight}$ rule thanks to decomposability and linearity of $\tauarrow_\delta$.
\qed\end{proof}

\begin{lemma}\label{thm:alwaysbot}
	For any $P$, then either one of the two conditions below holds
	\begin{itemize}
		\item for all $\rho$, $\sconf{\rho, P} \tauarrow_\delta \bot$, or
		\item for all $\rho$, $\sconf{\rho, P} \not\tauarrow_\delta \bot$.
	\end{itemize}
\end{lemma}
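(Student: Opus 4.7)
The core of the argument is that the \emph{availability} of a transition from $\sconf{\rho, P}$ under a scheduler depends only on $P$ and the scheduler, not on the quantum values stored in $\rho$. The strategy is to strengthen the statement to all single schedulers $s \in \choiceset$ and all actions $\mu \in \actset$, and to prove by structural induction on $P$ the following: for every $\rho$ with $\sconf{\rho, P}$ well-typed, whether there exists $\Delta$ with $\sconf{\rho, P} \xlongrightarrow{\mu}_s \Delta$ is a function of $P$, $\mu$, and $s$ alone. The key observation, obtained by inspecting the rules of~\autoref{semantics}, is that no premise of any rule ever reads the entries of the density operator: the tag-activated rules ($\rulename{Tau}$, $\rulename{TauPair}$, $\rulename{QOp}$, $\rulename{QMeas}$) fire unconditionally once the tag matches; $\rulename{Send}$ and the conditional rules depend only on evaluations $e \Downarrow v$, which are purely classical since qubit values cannot appear in expressions and $q = q'$ compares qubit names; and $\rulename{Receive}$ only checks $v \in \Sigma_\rho$, again a syntactic condition. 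The state $\rho$ affects only the target distribution, via the probabilities $\tr(\mathcal{M}^{\tilde q}_m(\rho))$ and the updated matrices.

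The base cases of the induction are immediate from this observation. The inductive cases $\ite{e}{P_1}{P_2}$, $P_1 + P_2$, and $P_1 \setminus c$ use the induction hypothesis directly, since their auxiliary side conditions mention only syntactic data. The most delicate case is $P_1 \parallel P_2$, where $\rulename{SynchL}$/$\rulename{SynchR}$ reduce availability to the components admitting matching $c!v$/$c?v$ transitions. The receive premise $v \in \Sigma_\rho$ is the only $\rho$-related clause, but a sender can emit only $v \in \Sigma_{P_i}$ on a quantum channel (or a pure classical value otherwise), and well-typedness guarantees $\Sigma_{P_i} \subseteq \Sigma_P \subseteq \Sigma_\rho$, so the clause is automatic uniformly in $\rho$; the same reasoning discharges the $\mu \notin \{c?v \mid v \in \Sigma_Q\}$ side conditions of $\rulename{ParL}$/$\rulename{ParR}$. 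Invoking~\autoref{thm:typepreservation} and~\autoref{thm:quasipreservation} ensures that these typing invariants are maintained along the derivation.

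The lemma then follows by lifting the strengthened claim from a single scheduler to a scheduler distribution: by construction of the lifted semantics, $\sconf{\rho, P} \tauarrow_\delta \overline{\bot}$ holds exactly when $\sconf{\rho, P}$ has no $\tau$-move under every $s \in \support(\delta)$, and this condition depends only on $P$ and $\delta$ by the strengthened claim, yielding the required dichotomy. The main obstacle I anticipate is the parallel/synchronization case, where the induction hypothesis must be applied simultaneously to both components with arbitrary labels $c!v$ and $c?v$; the strengthening to all $\mu$, rather than just $\tau$, is precisely what makes this step go through, since a $\tau$-synchronization at the top level is witnessed by non-$\tau$ moves of the subprocesses.
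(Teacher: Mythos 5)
Your proposal follows essentially the same route as the paper, whose proof is simply ``by induction on the rules of the semantics'': the substance in both cases is the observation that no rule's \emph{applicability} reads the entries of $\rho$, which only shapes the target distribution. One quantifier deserves care, though: your strengthened claim, as stated for all $\mu$, is literally false for $\mu = c?v$ with $c : \chtype{\qtype}$, since the premise $v \in \Sigma_\rho$ of \rulename{Receive} does vary across well-typed $\rho$ with different qubit sets (the paper itself exploits this in the quantum-lottery example, where $t_4$ is unavailable because $\rho$ owns no qubits outside $\Sigma_P$); you should either fix $\Sigma_\rho$ throughout the induction or restrict the strengthening to $\tau$, output actions, and inputs of values already known to lie in $\Sigma_\rho$. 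Your own typing argument in the synchronization case already supplies exactly what is needed to repair this, since there the received value is emitted by the other component and hence lies in $\Sigma_P \subseteq \Sigma_\rho$, so the lemma as stated (which concerns only $\tauarrow_\delta$) is unaffected.
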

\begin{proof}
	By induction on the rules of the semantics.
\qed\end{proof}

\begin{lemma}\label{thm:bothbot}
	For any $P$, $t$ and $\Delta$, if $\Delta \tauarrow_{\singleton{t}} \bot$ and $\sconf{\rho, P} \tauarrow_{\singleton{t}} \bot$ for some $\rho$, then
	$\Delta \parallel P \tauarrow_{\singleton{t}} \bot$.
\end{lemma}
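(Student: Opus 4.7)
The plan is to show that no configuration in the support of $\Delta \parallel P$ admits a $\tau$-transition with scheduler $\singleton{t}$, so the lifted semantics forces the distribution to reduce entirely to $\overline{\bot}$. The argument proceeds in three steps: promote the hypothesis on $P$ from ``some $\rho$'' to ``every $\rho$''; enumerate the rules that could fire under a single-tag scheduler; and rule each one out by appealing to the two hypotheses.

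First, since $\sconf{\rho, P} \tauarrow_{\singleton{t}} \bot$ for some $\rho$, I invoke \autoref{thm:alwaysbot} to obtain that for every $\rho'$, $\sconf{\rho', P} \tauarrow_{\singleton{t}} \bot$, i.e.\ no configuration built from $P$ (on any quantum state) has an outgoing $\tau$-move tagged $t$. Symmetrically, by the definition of $bot(\longrightarrow)$, the hypothesis $\Delta \tauarrow_{\singleton{t}} \bot$ unfolds to: for every $\conf{\rho_i, P_i}$ in $\support{\Delta}$, there is no distribution $\Delta_i$ with $\conf{\rho_i, P_i} \xlongrightarrow{\tau}_{t} \Delta_i$.

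Next, I consider an arbitrary configuration $\conf{\rho_i, P_i \parallel P}$ in $\support{\Delta \parallel P}$ and enumerate the rules from Figure~\ref{semantics} that could derive a $\tau$-move with scheduler $\singleton{t}$. The scheduler $\singleton{t}$ is a single tag, hence the synchronization rules $\rulename{SynchL}$ and $\rulename{SynchR}$, which require a pair scheduler $(t,t')$, are excluded. The only remaining candidates are $\rulename{ParL}$ and $\rulename{ParR}$: the former would require $\conf{\rho_i, P_i} \xlongrightarrow{\tau}_{t} \Delta_i$, contradicting the hypothesis on $\Delta$; the latter would require $\conf{\rho_i, P} \xlongrightarrow{\tau}_{t} \Delta_i'$, contradicting the lifted hypothesis on $P$.

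Therefore, for every configuration in the support of $\Delta \parallel P$ the $bot$-closure forces a transition to $\overline{\bot}$, and by the lifting of the semantics to distributions we conclude $\Delta \parallel P \tauarrow_{\singleton{t}} \overline{\bot}$. The only subtle point is making sure that the restriction to a singleton scheduler genuinely forbids synchronizations; this is immediate from the syntax of $\choiceset$ but is the step on which the whole argument hinges, since a randomized scheduler could otherwise combine $t$ with a synchronization branch. No calculation with density operators is needed, as the argument is purely structural on the transition rules.
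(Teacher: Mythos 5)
Your proof is correct and follows essentially the same route as the paper, which compresses the argument into a one-line appeal to Lemma~\ref{thm:alwaysbot} together with decomposability and linearity of $\tauarrow_{\singleton{t}}$. You simply make explicit the rule enumeration (only $\rulename{ParL}$/$\rulename{ParR}$ can fire under a singleton-tag scheduler, and both are blocked by the hypotheses), which is exactly the content the paper leaves implicit.
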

\begin{proof}
	By~\autoref{thm:alwaysbot}, decomposability and linearity of $\tauarrow_{\singleton{t}}$.
\qed\end{proof}

\begin{lemma}\label{thm:alwaysmove}
	For any density operator $\rho$, process $P$, and distribution of schedulers $\delta$, if $\sconf{\rho, P} \tauarrow_\delta \Delta \neq \bot$ then $\Delta = \sum_i p_i \conf{\frac{1}{tr(\E_i(\rho))}\E_i(\rho), P_i}$ for some $p_i$ and
	for some family of superoperators $\mathcal{E}_i$. 
%	\green{each being either a trace preserving superoperator, or a trace non-increasing superoperator with Kraus decomposition of cardinality one.}

	Moreover, for any $\sigma \neq \rho$, $\sconf{\sigma, P} \tauarrow_\delta \sum_i p_i' \conf{\frac{1}{tr(\E_i(\sigma))}\E_i(\sigma), P_i}$,
	and, if $tr_{\overline{\Sigma_P}}(\rho) = tr_{\overline{\Sigma_P}}(\sigma)$ then $p_i = p_i'$ for any $i$.
\end{lemma}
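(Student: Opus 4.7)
The plan is to prove both parts by structural induction on the derivation of $\sconf{\rho, P} \tauarrow_\delta \Delta$, first working with derivations of the underlying pLTS $\longrightarrow$ and then lifting through $\lift_{\confbot}(\lift_{\choiceset}(bot(\longrightarrow)))$ by decomposability and linearity. Typing preservation (\autoref{thm:typepreservation}) will be essential here, as it guarantees that any superoperator produced along a $\tau$-reduction acts only on qubits which remain in $\Sigma_P$; this locality is exactly the property needed for the quantitative part of the statement.

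For the base axioms $\rulename{Tau}$, $\rulename{TauPair}$ and $\rulename{QOp}$, the result is a point distribution and the required $\E$ is either the identity or the trace-preserving $\E^{\tilde q}$, so $tr(\E(\rho)) = 1$ and the normalisation is trivial; none of the side conditions depends on the quantum state, so the same derivation is available starting from $\sigma$. The measurement axiom $\rulename{QMeas}$ already has the shape claimed, with $\E_m = \mathcal{M}_m^{\tilde q}$ and $p_m = tr(\mathcal{M}_m^{\tilde q}(\rho))$. The inductive cases $\rulename{ITEt}/\rulename{ITEf}$, $\rulename{SumL}/\rulename{SumR}$, $\rulename{Restrict}$, and $\rulename{ParL}/\rulename{ParR}$ follow directly from the inductive hypothesis, since neither the residual processes nor the superoperators depend on the branch selected by these structural rules. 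The synchronisation cases $\rulename{SynchL}/\rulename{SynchR}$ combine two deterministic visible-action sub-derivations that do not alter $\rho$, so here $\E = \I$ with weight $1$; the only state-dependent side condition, $c : \chtype{\qtype} \Rightarrow v \in \Sigma_\rho$ of $\rulename{Receive}$, depends only on the qubit registry, which coincides for $\rho$ and $\sigma$ by well-typedness.

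The main obstacle is the last clause, namely $p_i = p_i'$ under $tr_{\overline{\Sigma_P}}(\rho) = tr_{\overline{\Sigma_P}}(\sigma)$. Inspecting the induction shows that every superoperator $\E_i$ appearing in a derivation of $\sconf{\rho,P}\tauarrow_\delta\Delta$ is of the form $\mathcal{F}_i^{\tilde q}$ for some $\mathcal{F}_i$ with $\tilde q \subseteq \Sigma_P$: the $\rulename{QOp}$ and $\rulename{QMeas}$ rules build exactly this SWAP-extended form (acting as the identity on the environment qubits), and the structural rules preserve it. By the standard identity $tr(\mathcal{F}_i^{\tilde q}(\rho)) = tr(\mathcal{F}_i(tr_{\overline{\Sigma_P}}(\rho)))$, and likewise for $\sigma$, we conclude that the two traces coincide, and hence so do the weights $p_i$ contributed by the base derivation. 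Propagating this equality through the convex combinations coming from the scheduler distribution $\delta$ and from $\lift_{\confbot}$ then yields the statement for the full lifted transition.
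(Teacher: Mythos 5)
Your proof is correct and takes essentially the same route as the paper, whose entire argument for this lemma is the single line ``by induction on the rules of the semantics''; your write-up merely supplies the case analysis that the paper leaves implicit. In particular, your key observation for the quantitative clause --- that any superoperator arising in a one-step derivation has the form $\mathcal{F}^{\tilde q}$ with $\tilde q \subseteq \Sigma_P$, so that $tr(\mathcal{F}^{\tilde q}(\rho))$ factors through $tr_{\overline{\Sigma_P}}(\rho)$ --- is exactly the point the paper relies on.
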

\begin{proof}
	By induction on the rules of the semantics.
\qed\end{proof}
%\begin{definition}
%Given a distribution of lqccs processes $\Delta$, we let $\rho_{\Delta} = \sum_{\conf{\rho, P}} \rho \cdot \Delta(\conf{\rho, P})$.
%\end{definition}

\begin{lemma}\label{thm:simplemoves}
	For any $B[\blank]$, $\Delta$, $\Delta'$, and $t$, if $B[\Delta] \tauarrow_{\singleton{t}} \Delta'$ then one of the following conditions holds
	\begin{itemize}
		\item $\Delta' = B[\Delta'']$ for some $\Delta'' \neq \overline{\bot}$ such that $\Delta \tauarrow_{\singleton{t}} \Delta''$;
		\item $\Delta' = \overline{\bot}$ and $\Delta \tauarrow_{\singleton{t}} \overline{\bot}$ and $B[\Theta] \tauarrow_{\singleton{t}} \overline{\bot}$ for all $\Theta$ such that
		      $\Theta \tauarrow_{\singleton{t}}  \overline{\bot}$; or
		\item $\Delta' = \sum_i p_i B_i[\nE_i(\Delta)]$ for some $p_i$, $B_i[\blank]$,
		      and family of superoperators $\mathcal{E}_i$. 
%		      \green{each being either a trace preserving superoperator, or a trace non-increasing superoperator with Kraus decomposition of cardinality one.}
		      Moreover, for any $\Theta$ such that $env(\Delta) = env(\Theta)$, $B[\Theta] \tauarrow_{\singleton{t}} \sum_i p_i B_i[\nE_i(\Theta)]$.
	\end{itemize}
\end{lemma}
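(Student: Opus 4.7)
The plan is to analyse the derivation of $B[\Delta] \tauarrow_{\singleton{t}} \Delta'$ by first decomposing $\Delta = \sum_i p_i \overline{\iconf_i}$ with $\iconf_i = \conf{\rho_i, P_i}$, so that, by decomposability of the lifted relation, each $B[\iconf_i] \tauarrow_{\singleton{t}} \Delta'_i$ and $\Delta' = \sum_i p_i \Delta'_i$. Since $B[\blank] = [\blank] \parallel Q$ and the scheduler is a single tag $t$, every per-configuration transition must come from either $\rulename{ParL}$ (the tag $t$ fires inside $P_i$), $\rulename{ParR}$ (the tag $t$ fires inside $Q$), or be the $bot$-completion with $\Delta'_i = \overline{\bot}$; the synchronisation rules $\rulename{SynchL}/\rulename{SynchR}$ cannot contribute because their schedulers have shape $(t,t')$.

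Next, I would invoke the deterministic tagging assumption to rule out the overlap of a $\tau$-firing $t$ both in some $P_i$ and in $Q$, since otherwise $B[\iconf_i]$ would admit two distinct reductions. This splits the argument into two disjoint situations. If $Q$ has no $\tau$-action tagged $t$, each non-$\bot$ $\Delta'_i$ arises from $\rulename{ParL}$, hence $\iconf_i \tauarrow_{\singleton{t}} \Delta''_i$ and $\Delta'_i = B[\Delta''_i]$; setting $\Delta'' = \sum_i p_i \Delta''_i$ and using $B[\overline{\bot}] = \overline{\bot}$ yields $\Delta' = B[\Delta'']$ together with $\Delta \tauarrow_{\singleton{t}} \Delta''$. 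If some $\Delta''_i$ is different from $\overline{\bot}$ we fall into case 1; otherwise $\Delta'' = \overline{\bot}$, giving case 2, and the extra closure condition is immediate since the absence of $t$ in $Q$ combined with the lack of $t$-moves in any process of $\Theta$'s support forces $B[\Theta]$ to deadlock as well.

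The remaining situation---producing case 3---is when $Q$ contains a $\tau$-action tagged $t$. Then $\rulename{ParR}$ fires for every $\iconf_i$, because the $\tau$-moves available to $Q$ (a silent action, a $\rulename{QOp}$, or a $\rulename{QMeas}$) do not impose quantum-state preconditions. Applying \autoref{thm:alwaysmove} to $Q$ yields a family of trace non-increasing superoperators $\{\E_j\}$ and residual processes $\{Q_j\}$ such that $\conf{\rho_i, Q} \tauarrow_{\singleton{t}} \sum_j q_{ij}\overline{\conf{\E_j(\rho_i)/\tr(\E_j(\rho_i)), Q_j}}$ for every $i$, with the same skeleton across $i$ and weights $q_{ij} = \tr(\E_j(\rho_i))$. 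Lifting through $\rulename{ParR}$, summing over $i$ and reshuffling the double sum produces $\Delta' = \sum_j p_j B_j[\nE_j(\Delta)]$, where $B_j[\blank] = [\blank]\parallel Q_j$ and $p_j = \sum_i p_i\, q_{ij}$: this matches exactly the definition of $\nE_j$ on distributions, since the normalising factors cancel. The ``moreover'' clause then follows from the second half of \autoref{thm:alwaysmove}: because $\tilde q \subseteq \Sigma_Q$ lies in the environment of $\Delta$, the weights $p_j$ depend only on $env(\Delta)$, so any $\Theta$ with $env(\Theta) = env(\Delta)$ enjoys $B[\Theta] \tauarrow_{\singleton{t}} \sum_j p_j B_j[\nE_j(\Theta)]$.

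The hardest step I expect is the bookkeeping in case 3: rewriting the double sum $\sum_i p_i \sum_j q_{ij} \overline{\conf{\E_j(\rho_i)/\tr(\E_j(\rho_i)), P_i \parallel Q_j}}$ into the compact form $\sum_j p_j B_j[\nE_j(\Delta)]$ demands carefully matching the normalisers implicit in the definition of $\nE_j$ against those produced per-configuration by $\rulename{QOp}$ and $\rulename{QMeas}$, and absorbing the contribution of dead configurations into $\overline{\bot}$ consistently on both sides of the resulting equality.
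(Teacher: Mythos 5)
Your proof is correct and takes essentially the same route as the paper's: an exhaustive $\rulename{ParL}$/$\rulename{ParR}$ case analysis justified by the single-tag scheduler and the deterministic-tagging assumption, with the first two bullets recovered via \autoref{thm:contextpreservesmoves}, \autoref{thm:alwaysbot} and \autoref{thm:bothbot}, and the third via \autoref{thm:alwaysmove} together with decomposability and linearity of $\tauarrow_{\singleton{t}}$. You simply make explicit the weight-normalisation bookkeeping for the third bullet, which the paper's proof leaves implicit.
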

\begin{proof}
	We consider the cases $\Delta \tauarrow_{\singleton{t}} \Delta'' \neq \overline{\bot}$ and its negation, which is equivalent by determinism to $\Delta \tauarrow_{\singleton{t}} \overline{\bot}$.
	In the first case, the transition is recovered thanks to~\autoref{thm:contextpreservesmoves}.
	Assume now that $\Delta \tauarrow_{\singleton{t}} \overline{\bot}$, and note that $B[\Delta] = P \parallel \Delta$ for some $P$.
	By~\autoref{thm:alwaysbot}, we can consider two cases: either $\conf{\rho, P} \tauarrow_{\singleton{t}} \overline{\bot}$ for all $\rho$ or for no $\rho$ at all.
	Moreover, by decomposability, $\confel \tauarrow_{\singleton{t}} \overline{\bot}$ for any $\confel \in supp(\Delta)$.
	Then the second case follows from~\autoref{thm:bothbot} and by linearity of $\tauarrow_{\singleton{t}}$.
	The last case follows similarly by~\autoref{thm:contextpreservesmoves} and~\autoref{thm:alwaysmove}, and by decompasbility and linearity of $\tauarrow_{\singleton{t}}$.
	These cases are exhaustive since our choice is $t$ and thus we have to consider only $\rulename{ParR}$ and $\rulename{ParL}$ moves.
	%$\Delta' = \sum_i p_i B_i[\mathcal{E}(\Delta)]$.
	%Moreover, always by~\autoref{thm:contextpreservesmoves},~\autoref{thm:alwaysmove} and linearity 
	%
	%
	%Assume $\Delta' = B[\Delta'']$ for some $\Delta'' \neq \overline{\bot}$, then the
	%the move is obtained via the $\rulename{ParRight}$ rule thanks to decomposability and linearity of $\tauarrow_t$.
\qed\end{proof}

\begin{lemma}\label{thm:syncmoves}
	For any $B[\blank]$, $\Delta$, $\Delta'$, $t_1$ and $t_2$, if $B[\Delta] \tauarrow_{\singleton{(t_1,t_2)}} \Delta'$ then one of the following conditions holds
	\begin{itemize}
		\item $\Delta' = B[\Delta'']$ for some $\Delta'' \neq \overline{\bot}$ such that $\Delta \tauarrow_{\singleton{(t_1,t_2)}} \Delta''$;
		\item $\Delta' = \overline{\bot}$ and $\Delta \tauarrow_{\singleton{(t_1,t_2)}} \overline{\bot}$ and $B[\Theta] \tauarrow_{\singleton{(t_1,t_2)}} \overline{\bot}$ for all $\Theta$ such that $\Theta \tauarrow_{\singleton{(t_1,t_2)}} \overline{\bot}$;
		      %		\item $\Delta' = \sum_i p_i B'[\nE(\Delta)]$ for some $p_i$ and $B'[\blank]$, and for any $\Theta$ such that $env(\Delta) = env(\Theta)$, $B[\Theta] \tauarrow_{\singleton{(t_1, t_2)}} \sum_i p_i B'[\nE(\Theta)]$; or
		\item $\Delta' = B'[\Delta]$ for some $B'[\blank]$, and for any $\Theta$, $B[\Theta] \tauarrow_{\overline{(t_1,t_2)}} B'[\Theta]$; or
		      %		\item $\Delta' = \sum_i p_i B'[\nE(\Delta)]$ for some $p_i$ and $B'[\blank]$, and for any $\Theta$ such that $env(\Delta) = env(\Theta)$, $B[\Theta] \tauarrow_{\singleton{(t_1, t_2)}} \sum_i p_i B'[\nE(\Theta)]$; or
		      %		      %\item $\Delta' = B'[\Delta]$ for some $B'[\blank]$, and for any $\Theta$, $B[\Theta] \tauarrow_{(c_1,c_2)} B'[\Theta]$; or
		\item $\Delta' = B'[\Delta'']$ for some $\Delta''$ and $B'[\blank]$ such that $\Delta \xlongrightarrow{\mu}_{\singleton{t_i}} \Delta''$ with $i \in \{1,2\}$ and, for each $\Theta$ such that $\Theta \xlongrightarrow{\mu}_{\singleton{t_i}} \Theta'$, $B[\Theta] \tauarrow_{\singleton{(t_1,t_2)}} B'[\Theta']$.
	\end{itemize}
\end{lemma}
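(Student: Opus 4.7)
The proof is by case analysis on the top-level rule used in the derivation of $B[\Delta] \tauarrow_{\singleton{(t_1,t_2)}} \Delta'$, analogously to the proof of \autoref{thm:simplemoves} but adapted to the pair-scheduler case. Since $B[\blank] = [\blank] \parallel Q$ for a fixed process $Q$, every configuration in $\support{B[\Delta]}$ has the form $\conf{\rho, P \parallel Q}$, and by decomposability and linearity of the lifted transition relation it suffices to classify the transition of each such configuration and then aggregate. Inspecting~\autoref{semantics}, the only rules whose conclusion has a parallel process on the left and carries a pair scheduler are $\rulename{ParL}$ and $\rulename{ParR}$ (which lift a pair $(t_1,t_2)$ coming from a nested $\rulename{TauPair}$, $\rulename{SynchL}$ or $\rulename{SynchR}$) together with $\rulename{SynchL}$ and $\rulename{SynchR}$ themselves (which synthesize the pair from two single-tagged send/receive subderivations). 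These four alternatives correspond exactly to the four bullets of the statement.

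The first bullet corresponds to an elementwise application of $\rulename{ParL}$: each configuration transitions as $\conf{\rho, P} \tauarrow_{(t_1,t_2)} \Delta''_\rho$, and aggregating yields $\Delta \tauarrow_{\singleton{(t_1,t_2)}} \Delta''$ with $\Delta' = B[\Delta'']$; the case $\Delta'' = \overline{\bot}$ falls into the second bullet, which is treated in the same way as in the proof of~\autoref{thm:simplemoves}, using~\autoref{thm:alwaysbot} and~\autoref{thm:bothbot} to conclude that the deadlock propagates uniformly. The third bullet corresponds to an elementwise application of $\rulename{ParR}$: since $Q$ is fixed and a pair-scheduled transition inside $Q$ can only arise from $\rulename{TauPair}$ or an internal synchronization, it does not perform a measurement and therefore preserves the quantum state of each configuration (i.e.\ $\conf{\rho, Q} \tauarrow_{(t_1,t_2)} \sconf{\rho, Q'}$ for a uniquely determined $Q'$ independent of $\rho$); this gives $B'[\blank] = [\blank] \parallel Q'$ and $\Delta' = B'[\Delta]$, and the same derivation instantiates on any $\Theta$. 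The fourth bullet corresponds to $\rulename{SynchL}$ or $\rulename{SynchR}$: the side in $\Delta$ contributes a send or receive tagged $t_i$ with label $\mu \in \{c!v, c?v\}$, while $Q$ contributes the dual tagged $t_{3-i}$ with the same $\mu$; uniformity across $\support{\Delta}$ lets us pack the former into $\Delta \xlongrightarrow{\mu}_{\singleton{t_i}} \Delta''$, and the synchronization with $Q$ gives $B'[\blank] = [\blank] \parallel Q'$ so that $\Delta' = B'[\Delta'']$, and the very same $\rulename{Synch}$ rule applies to any $\Theta \xlongrightarrow{\mu}_{\singleton{t_i}} \Theta'$.

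The main obstacle is justifying that the configurations in $\support{\Delta}$ do not mix different cases: in principle some could fire $\rulename{ParL}$ while others fire $\rulename{SynchR}$, which would prevent repackaging the move as a single transition of $\Delta$ or a single context step. Ruling this out uses the deterministic-tagging assumption together with the fact that $Q$ is fixed: if any configuration of $\support{\Delta \parallel Q}$ fires a synchronization with $Q$, the tag $t_{3-i}$ must occur in $Q$, and then by determinism no configuration can fire a $\rulename{TauPair}$-based $\rulename{ParL}$ on $(t_1,t_2)$ on the $\Delta$-side; symmetrically for the other combinations. Once uniformity is established, the remaining bookkeeping—typing preservation, preservation of mass, and the ``for any $\Theta$'' conclusions in bullets three and four—follows from \autoref{thm:quasipreservation}, \autoref{thm:contextpreservesmoves}, and the fact that the rule schemes $\rulename{TauPair}$, $\rulename{SynchL}$, $\rulename{SynchR}$ do not inspect the quantum state.
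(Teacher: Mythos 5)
Your proposal matches the paper's proof in both structure and substance: the paper likewise resolves the four bullets by mapping them onto $\rulename{ParL}$/$\rulename{ParR}$/$\rulename{Synch}$ derivations, reuses the machinery of~\autoref{thm:simplemoves} (via~\autoref{thm:contextpreservesmoves}, \autoref{thm:alwaysbot}, \autoref{thm:bothbot}, decomposability and linearity), and notes exactly your observation that pair-tagged schedulers never update the quantum state, which is why the third bullet simplifies to $B'[\Delta]$. Your extra paragraph ruling out a mix of rule applications across the support via deterministic tagging is a detail the paper leaves implicit, but it does not change the approach.
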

\begin{proof}
	The first three cases coincide with the ones of~\autoref{thm:simplemoves}, where in the third case we exploit the fact that schedulers with a pair of tags never update the quantum state.
	The last case coincides with moves derived using $\rulename{Synch}$, where the synchronization happens between a process in $\Delta$ and one in $B[\blank]$.
\qed\end{proof}

%\begin{lemma}\label{thm:contextdecompose}
%	For any $B[\blank]$ and $t$, either $B[\Delta]_t = B[\Delta]$ for any $\Delta$, or $B[\Delta]_t = B[\Delta_t]$ and $B[\Delta]_{\overline t} = B[\Delta_{\overline t}]$ for any $\Delta$.
%\end{lemma}
%\begin{proof}
%	By~\autoref{thm:alwaysbot} we have to consider only the cases in which $P$ never transitions to $\overline{\bot}$ with $t$ or it always does, with $B[\blank] = P \parallel [\blank]$.
%	In the first case, then it always transition in some $\Theta \neq \overline{\bot}$ and $B[\Delta]_t = B[\Delta]$ by determinism of $\tauarrow_t$.
%	In the second case then $B[\Delta] \tauarrow_t \Theta \neq \overline{\bot}$ if and only if $\Delta \tauarrow_t \Theta'$.
%\qed\end{proof}

\section{Proofs of~\autoref{sec:ass}}
%\label{sec:assessmentappendix}

%\subsection{Proof of \autoref{thm:linearity} and \autoref{thm:propertyA}}\label{uptoappendix}

\begin{proposition}\label{thm:envtrace}
	$tr(env(\Delta)) = 1 - \Delta(\bot)$.
\end{proposition}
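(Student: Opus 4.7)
The plan is to unfold definitions and rely on two elementary facts: that each $\rho \in \density{\hilb}$ appearing in the support of $qs(\Delta)$ has trace $1$, and that the partial trace preserves the total trace (i.e., $tr(tr_{\Sigma'}(\sigma)) = tr(\sigma)$ for any density operator $\sigma$).

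Concretely, I would first observe that by the definition of $qs$, only configurations $\conf{\rho, P} \in \confset$ contribute to the sum, while the $\bot$ component of $\Delta$ is discarded. Hence
\[
qs(\Delta) \;=\; \sum_{\conf{\rho, P} \in \support{\Delta} \setminus \{\bot\}} \Delta(\conf{\rho, P}) \cdot \rho.
\]
Taking the trace and using linearity together with the fact that each $\rho \in \density{\hilb_\Sigma}$ satisfies $tr(\rho) = 1$, this gives
\[
tr(qs(\Delta)) \;=\; \sum_{\conf{\rho, P} \in \support{\Delta} \setminus \{\bot\}} \Delta(\conf{\rho, P}) \;=\; 1 - \Delta(\bot),
\]
which is just $\mass{\Delta}$, as already noted in the paper just before the statement.

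Next I would invoke the standard fact that the partial trace is trace-preserving: if $(\Sigma, \Sigma') \vdash \Delta$, then $qs(\Delta) \in \sdensity{\hilb_\Sigma}$ and $tr_{\Sigma'}$ acts on the tensor factor $\hilb_{\Sigma'}$ of $\hilb_\Sigma = \hilb_{\Sigma \setminus \Sigma'} \otimes \hilb_{\Sigma'}$, so that $tr(tr_{\Sigma'}(qs(\Delta))) = tr(qs(\Delta))$. Chaining the two equalities yields
\[
tr(env(\Delta)) \;=\; tr(tr_{\Sigma'}(qs(\Delta))) \;=\; tr(qs(\Delta)) \;=\; 1 - \Delta(\bot).
\]

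There is no real obstacle here: the statement is essentially a sanity check tying together the three notions $env$, $qs$ and $\mass{\cdot}$. The only mildly delicate point is keeping track of the convention that $\bot$ is typed by any $(\Sigma, \Sigma')$ but does not contribute to $qs(\Delta)$, so that the ``missing mass'' $\Delta(\bot)$ appears on the right hand side.
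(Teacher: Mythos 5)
Your proof is correct and follows essentially the same route as the paper's (which simply cites the definition of $env$ and the fact that $tr(\rho)=1$ for every $\rho$ in a configuration); you merely make explicit the two implicit steps, namely that $\bot$ contributes nothing to $qs(\Delta)$ and that the partial trace $tr_{\Sigma'}$ preserves the total trace.
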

\begin{proof}
		Follows from the definition of $env$, and from the fact that $tr(\rho) = 1$ for any $\rho$ in a configuration $\conf{\rho, P}$.
\qed\end{proof}

We now introduce the notion of \emph{up-to bisimulation}~\cite{sangiorgienhancements2011}, that will be useful in our proofs.
	In particular, we will use bisimulations up-to context closure and up-to convex hull~\cite{bonchipower2017}.
\begin{definition}
	Let $\rel \subseteq \dist{\confbot} \times \dist{\confbot}$. The \emph{convex hull} $Cv(\rel)$ of $\rel$ is the least relation satisfying the following rule
	\[
		\infer{(\sum_{i \in I}\distelem{p_i}{\Delta_i})\,Cv(\rel)\,(\sum_{i \in I} \distelem{p_i}{\Theta_i})}{\forall i \in I \ldotp \Delta_i\,\rel\,\Theta_i}
	\]
%\end{definition}
%\begin{definition}
	%Let $\rel \subseteq \dist{\confbot} \times \dist{\confbot}$. 
	The \emph{context closure} $B(\rel)$ of $\rel$ is the least relation satisfying the following rule
	\[
		\infer{B[\Delta] \, B(\rel) \, B[\Theta]}
		{\Delta \rel \Theta}
	\]
	where $\Sigma \vdash \Delta, \Theta$ and $B[\blank]$ has a hole of type $\Sigma$.
\end{definition}
\begin{definition}[saturated bisimulation up-to]
	A relation $\rel \subseteq \dist{\confbot} \times \dist{\confbot}$ is a
	\emph{saturated bisimulation up-to $Cv \circ B$} if $\Delta\,\rel\,\Theta$ implies
	$(\Sigma, \Sigma') \vdash \Delta$ and $(\Sigma, \Sigma') \vdash \Theta$ for some $\Sigma, \Sigma'$, $\mass{\Delta} = \mass{\Theta}$ and for any context $B[\blank]$ it holds
	\begin{itemize}
		\item whenever $B[\Delta] \tauarrow_\delta \Delta'$, there exists $\Theta'$
		      such that $B[\Theta] \tauarrow_\delta \Theta'$ and $\Delta'\;Cv(B(\rel))\;\Theta'$;
		\item whenever $B[\Theta] \tauarrow_\delta \Theta'$, there exists $\Delta'$
		      such that $B[\Delta] \tauarrow_\delta \Theta'$ and $\Delta'\;Cv(B(\rel))\;\Theta'$.
	\end{itemize}
\end{definition}

%Before proving \autoref{thm:linearity} and \autoref{thm:propertyA}, it is convenient to prove the soundness of the bisimilarity up-to convex hull technique~\cite{bonchi_power_2017}.
We now prove that bisimulation up-to $Cv \circ B$ is \emph{valid}, i.e. that we can use it to prove bisimilarity.
We define the function $b$ over relations, of which saturated bisimilarity is the greatest fix point.
% the function on relations $b$, which is the function whose greatest fix point is saturated bisimilarity
\begin{footnotesize}
	\[
		b(\rel) \coloneqq \left\{
		(\Delta, \Theta) \biggm| \begin{array}{c}
			\mass{\Delta} = \mass{\Theta} \\
			B[\Delta] \tauarrow_\delta \Delta' \Rightarrow \exists \Theta' \  B[\Theta] \tauarrow_\delta \Theta' \wedge \Delta'\,\rel\,\Theta'
			\\
			B[\Theta] \tauarrow_\delta \Theta' \Rightarrow \exists \Delta' \ B[\Delta] \tauarrow_\delta \Delta' \wedge \Delta'\,\rel\,\Theta'
		\end{array}
		\right\}
	\]
\end{footnotesize}

Observe that $b$, $Cv$ and $B$ are monotone functions on the lattice of relations.

\begin{lemma}[$Cv$ is $b$-compatible]\label{lem:cvcomp}
	We have that $\forall \rel\ldotp Cv(b(\rel)) \subseteq b(Cv(\rel))$.
\end{lemma}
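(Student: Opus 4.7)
The plan is to unfold the definition of $Cv(b(\rel))$, push the convex decomposition through any discriminating context, split the observed transition by left-decomposability of the lifted transition relation, match each piece via the $b(\rel)$ hypothesis, and finally reassemble by linearity.

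More precisely, assume $(\Delta, \Theta) \in Cv(b(\rel))$. By the defining rule of $Cv$ there exist an index set $I$, reals $p_i$ summing to $1$, and pairs $\Delta_i \, b(\rel) \, \Theta_i$ such that $\Delta = \sum_{i\in I} \distelem{p_i}{\Delta_i}$ and $\Theta = \sum_{i\in I} \distelem{p_i}{\Theta_i}$. The mass condition is immediate: $\mass{\Delta} = \sum_i p_i \mass{\Delta_i} = \sum_i p_i \mass{\Theta_i} = \mass{\Theta}$, since each $\Delta_i \, b(\rel) \, \Theta_i$ already enforces equal masses. Typing compatibility also transfers along the convex combination, since all $\Delta_i, \Theta_i$ share the same type as $\Delta, \Theta$.

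Next, fix a context $B[\blank]$ and a transition $B[\Delta] \tauarrow_\delta \Delta'$. Since context application commutes with convex combination, $B[\Delta] = \sum_i \distelem{p_i}{B[\Delta_i]}$. Here is the crucial step: by the lifting construction of $\tauarrow_\delta$ (both over schedulers and over input configurations), the lifted relation is linear and left-decomposable; hence there exist distributions $\Delta_i'$ with $B[\Delta_i] \tauarrow_\delta \Delta_i'$ and $\Delta' = \sum_i \distelem{p_i}{\Delta_i'}$. Now, each $\Delta_i \, b(\rel) \, \Theta_i$ yields a matching $\Theta_i'$ with $B[\Theta_i] \tauarrow_\delta \Theta_i'$ and $\Delta_i' \, \rel \, \Theta_i'$. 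Setting $\Theta' = \sum_i \distelem{p_i}{\Theta_i'}$ and invoking linearity of $\tauarrow_\delta$ gives $B[\Theta] = \sum_i \distelem{p_i}{B[\Theta_i]} \tauarrow_\delta \Theta'$. Finally, the defining rule of $Cv$ delivers $(\Delta', \Theta') \in Cv(\rel)$. The symmetric case, starting from a transition of $B[\Theta]$, is dual.

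The key technical obstacle is the use of left-decomposability of $\tauarrow_\delta$ to split $\Delta'$ according to the convex decomposition of $B[\Delta]$. This is where the specific construction of the lifted semantics matters: if the lifting of the pLTS produced a merely linear (but not decomposable) relation, the proof would fail because we could not guarantee that the single target $\Delta'$ admits the right convex structure. Fortunately, the standard lifting $\lift$ of a relation $\mathcal{R} \subseteq A \times \dist{B}$ yields a decomposable relation, and since $\tauarrow_\delta$ is obtained via $\lift_{\confbot}(\lift_{\choiceset}(bot(\longrightarrow)))$, left-decomposability is available. All other steps are routine manipulations of convex combinations and applications of the $b(\rel)$ hypothesis, which amount essentially to chasing the definitions.
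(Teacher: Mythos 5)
Your proposal is correct and follows essentially the same route as the paper's proof: decompose via $Cv$, push the context through the convex combination, split the transition by left-decomposability of $\tauarrow_\delta$, match componentwise using $b(\rel)$, and reassemble by linearity. The only cosmetic difference is that you justify left-decomposability from the lifting construction itself, whereas the paper simply cites the literature for this fact.
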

\begin{proof}
	Assume $(\Delta, \Theta) \in Cv(b(\rel))$. Then it must be $\Delta = \sum_{i \in I}\distelem{p_i} \Delta_i$ and $\Theta = \sum_{i \in I}\distelem{p_i} \Theta_i$ for a certain set of probabilities $\{p_i\}_{i \in I}$, with
	$\Delta_i\,b(\rel)\,\Theta_i$. So for any $i \in I$ we have
	\begin{gather*}
		\Delta_i(\bot) = \Theta_i(\bot) = q_i \\
		\sum_{i \in I} p_i \Delta_i(\bot)  =
		\sum_{i \in I} p_i \Theta_i(\bot) = \sum_{i \in I} p_i q_i = q
	\end{gather*}
	meaning that $\mass{\Delta} = 1 - q = \mass{\Theta}$.

	Suppose $B[\Delta] = \sum_{i \in I} \distelem{p_i} B[\Delta_i]  \tauarrow_\delta \Delta'$.
	From~\cite{hennessyexploring2012}, $\tauarrow_\delta$ is left-decomposable, so it must be $\Delta'
		= \sum_{i \in I} \distelem{p_i} \Delta_i'$ with $B[\Delta_i] \tauarrow_\delta \Delta_i'$. But
	since $\Delta_i\,b(\rel)\,\Theta_i$ it must be $B[\Theta_i] \tauarrow_\delta
		\Theta_i'$ with $\Delta_i'\,\rel\,\Theta_i'$ for any $i \in I$, from which it follows that
	$\sum_{i \in I} \distelem{p_i} B[\Theta_i] \tauarrow_\delta \sum_{i \in I} \distelem{p_i} \Theta_i' = \Theta'$.

	In other words, $\Delta$ and $\Theta$ express the same barb and whenever
	$B[\Delta] \tauarrow_\delta \Delta'$, there exists a transition $B[\Theta]
		\tauarrow_\delta \Theta'$ such that $\Delta'\,Cv(\rel)\,\Theta'$ (the symmetrical
	argument is the same). So we can conclude that $(\Delta, \Theta) \in
		b(Cv(\rel))$.
\qed\end{proof}

\begin{lemma}[$B$ is $b$-compatible]\label{lem:bcomp}
	We have that $\forall \rel\ldotp B(b(\rel)) \subseteq b(B(\rel))$.
\end{lemma}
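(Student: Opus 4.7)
The plan is to unfold the definition of $B(b(\rel))$ and check each clause of $b$ directly. Suppose $(\Delta, \Theta) \in B(b(\rel))$. Then by the single inference rule defining $B$ there exist distributions $\Delta_0, \Theta_0$ with $\Delta_0 \, b(\rel) \, \Theta_0$ and a context $B[\blank]$ such that $\Delta = B[\Delta_0]$ and $\Theta = B[\Theta_0]$. What I need to show is $(B[\Delta_0], B[\Theta_0]) \in b(B(\rel))$.

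For the mass condition, note that $B[\blank]$ only places a fixed process in parallel with each configuration in the support, and $B[\bot] = \bot$, so the weight on $\bot$ is preserved: $\mass{B[\Delta_0]} = \mass{\Delta_0}$ and $\mass{B[\Theta_0]} = \mass{\Theta_0}$. These two values agree by the hypothesis $\Delta_0 \, b(\rel) \, \Theta_0$.

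For the transition condition, take an arbitrary context $B'[\blank]$ and assume $B'[B[\Delta_0]] \tauarrow_\delta \Delta'$. The key observation is that the composition $B'[B[\blank]]$ is again a context, call it $B''[\blank]$: contexts have the shape $[\blank] \parallel P$, and $([\blank] \parallel P) \parallel P'$ falls into the same schema modulo the associativity of $\parallel$ at the semantic level, yielding $B''[\blank] = [\blank] \parallel (P \parallel P')$. Applying the $b(\rel)$-hypothesis for $\Delta_0$ and $\Theta_0$ with this context $B''$, there exists $\Theta'$ such that $B''[\Theta_0] \tauarrow_\delta \Theta'$ and $\Delta' \, \rel \, \Theta'$. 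Re-expanding $B''[\Theta_0] = B'[B[\Theta_0]]$ produces the required matching transition. Finally, $\Delta' \, \rel \, \Theta'$ lifts to $\Delta' \, B(\rel) \, \Theta'$ because the trivial context $[\blank] \parallel \nil$ is a valid context and acts as the identity on distributions (its application does not alter behaviour, typing or $\tau$-transitions), so $\rel \subseteq B(\rel)$. The symmetric direction, starting from a move of $B'[B[\Theta_0]]$, is handled in exactly the same way by swapping the roles of $\Delta_0$ and $\Theta_0$.

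The only delicate point, and the one I expect to be the main obstacle to state cleanly, is the closure of contexts under composition. Everything else is transport of the $b(\rel)$ property through a bigger context and a trivial observation about $\rel \subseteq B(\rel)$. If the paper's syntactic definition of contexts is taken strictly (so that $B'[B[\blank]]$ is literally $([\blank] \parallel P) \parallel P'$ rather than an element of the BNF), one simply appeals to the fact that, from the transition system's point of view, the two forms are interderivable via the rules $\rulename{ParL}$ and $\rulename{ParR}$ so that the same transitions are available on either side, which is enough for the proof to go through.
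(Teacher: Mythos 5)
Your proposal is correct and follows essentially the same route as the paper's proof: the mass condition via $B[\bot]=\bot$ and linearity of context application, and the transition condition by composing the outer context with the inner one into $[\blank] \parallel (P \parallel P')$, justified by the associativity of $\parallel$ (which the paper phrases as $B'[B[\Delta]] \sim_s B''[\Delta]$ and you phrase as interderivability of transitions via \rulename{ParL}/\rulename{ParR}). Your extra remark that $\rel \subseteq B(\rel)$ via a trivial context is left implicit in the paper (which simply treats $B[\blank]=\blank$ as a trivial case) but does not change the argument.
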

\begin{proof}
	Assume $\Delta\ b(\rel)\ \Theta$, we will check that 
	$B[\Delta]$ and $B[\Theta]$
%	$B[\Delta], B[\Theta]$ 
	are in $b(B(\rel))$. The case for $B[\blank] = \blank$ is trivial.
	For the first condition, it is easy to see that $B[\Delta](\bot) = \Delta(\bot)$, because the context $B[\blank]$ is applied linearly and $B[\bot] = \bot$.
	Thus, we have that $\mass{B[\Delta]} = \mass{\Delta} = \mass{\Theta} = \mass{B[\Theta]}$.
	For the second one, we can show that the parallel operator is associative, $(P \parallel Q) \parallel R) \sim_s P \parallel (Q \parallel R)$. Therefore, the desired condition on $B'[B[\Delta]] = (\Delta \parallel R) \parallel R'$ follows from the fact that $B'[B[\Delta]] \sim_s B''[\Delta] = \Delta \parallel (R \parallel R')$ and the fact that $\Delta\ b(\rel)\ \Theta$.
\qed\end{proof}

\begin{restatable}{theorem}{uptoValid}\label{thm:uptovalid}
	Bisimulation up-to $Cv \circ B$ is a valid proof technique for $\sim_S$, meaning that if $\Delta\ \rel\ \Theta$ for a saturated bisimulation up-to 
	$Cv \circ B$,
%	 $\rel$, 
	 then $\Delta \sim_s \Theta$.
\end{restatable}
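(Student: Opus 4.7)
My plan is to invoke the standard compatibility machinery for up-to techniques (cf.~\cite{sangiorgienhancements2011}), of which the two preceding lemmas constitute the non-trivial content. The skeleton of the argument is:
\begin{enumerate}
    \item Observe that $b$, $Cv$, and $B$ are all monotone operators on the complete lattice of relations over $\dist{\confbot}$, and that $\sim_s$ is by construction the greatest fixed point of $b$.
    \item Compose Lemma~\ref{lem:cvcomp} and Lemma~\ref{lem:bcomp} to show that $Cv \circ B$ is itself $b$-compatible, i.e.\ that $Cv(B(b(\rel))) \subseteq b(Cv(B(\rel)))$ for every $\rel$. Using monotonicity of $Cv$ and the $b$-compatibility of $B$ we get $Cv(B(b(\rel))) \subseteq Cv(b(B(\rel)))$, and then by the $b$-compatibility of $Cv$ applied to the relation $B(\rel)$ we conclude $Cv(b(B(\rel))) \subseteq b(Cv(B(\rel)))$.
    \item Read off the hypothesis ``$\rel$ is a saturated bisimulation up-to $Cv \circ B$'' as the inclusion $\rel \subseteq b(Cv(B(\rel)))$, noting that the clauses on mass and typing are absorbed into the definition of $b$.
\end{enumerate}

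Once $Cv \circ B$ is known to be $b$-compatible, the standard enhancement theorem applies: setting $f = Cv \circ B$ and $\rel' = \bigcup_{n \ge 0} f^n(\rel)$, one checks by induction on $n$ (using monotonicity of $f$ and the inclusion $\rel \subseteq b(f(\rel))$) that $f^n(\rel) \subseteq b(f^{n+1}(\rel)) \subseteq b(\rel')$; so $\rel' \subseteq b(\rel')$, i.e.\ $\rel'$ is a saturated bisimulation. Since $\rel \subseteq \rel'$ and $\rel'$ is contained in the greatest fixed point of $b$, we obtain $\rel \subseteq \sim_s$, which is the desired conclusion.

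I expect no real obstacle beyond bookkeeping: the substantive verification that context application and convex combinations interact well with the transition relation has already been carried out in Lemmas~\ref{lem:cvcomp} and~\ref{lem:bcomp} (relying on left-decomposability of $\tauarrow_\delta$ and on associativity of $\parallel$ up to $\sim_s$). The only point requiring minor care is the mass clause: one must check that both $B$ and $Cv$ preserve the mass condition, which is immediate since $B[\bot] = \bot$ (so $B[\Delta](\bot) = \Delta(\bot)$) and since convex combinations of distributions with equal mass still have equal mass. With these ingredients in place, the proof reduces to citing the abstract compatibility principle and spelling out the three bullets above.
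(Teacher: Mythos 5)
Your proposal is correct and follows essentially the same route as the paper: the paper's proof simply cites the abstract facts from~\cite{sangiorgienhancements2011} that compositions of $b$-compatible functions are $b$-compatible and that compatibility implies soundness of the up-to technique, whereas you additionally spell out the standard proofs of those two facts. The substantive content is, in both cases, Lemmas~\ref{lem:cvcomp} and~\ref{lem:bcomp}, which you invoke exactly as the paper does.
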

\begin{proof}
	From~\cite{sangiorgienhancements2011}, we know that when two function $f_1, f_2$ are b-compatible, then also $f_1 \circ f_2$ is b-compatible.  Furthermore, bisimulations up-to $f$ are a sound proof technique whenever $f$ is compatible.
\qed\end{proof}

Besides, this also allows us to show as a corollary that $\sim_{s}$ is linear.

\linearityCongruence*
\begin{proof}
	From \autoref{lem:cvcomp} we know that $Cv$ is a b-compatible function.
	We now prove that $f(\sim_{cs}) \subseteq\ \sim_{cs}$ for any compatible $f$.
	Since $f$ is $b$-compatible, we have that $f(b(\sim_{cs})) \subseteq b(f(\sim_{cs}))$, and
	since $\sim_{cs}$ is the greatest fix point of $b$, we have $f(\sim_{cs}) \subseteq
		b(f(\sim_{cs}))$, meaning that $f(\sim_{cs})$ is a bisimulation, and so
	$f(\sim_{cs}) \subseteq\ \sim_{cs}$.
\qed\end{proof}

\begin{lemma}\label{thm:botDecomposability}
	For any $p$, $\Delta \psum{p} \bot \sim_s \Theta \psum{p} \bot$ if and only if $\Delta \sim_s \Theta$.
\end{lemma}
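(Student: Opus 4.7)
For the forward direction, I would argue directly by linearity. Observe that $\sim_s$ is reflexive (the identity is a saturated bisimulation), so $\overline{\bot} \sim_s \overline{\bot}$. Combining this with the hypothesis $\Delta \sim_s \Theta$ via \autoref{thm:linearity} with the pair $(\Delta,\overline{\bot})$ and $(\Theta,\overline{\bot})$ yields $\Delta \psum{p} \overline{\bot} \sim_s \Theta \psum{p} \overline{\bot}$ for every $p \in [0,1]$.

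For the backward direction (assuming $p \in (0,1]$, since $p=0$ collapses both sides to $\overline{\bot}$), the plan is to show coinductively that
\[
  \rel = \{(\Delta,\Theta) \mid \Delta \psum{p} \overline{\bot} \sim_s \Theta \psum{p} \overline{\bot} \text{ for some } p \in (0,1]\}
\]
is a saturated bisimulation. Well-typedness transfers from $\Delta \psum{p} \overline{\bot}$ to $\Delta$ because $(\Sigma,\Sigma') \vdash \overline{\bot}$ for any $\Sigma,\Sigma'$, so $\Delta$ and $\Theta$ inherit the common type of the mixed distributions. The mass condition follows immediately from $\mass{\Delta \psum{p} \overline{\bot}} = p\cdot\mass{\Delta}$, $\mass{\Theta \psum{p} \overline{\bot}} = p\cdot\mass{\Theta}$, and $p > 0$.

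The core of the argument is the transfer of transitions. Given a context $B[\blank]_{\Sigma'}$ and a move $B[\Delta] \tauarrow_\delta \Delta'$, I use the fact that $B$ is applied pointwise and $B[\bot] = \bot$, so $B[\Delta \psum{p} \overline{\bot}] = B[\Delta] \psum{p} \overline{\bot}$. Since $\bot \tauarrow_\delta \overline{\bot}$ is the unique transition from $\bot$ under the $bot$-lifting, linearity of the lifted transition yields $B[\Delta] \psum{p} \overline{\bot} \tauarrow_\delta \Delta' \psum{p} \overline{\bot}$. By the assumed saturated bisimilarity, there is some $\Theta''$ with $B[\Theta] \psum{p} \overline{\bot} \tauarrow_\delta \Theta''$ and $\Delta' \psum{p} \overline{\bot} \sim_s \Theta''$. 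Here I invoke left-decomposability of $\tauarrow_\delta$ (used already in~\autoref{lem:cvcomp}) to split $\Theta'' = \Theta_1 \psum{p} \Theta_2$ with $B[\Theta] \tauarrow_\delta \Theta_1$ and $\overline{\bot} \tauarrow_\delta \Theta_2$; the second transition, by Assumption 1 (determinism of tagged reduction), forces $\Theta_2 = \overline{\bot}$. Hence $\Delta' \psum{p} \overline{\bot} \sim_s \Theta_1 \psum{p} \overline{\bot}$, i.e.\ $\Delta' \rel \Theta_1$, as required. The symmetric case is identical.

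The main technical hinge is the step from $B[\Theta] \psum{p} \overline{\bot} \tauarrow_\delta \Theta''$ back to a genuine transition of $B[\Theta]$: it relies on left-decomposability of the lifted transition relation together with the fact that $\bot$ has the unique continuation $\overline{\bot}$. Once these two ingredients are in place, the coinductive step goes through cleanly, and together with the forward direction this establishes the biconditional.
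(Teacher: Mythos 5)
Your proof is correct and follows essentially the same route as the paper: the forward direction by linearity, and the backward direction by showing that $\rel = \{(\Delta,\Theta) \mid \Delta \psum{p} \overline{\bot} \sim_s \Theta \psum{p} \overline{\bot}\}$ is a saturated bisimulation, using the two key facts $\mass{\Delta \psum{p} \bot} = p\mass{\Delta}$ and $B[\Delta \psum{p} \bot] = B[\Delta] \psum{p} \bot$. You merely spell out the decomposability step and the $p=0$ edge case that the paper leaves implicit, which is a sound refinement rather than a different argument.
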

\begin{proof}
	$\Delta \sim_s \Theta$ implies $\Delta \psum{p} \bot \sim_s \Theta \psum{p} \bot$ by linearity. 
	For the other side, it suffices noticing that $\rel = \{\Delta, \Theta  \mid \Delta \psum{p} \bot \sim_s \Theta \psum{p} \bot\}$ is a bisimulation, thanks to the facts that $\mass{\Delta \psum{p} \bot} = p\mass{\Delta} = p\mass{\Theta} = \mass{\Theta \psum{p} \bot}$, and $B[\Delta \psum{p} \bot] = B[\Delta] \psum{p} \bot$.
\qed\end{proof}

%\begin{lemma}\label{sum of deterministic is deterministic}
%	If $\Delta, \Theta$ are deterministic, then $\Delta \psum{p} \Theta$ is deterministic for each probability $p$.
%\end{lemma}
%\begin{proof}
%	We will prove that, for any probability $p$
%	\[
%	\mathcal{A} = \left\{  \Delta \psum{p} \Theta \mid \Delta, \Theta \text { are deterministic}\right\}
%	\]
%	is a deterministic set.
%	If $(\Delta \psum{p} \Theta) \tauarrow_\pi \Xi'$, and $(\Delta \psum{p} \Theta) \tauarrow_\pi \Xi''$, since $\tauarrow_\pi$ is decomposable it must be $\Xi' = \Delta' \psum{p} \Theta'$ and $\Xi'' = \Delta'' \psum{p} \Theta''$, with $\Delta \tauarrow_\pi \Delta'$, $\Delta \tauarrow_\pi \Delta''$, $\Theta \tauarrow_\pi \Theta'$ and $\Theta \tauarrow_\pi \Theta''$. But then, since $\Delta$ and $\Theta$ are deterministic, we have $\Delta' \sim_{cs} \Delta''$, $\Theta' \sim_{cs} \Theta''$ and they are all deterministic, so $\Xi', \Xi'' \in \mathcal{A}$ and $\Xi'\sim_{cs} \Xi''$, for linearity of $\sim_{cs}$.
%\qed\end{proof}

\propertyA*
\begin{proof}
	We will prove that $\singleton{\conf{\rho \psum{p} \sigma, P}} \,\sim_s\, \singleton{\conf{\rho, P}} \psum{p} \singleton{\conf{\sigma, P}}$ for any operators $\rho, \sigma$, process $P$ and probability $p$.
	This is sufficient to prove out theorem, as thanks to transitivity and linearity(\autoref{thm:linearity}) of $\sim_s$ we have
	\[ \sum_i \distelem{p_i}{\sconf{\rho_i, P}} \sim_s^* \sconf{\sum_i p_i\cdot \rho_i, P} = \sconf{\sum_j q_j\cdot \sigma_j, P} \sim_s^* \sum_j \distelem{q_j}{\sconf{\sigma_j, P}}.
	\]
	We define
	\[
		\rel = \{(\bot, \bot)\} \cup \left\{ \left(\singleton{\conf{\rho \psum{p} \sigma, P}} \ , \ \singleton{\conf{\rho, P}} \psum{p} \singleton{\conf{\sigma, P}}\right) \mid \rho, \sigma, p, P \right\}
	\]
	and prove that it is a bisimulation up-to $Cv$. That is, we require that if $B[\Delta] \tauarrow_\delta \Delta'$ then $B[\Theta] \tauarrow_\delta \Theta'$ with $\Delta' Cv(\rel) \Theta'$.

	The case $\Delta = \Theta = \bot$ is straightforward. Otherwise, let $\Delta = \singleton{\conf{\nu, P}}, \Theta = \singleton{\conf{\rho, P}} \psum{p} \singleton{\conf{\sigma, P}}$ be in $\rel$, with $\nu = \rho \psum{p} \sigma$. Since they have the same process, they are typed by the same context $\Sigma$. Notice that we do not need to quantify over any context $B[\blank]$, because  $\rel$ is a saturated relation, meaning that if $\Delta, \Theta \in \rel$, then $B[\Delta], B[\Theta] \in \rel$ for any context.

	For the first condition, we have that $\Delta(\bot) = \Theta(\bot) = 0$.

	For the second condition, suppose that $\singleton{\conf{\nu, P}} \tauarrow_\delta \Delta'$.
	When $\Delta' = \bot$, then also $\Theta \tauarrow_\delta \bot$ by~\autoref{thm:alwaysbot} and by the linearity of $\tauarrow$.
	Otherwise, we will proceed by induction on $\tauarrow_\delta$ to prove that $\Theta \tauarrow_\delta \Theta'$ with $\Delta' Cv(\rel) \Theta'$. The ``classical'' cases, which do not modify the quantum state, are trivial, since if $\Delta \tauarrow_\delta \Delta'$ then $\Theta$ can go in $\Theta'$ performing the choice $\delta$ in both configurations, and $\Delta' \rel \Theta'$. The only interesting cases are $\textsc{QOp}$ and $\textsc{QMeas}$.

	In the $\rulename{QOp}$ case,
	\begin{gather*}
		\text{if} \quad
		\singleton{\conf{\nu, t \tags\sop{E}{\tilde{x}}.P'}} \tauarrow_{\singleton{t}} \singleton{\conf{\mathcal{E}^{\tilde{x}}(\nu), P'}} \\
		\text{then} \quad
		\singleton{\conf{\rho, t \tags\sop{E}{\tilde{x}}.P'}} \psum{p} \singleton{\conf{\sigma, t \tags\sop{E}{\tilde{x}}.P'}}\ \tauarrow_{{\singleton{t}}}\ \singleton{\conf{\mathcal{E}^{\tilde{x}}(\rho), P'}} \psum{p} \singleton{\conf{\mathcal{E}^{\tilde{x}}(\sigma), P'}}
	\end{gather*}
	and $\mathcal{E}^{\tilde{x}}(\rho) \psum{p} \mathcal{E}^{\tilde{x}}(\sigma) = \mathcal{E}^{\tilde{x}}(\rho \psum{p} \sigma) = \mathcal{E}^{\tilde{x}}(\nu)$, thanks to linearity of superoperators.

	In the $\rulename{QMeas}$ case, we have
	\[ \singleton{\conf{\nu, t \tags\meas{\tilde{x}}{y}.P'}} \tauarrow_{\singleton{t}}  \Delta'
		\quad
		\text{with }
		\Delta' = \sum_m \distelem{tr_m(\nu)}\singleton{\conf{\nu'_m, P'[\sfrac{m}{y}]}},
		% \quad \nu'_m = \frac{\sop[m]{M}{\nu}}{tr_m(\nu)}
		\quad \nu'_m = \frac{\mathcal{M}_m^{\tilde{x}}(\nu)}{tr_m(\nu)}
	\]
	where $tr_m{\nu} = tr(\mathcal{M}_m^{\tilde{x}}(\nu))$ is the probability of said outcome.
	Then
	\[
		\singleton{\conf{\rho, t \tags\meas{\tilde{x}}{y}.P'}} \psum{p} \singleton{\conf{\sigma, t \tags\meas{\tilde{x}}{y}.P'}}
		\tauarrow_{\singleton{t}} \Theta'_\rho \psum{p} \Theta'_\sigma
	\]
	with
	\begin{align*}
		\Theta'_\rho   & = \sum_m \distelem{tr_m(\rho)} \singleton{\conf{\rho'_m, P'[\sfrac{m}{y}]}} \qquad \rho'_m = \frac{\mathcal{M}_m^{\tilde{x}}(\rho)}{tr_m(\rho)}
		\\
		\Theta'_\sigma & = \sum_m \distelem{tr_m(\sigma)} \singleton{\conf{\sigma'_m, P'[\sfrac{m}{y}]}} \qquad \sigma'_m = \frac{\mathcal{M}_m^{\tilde{x}}(\sigma)}{tr_m(\sigma)}
	\end{align*}
	Observe that $tr_m(\nu) = tr(\mathcal{M}_m^{\tilde{x}}(\rho \psum{p} \sigma))$ is equal to $tr_m(\rho) \psum{p} tr_m(\sigma)$, thanks to linearity of superoperators and trace. So, according to the rules of probability distributions, $\Theta'_\rho \psum{p} \Theta'_\sigma$ can be rewritten as
	\[
		\sum_m \distelem{tr_m(\nu)} (\singleton{\conf{\rho'_m, P'[\sfrac{m}{y}]}} \psum{q} \singleton{\conf{\sigma'_m, P'[\sfrac{m}{y}]}})
	\]
	with $q = \frac{p\cdot tr_m(\rho)}{tr_m(\rho) \psum{p} tr_m{\sigma}}$. It is easy to show that $\rho'_m \psum{q} \sigma'_m = \nu'_m$, from which it follows that
	\[
		\singleton{\conf{\nu'_m, P'[\sfrac{m}{y}]}} \ \rel \ \left(\singleton{\conf{\rho'_m, P'[\sfrac{m}{y}]}} \psum{q} \singleton{\conf{\sigma'_m, P'[\sfrac{m}{y}]}}\right)
	\]
	and $\Delta' \ Cv(\rel)  \ \left(\Theta'_\rho \psum{p} \Theta'_\sigma\right)$.

	For the third condition, observe that thanks to the deterministic tagging there is exactly one transition $\conf{\rho, P} \xlongrightarrow{\tau}_\delta$ going out of $\conf{\rho, P}$ under the scheduler $\delta$. Furthermore, thanks to \autoref{thm:alwaysmove}, that same transition must be replicated also by $\conf{\sigma, P}$ and $\conf{\nu, P}$ under the same scheduler. For this reason we can prove the desired condition, $\Theta \tauarrow_\delta \Theta'$ implies $\Delta \tauarrow_\delta \Delta'$ for some $\Delta' Cv(\rel) \Theta'$, by proceeding by induction on $\conf{\rho, P} \xlongrightarrow{\tau}_\delta$, as the transitions of $\conf{\rho, P}$ are in bijection with the transitions $\Theta \tauarrow_\delta \Theta'$.
	The proof by induction is thus identical to the previous case.

\qed\end{proof}

%
%\section{Proofs of Section~\ref{propertiesOfBisimilarity}}\label{sec:propertiesappendix}
%\input{Sections/appendix_properties}

%\section{Proofs of Section Labelled}
%\label{sec:assessmentappendix}
\section{Proofs of~\autoref{sec:siml}}
\begin{lemma}\label{thm:alphamove}
For any $\Delta$ and $t$, and for any tag $t'$ fresh in $\Delta$, 
\[
\Delta \tauarrow_\delta \Delta' \quad \text{ if and only if }\quad \Delta[\sfrac{t'}{t}] \tauarrow_{\delta[\sfrac{t'}{t}]} \Delta'[\sfrac{t'}{t}]
\]
\end{lemma}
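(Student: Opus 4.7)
The plan is to reduce from the lifted transition relation on distributions down to the single-configuration relation, then proceed by structural induction on the derivation rules.

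First, since $\xlongrightarrow{\mu}_\delta$ is obtained from $\xlongrightarrow{\mu}_s$ on single configurations by two successive liftings (first to distributions of schedulers, then to distributions of configurations), and since tag substitution $[\sfrac{t'}{t}]$ acts linearly on both $\dist{\confbot}$ and $\dist{\choiceset}$, it is enough to establish the claim at the point level: for every $\confel \in \confset$, scheduler $s \in \choiceset$ and action $\mu$, and every $t' \not\in \confel \cup s$, we have $\confel \xlongrightarrow{\mu}_s \Delta'$ iff $\confel[\sfrac{t'}{t}] \xlongrightarrow{\mu}_{s[\sfrac{t'}{t}]} \Delta'[\sfrac{t'}{t}]$. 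The extension to distributions of configurations and schedulers then follows from the linearity and decomposability properties used throughout Section~\ref{sec:lqCCS}, together with the fact that $\bot[\sfrac{t'}{t}] = \bot$, so the $bot(\cdot)$ closure is preserved.

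For the forward direction at the point level, I would induct on the derivation of $\confel \xlongrightarrow{\mu}_s \Delta'$ using the rules in Figure~\ref{semantics}. Each axiom (\rulename{Tau}, \rulename{TauPair}, \rulename{Send}, \rulename{Receive}, \rulename{QOp}, \rulename{QMeas}) carries a tag $t_0$ in both the syntactic prefix and the scheduler subscript; if $t_0 = t$ the renamed configuration matches the rule with $t' $ in place of $t$, and if $t_0 \neq t$ the relevant components are untouched, so the same axiom applies. The congruence rules (\rulename{ITEt/f}, \rulename{SumL/R}, \rulename{Restrict}, \rulename{ParL/R}) are immediate by induction, using that substitution commutes with the process constructors, with evaluation $e \Downarrow v$ (tags do not occur in expressions), and with the side conditions on channels and qubit ownership (which depend on $\Sigma_\rho$, $\Sigma_P$, $\Sigma_Q$, none of which involve tags). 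The synchronization rules \rulename{SynchL/R} require that \emph{both} premises be re-derived with the renamed tags $(t_1[\sfrac{t'}{t}], t_2[\sfrac{t'}{t}])$, which is exactly what the induction hypothesis provides. The reverse direction follows by applying the forward direction to $\confel[\sfrac{t'}{t}]$ with the inverse renaming $[\sfrac{t}{t'}]$; this is a well-defined inverse precisely because $t'$ is fresh in $\Delta$ (and hence in $\confel$ and $s$), so $\confel[\sfrac{t'}{t}][\sfrac{t}{t'}] = \confel$.

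The main subtlety I foresee is the synchronization case together with the role of freshness. Without the freshness assumption on $t'$, renaming could cause a tag already occurring elsewhere in $P$ to coincide with $t'$, and the \rulename{SynchL/R} rules might then produce new spurious synchronization transitions with scheduler $(t', t'')$ that were not available in the original process. Because we quantify over \emph{all} $t'$ fresh in $\Delta$, this collision never arises and both the forward and backward simulations are in exact correspondence. A secondary, routine point is to check that $\Sigma_{P[\sfrac{t'}{t}]} = \Sigma_P$ so that all typing side conditions in \rulename{ParL/R}, \rulename{Receive} and \rulename{Restrict} transfer verbatim; this is immediate since tags carry no qubit ownership information.
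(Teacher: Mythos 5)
Your proof is correct and follows essentially the same route as the paper, whose own argument is simply ``by induction on the operational semantics, with the $\bot$ and lifting cases trivial''; you have filled in the reduction to the point level, the rule-by-rule induction, and the freshness analysis in more detail than the paper provides. No gaps.
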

\begin{proof}
By induction on the operational semantics, then the case of $\bot$ and the lifting is trivial.
\qed\end{proof}

\begin{lemma}\label{thm:alpha}
For any $\Delta$, $\Theta$ and $t$, and for any tag $t'$ fresh in $\Delta$ and $\Theta$, 
\[
\Delta \sim_s \Theta \quad\text{ if and only if }\quad \Delta[\sfrac{t'}{t}] \sim_s \Theta[\sfrac{t'}{t}]
\]
\end{lemma}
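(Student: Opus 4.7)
The ``if'' direction will follow from the ``only if'' direction by applying it with the renaming $[\sfrac{t}{t'}]$: since $t$ does not occur in $\Delta[\sfrac{t'}{t}]$ or $\Theta[\sfrac{t'}{t}]$ (all its occurrences having been replaced by $t'$), and since $\Delta, \Theta$ contained no $t'$ originally, composing the two renamings recovers $\Delta$ and $\Theta$ respectively.

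For the ``only if'' direction I would introduce
\[
\rel\ =\ {\sim_s}\ \cup\ \{(\Delta_0[\sfrac{t'_0}{t_0}], \Theta_0[\sfrac{t'_0}{t_0}]) \mid \Delta_0 \sim_s \Theta_0,\ t'_0 \text{ fresh in } \Delta_0, \Theta_0\}
\]
and prove it is a saturated bisimulation. Equality of masses is immediate, and pairs already in $\sim_s$ are handled trivially. For a non-trivial pair $(\Delta_0[\sfrac{t'_0}{t_0}], \Theta_0[\sfrac{t'_0}{t_0}])$, a context $B[\blank]$, and a move $B[\Delta_0[\sfrac{t'_0}{t_0}]] \xlongrightarrow{\mu}_\delta \Xi$, the plan is to ``clean'' $B$ of the tags $t_0$ and $t'_0$ using Lemma~\ref{thm:alphamove} twice. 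I pick a fresh tag $u$ (not in $B, \Delta_0, \Theta_0$ and distinct from $t_0, t'_0$), apply $[\sfrac{u}{t_0}]$ to the whole transition (a no-op on the distribution, which already lacks $t_0$, but which renames $t_0$ out of $B$), and then apply $[\sfrac{t_0}{t'_0}]$: at this point $t_0$ is fresh in the resulting system, so the lemma applies, and its effect is to simultaneously rename $t'_0$ out of $B$ and undo the original renaming in the distribution, recovering $\Delta_0$. The result is a transition $B^{**}[\Delta_0] \xlongrightarrow{\mu}_{\delta^{**}} \Xi^{**}$ where $B^{**}$ contains neither $t_0$ nor $t'_0$.

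Now I can invoke the bisimilarity $\Delta_0 \sim_s \Theta_0$ with the context $B^{**}$ to obtain a matching move $B^{**}[\Theta_0] \xlongrightarrow{\mu}_{\delta^{**}} \Xi^{**}_\Theta$ with $\Xi^{**} \sim_s \Xi^{**}_\Theta$. Reversing the two renamings via Lemma~\ref{thm:alphamove} then lifts this to the required move $B[\Theta_0[\sfrac{t'_0}{t_0}]] \xlongrightarrow{\mu}_\delta \Xi_\Theta$. It remains to verify that $(\Xi, \Xi_\Theta) \in \rel$: since transitions do not introduce new tags, $t'_0$ is fresh in both $\Xi^{**}$ and $\Xi^{**}_\Theta$ (as they arise from transitions of systems lacking $t'_0$), and a direct computation of the composite substitutions shows that $(\Xi^{**}[\sfrac{t'_0}{t_0}][\sfrac{t_0}{u}], \Xi^{**}_\Theta[\sfrac{t'_0}{t_0}][\sfrac{t_0}{u}])$ coincides with $(\Xi, \Xi_\Theta)$, placing this pair in $\rel$ as the image of the $\sim_s$-related pair $(\Xi^{**}[\sfrac{t_0}{u}], \Xi^{**}_\Theta[\sfrac{t_0}{u}])$ under the fresh renaming $[\sfrac{t'_0}{t_0}]$.

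The main obstacle will be this final bookkeeping: I need to track how each tag occurrence in $\Xi$ originated (from $B$'s $t_0$, $B$'s $t'_0$, or the $\Delta_1$-side $t'_0$ produced by the original rename), and then check that the composite substitution faithfully decomposes $(\Xi, \Xi_\Theta)$ as a fresh $[\sfrac{t'_0}{t_0}]$-rename of a $\sim_s$-related pair. A subtle point is that $B^{**}$ legitimately contains the name $t_0$ in positions originally holding $t'_0$, so the overall recipe is not a single-tag renaming but a composite substitution, and justifying that it commutes the right way with transitions uses Lemma~\ref{thm:alphamove} via an iterated application whose freshness sides-conditions must be verified at each step. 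Once this is settled, symmetry in $\Delta_0, \Theta_0$ provides the converse simulation and $\rel$ is a saturated bisimulation, yielding $\rel \subseteq\ \sim_s$ and hence the lemma.
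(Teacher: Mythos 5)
Your overall strategy is the paper's: transport the move of $B[\Delta_0[\sfrac{t'_0}{t_0}]]$ into the un-renamed world via \autoref{thm:alphamove}, using an auxiliary fresh tag $u$ to move the context's own occurrences of $t_0$ out of the way (the paper's $t''_i$), play the $\sim_s$ game there, and transport back; the reduction of the ``if'' direction to the ``only if'' direction via the reverse renaming is also fine. The gap is exactly where you flag it, in the closing bookkeeping, and your proposed patch does not work. Your $\rel$ allows only \emph{one} layer of renaming over $\sim_s$, but the successor pair $(\Xi,\Xi_\Theta)$ is the image of the $\sim_s$-related pair $(\Xi^{**},\Xi^{**}_\Theta)$ under the \emph{composite} substitution $[\sfrac{t'_0}{t_0}][\sfrac{t_0}{u}]$, which is not a single-tag renaming. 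The decomposition you propose silently commutes the two renamings, and they do not commute on $\Xi^{**}$: whenever the residual of $B$ still contains an action originally tagged $t_0$, the tag $u$ occurs in $\Xi^{**}$, and applying $[\sfrac{t_0}{u}]$ first and $[\sfrac{t'_0}{t_0}]$ second sends that occurrence to $t'_0$, whereas the substitution you actually need sends it to $t_0$. Moreover, even granting the commutation, asserting that $\Xi^{**}[\sfrac{t_0}{u}]$ and $\Xi^{**}_\Theta[\sfrac{t_0}{u}]$ are ``$\sim_s$-related'' is itself an instance of the lemma you are proving (a fresh renaming preserves $\sim_s$), so the argument as written is circular.

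The repair is what the paper does: define $\rel$ as the closure of $\sim_s$ under arbitrary finite \emph{sequences} of fresh single-tag renamings (the paper's $[\sfrac{t_i'}{t_i}]_{i\in I}$). Then $(\Xi,\Xi_\Theta)$ is obtained from the $\sim_s$-pair $(\Xi^{**},\Xi^{**}_\Theta)$ by first applying the renaming $[\sfrac{t'_0}{t_0}]$, whose target $t'_0$ is fresh there because neither $B^{**}$ nor $\Delta_0,\Theta_0$ mention it, and then the renaming $[\sfrac{t_0}{u}]$, whose target $t_0$ is fresh after the previous step; hence the pair lies in $\rel$ directly, with no commutation and no circularity. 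With that change of candidate relation, the rest of your argument goes through and coincides with the paper's proof.
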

\begin{proof}
Let $\rel$ be the smallest relation such that $\sim_s \subseteq \rel$ and $\rel \subseteq \{ (\Delta[\sfrac{t'}{t}], \Theta[\sfrac{t'}{t}]) \mid  \Delta \rel \Theta \text{ and } t' \text{ is fresh in }\Delta, \Theta\}$. 
We will show that $\rel$ is a bisimulation.

In the following, we write $[\sfrac{t_i'}{t_i}]_{i \in I}$ for the substitution $[\sfrac{t_{x_0}'}{t_{x_0}}][\sfrac{t_{x_1}'}{t_{x_1}}] \dots  [\sfrac{t_{x_n}'}{t_{x_n}}]$, with $I = \{x_0, x_1, \dots, x_n\}$.

Take $\Delta[\sfrac{t_i'}{t_i}]_{i \in I}$ and $\Theta[\sfrac{t_i'}{t_i}]_{i \in I}$ in $\rel$, with $\Delta \sim_s \Theta$, and take a generic context $B[\blank] = [\blank] \parallel R$.
Assume that $\Delta[\sfrac{t_i'}{t_i}]_{i \in I} \parallel R \tauarrow_\delta \Delta'$.

Take then a collection of distinct fresh tags $t_i''$.
By~\autoref{thm:alphamove},
\[
(\Delta[\sfrac{t_i'}{t_i}]_{i \in I} \parallel R)
[\sfrac{t_i''}{t_i},\sfrac{t_i}{t_i'}]_{i \in I}
 =
\Delta \parallel R[\sfrac{t_i''}{t_i},\sfrac{t_i}{t_i'}]_{i \in I} \tauarrow_{\delta[\sfrac{t_i''}{t_i},\sfrac{t_i}{t_i'}]_{i \in I}} \Delta'[\sfrac{t_i''}{t_i},\sfrac{t_i}{t_i'}]_{i \in I}.
\]
% $(\Delta[\sfrac{t'}{t}] \parallel R)[\sfrac{t''}{t}][\sfrac{t}{t'}] =
%\Delta \parallel R[\sfrac{t}{t'}] \tauarrow_{\delta[\sfrac{t}{t'}]} \Delta'[\sfrac{t}{t'}]$.
Since $\Delta \sim_s \Theta$, then it must be that $\Theta \parallel R[\sfrac{t_i''}{t_i},\sfrac{t_i}{t_i'}]_{i \in I} \tauarrow_{\delta[\sfrac{t_i''}{t_i},\sfrac{t_i}{t_i'}]_{i \in I}} \Theta'$, with $\Delta'[\sfrac{t_i''}{t_i},\sfrac{t_i}{t_i'}]_{i \in I} \sim_s \Theta'$.
By~\autoref{thm:alphamove}, $(\Theta \parallel R[\sfrac{t_i''}{t_i},\sfrac{t_i}{t_i'}]_{i \in I})[\sfrac{t_i'}{t_i},\sfrac{t_i}{t_i''}]_{i \in I} = \Theta  [\sfrac{t_i'}{t_i}]_{i \in I} \parallel R \tauarrow_{\delta} \Theta'[\sfrac{t_i'}{t_i},\sfrac{t_i}{t_i''}]_{i \in I}$.
We conclude by noting that $\Delta'\ \rel\ \Theta'[\sfrac{t_i'}{t_i},\sfrac{t_i}{t_i''}]_{i \in I}$ because $\Delta' = \Delta'[\sfrac{t_i''}{t_i},\sfrac{t_i}{t_i'}]_{i \in I}[\sfrac{t_i'}{t_i},\sfrac{t_i}{t_i''}]_{i \in I}$.
\qed\end{proof}

\begin{lemma}\label{thm:freshtag}
For any $\Delta$, $\Delta'$, and $B[\blank]$, and for any $\delta$ whose support contains only tags that are fresh in $B[\blank]$,
\[
B[\Delta] \tauarrow_\delta \Delta' \quad \text{ if and only if }\quad \Delta \tauarrow_\delta \Delta'' \text{ and } \Delta' = B[\Delta'']
\]
\end{lemma}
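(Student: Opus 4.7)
The plan is to reduce the statement to a pointwise claim at the level of single configurations and single schedulers, and then lift via the standard decomposition of $\lift_{\confbot}\circ\lift_{\choiceset}$. Write $B[\blank] = [\blank] \parallel R$. The key observation is that whenever a scheduler $s$ uses only tags fresh in $R$, every transition of $\sconf{\rho, P \parallel R}$ on $s$ is forced to come from $\sconf{\rho, P}$ via $\rulename{ParL}$. Indeed, inspecting the tagged rules in \autoref{semantics}: a single tag $s = t$ is only discharged by a ``leaf'' rule inside a chain of $\rulename{ParL}/\rulename{ParR}$, and $\rulename{ParR}$ would require $t$ to appear in $R$; a pair $s = (t_1,t_2)$ could in principle be discharged by $\rulename{ParL}$, $\rulename{ParR}$, $\rulename{SynchL}$, or $\rulename{SynchR}$, but the last three all need at least one of $t_1, t_2$ to be a tag of $R$. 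Hence $\rulename{ParL}$ is forced, and any such transition has the shape $\sconf{\rho, P \parallel R} \tauarrow_s \Delta'' \parallel R = B[\Delta'']$ where $\sconf{\rho, P} \tauarrow_s \Delta''$.

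For the ``$\bot$'' case I would argue symmetrically: if $\sconf{\rho, P}$ admits no $s$-transition, then neither does $\sconf{\rho, P \parallel R}$, because any candidate derivation must again use one of the four compound rules above, and the same freshness argument rules out all but $\rulename{ParL}$, which requires $\sconf{\rho, P}$ itself to move. Combining with the bot extension and the determinism assumption, for every $(\confel, s)$ with $s$ fresh in $R$ there is a unique $\Delta''$ with $\confel \tauarrow_s \Delta''$, and $\confel \parallel R \tauarrow_s B[\Delta'']$ (using the convention $B[\bot] = \bot$).

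Finally, I would lift this pointwise correspondence to distributions. Unfolding the two liftings, a transition $B[\Delta] = \Delta \parallel R \tauarrow_\delta \Delta'$ decomposes into families $\Delta = \sum_j q_j \overline{\confel_j}$, $\delta = \sum_i p_i \overline{s_i}$ with $\confel_j \parallel R \tauarrow_{s_i} \Xi_{ij}$ and $\Delta' = \sum_{ij} p_i q_j \Xi_{ij}$. By the pointwise claim each $\Xi_{ij} = B[\Delta''_{ij}]$ with $\confel_j \tauarrow_{s_i} \Delta''_{ij}$, so re-lifting gives $\Delta \tauarrow_\delta \Delta'' := \sum_{ij} p_i q_j \Delta''_{ij}$, and $\Delta' = B[\Delta'']$ by linearity of context application. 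The converse direction runs the same argument in reverse: decompose $\Delta \tauarrow_\delta \Delta''$ and apply $\rulename{ParL}$ pointwise, using determinism to match decompositions. The only delicate point is to be consistent about the $\bot$ cases; this is why I would handle the ``no move available'' branch explicitly via the bot extension and invoke the determinism assumption to pin down the unique $\Delta''$, rather than relying on $\rulename{ParL}$ alone.
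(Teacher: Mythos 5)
Your proposal is correct and takes essentially the same route as the paper, whose proof is simply the one-line appeal ``by the rules for the parallel composition'': the freshness of the scheduler's tags in $R$ rules out $\rulename{ParR}$, $\rulename{SynchL}$ and $\rulename{SynchR}$, forcing $\rulename{ParL}$ (or the $\bot$ branch), after which the lifting to distributions is routine. Your version just spells out the case analysis and the $\bot$/determinism bookkeeping that the paper leaves implicit.
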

\begin{proof}
By the rules for the parallel composition.
\qed\end{proof}

\begin{lemma}\label{thm:ignoreSending}
	For any $\Sigma, \Sigma' \vdash \Delta, \Theta$, for any $q$ in $\Sigma \setminus \Sigma'$, we have that 
	$\Delta \sim_s \Theta$ if and only if
	$\Delta \parallel t\tags c!q \sim_s \Theta \parallel t\tags c!q$ for some $t$ fresh both in $\Delta$ and $\Theta$.
\end{lemma}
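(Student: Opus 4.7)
The plan is to prove the two implications separately, using different techniques for each. The $(\Rightarrow)$ direction will reduce to the parallel-congruence property of $\sim_s$: the context $B[\blank]_{\Sigma'} = [\blank]_{\Sigma'} \parallel t\tags c!q$ is well-typed (the added component being typed $\{q\} \vdash t\tags c!q$ via \rulename{QSend} and \rulename{Nil}, with the \rulename{Hole} rule accepting it since $q \in \Sigma \setminus \Sigma'$), and since $\sim_s$ quantifies universally over contexts, $\Delta \sim_s \Theta$ immediately yields $\Delta \parallel t\tags c!q \sim_s \Theta \parallel t\tags c!q$.

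For the $(\Leftarrow)$ direction, the key idea will be to exhibit a \emph{single labelled transition} that strips off the added $t\tags c!q$, working with $\sim_l$ via the equivalence $\sim_s \,=\, \sim_l$ of \autoref{thm:corrcompl}. Using \rulename{Send} followed by \rulename{ParR}, the distribution $\Delta \parallel t\tags c!q$ performs $\xlongrightarrow{c!q}_{\overline{t}} \Delta \parallel \nil$, and bisimilarity forces $\Theta \parallel t\tags c!q$ to match under the same label and scheduler. I expect the main obstacle to lie in arguing that this match must be $\Theta \parallel \nil$: since $t$ is fresh in $\Theta$, no move of $\Theta$ alone carries tag $t$ (ruling out \rulename{ParL}), and since $\overline{t}$ is a single-tag scheduler it cannot trigger a synchronization (which would require a pair scheduler); uniqueness then follows from the deterministic-tagging assumption. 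Hence $\Delta \parallel \nil \sim_l \Theta \parallel \nil$.

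The argument will then conclude via the auxiliary fact $\Xi \parallel \nil \sim_l \Xi$ for arbitrary $\Xi$, witnessed by the relation $\rel = \{(\Xi \parallel \nil, \Xi), (\Xi, \Xi \parallel \nil), (\bot,\bot)\}$ closed under convex combinations: the process $\nil$ owns no qubits and has no transitions, so quantum state, environment, superoperator closure, and all labelled transitions coincide on the two sides. Transitivity of $\sim_l$ then yields $\Delta \sim_l \Delta \parallel \nil \sim_l \Theta \parallel \nil \sim_l \Theta$, which by \autoref{thm:corrcompl} gives the desired $\Delta \sim_s \Theta$.
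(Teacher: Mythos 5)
Your $(\Rightarrow)$ direction matches the paper: it is immediate from $\sim_s$ being a congruence for $\parallel$ by definition. The $(\Leftarrow)$ direction, however, has a genuine gap: it is circular. You invoke \autoref{thm:corrcompl} ($\sim_s\,=\,\sim_l$) to transfer the hypothesis into the labelled setting, strip the sender via an output transition, and transfer back. But in this paper the proof of \autoref{thm:corrcompl} rests on \autoref{thm:complete} and \autoref{lem:cinesi}, and \emph{both} of those explicitly appeal to \autoref{thm:ignoreSending} --- \autoref{lem:cinesi} uses it to discharge the trailing $\tilde{t}\tags c!\tilde{q}$ after applying a superoperator-testing context, and \autoref{thm:complete} uses it in the $\mu = c!v$ case for quantum names. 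So you cannot assume the saturated/labelled equivalence here without first reproving those results by other means. The intuition behind your transition-based argument is sound (and the freshness/determinism reasoning about why the match must be $\Theta \parallel \nil$ is correct as far as it goes), but the tool you reach for is not yet available at this point in the development.

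The paper instead proves the hard direction directly as a saturated bisimulation: it shows that $\rel = \{(\Delta,\Theta) \mid \Delta \parallel t\tags c!q \sim_s \Theta \parallel t\tags c!q\}$ is a bisimulation, case-splitting on whether the discriminating context $B[\blank] = [\blank] \parallel R$ owns the qubit $q$. When it does, the paper wraps $R$ in a receiver $t'\tags c?x.R[\sfrac{x}{q}][\sfrac{t''}{t}]$ with fresh tags and uses the synchronization $(t,t')$ to consume the sender --- this is the saturated-world analogue of your labelled output step. When it does not, the paper needs the tag-renaming machinery (\autoref{thm:alpha}, \autoref{thm:alphamove}, \autoref{thm:freshtag}) to show that the idle sender cannot interfere with any move of $B[\Delta]$; this second case is entirely absent from your proposal, and it is precisely the part that your appeal to $\sim_l$ was silently absorbing. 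To repair your proof you would either have to carry out this context analysis yourself, or restructure the paper so that the equivalence of $\sim_s$ and $\sim_l$ is established independently of this lemma.
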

\begin{proof}
	$\Delta \sim_s \Theta$ implies $\Delta \parallel t\tags c!q \sim_s \Theta \parallel t\tags c!q$ because saturated bisimilarity is a congruence by definition. 
	For the other side, it suffices showing that $\rel = \{(\Delta, \Theta)  \mid \Delta \parallel t\tags c!\tilde{q} \sim_s \Theta \parallel t\tags c!\tilde{q}\}$ is a bisimulation.
	
	Take $(\Delta,\Theta) \in \rel$ and $\Sigma'' \vdash B[\blank]_{\Sigma'} = [\blank] \parallel R$.
	We consider two cases depending on $q$ being in $\Sigma''$ or not.
	
	If $q \in \Sigma''$, then let $\Sigma'' \vdash B'[\blank]_{\Sigma' \cup \{q\}} = [\blank] \parallel t'\tags c?x.R[\sfrac{x}{q}][\sfrac{t''}{t}]$, with $t'$ and $t''$ fresh tags in $\Delta, \Theta$ and $R$.
	We have then that $B'[\Delta \parallel t\tags c!q] \tauarrow_{{\singleton{(t,t')}}} \Delta \parallel R[\sfrac{t''}{t}]$, and 
	$B'[\Theta \parallel t\tags c!q] \tauarrow_{{\singleton{(t,t')}}} \Theta \parallel R[\sfrac{t''}{t}]$.
	Since $\Delta \parallel t\tags c!q \sim_s \Theta \parallel t\tags c!q$, we know that $\Delta \parallel R[\sfrac{t''}{t}] \sim_s \Theta \parallel R[\sfrac{t''}{t}]$, and the property follows by~\autoref{thm:alpha}.
	
	If $q \notin \Sigma''$, then let $\Sigma'' \vdash B'[\blank]_{\Sigma'} = [\blank] \parallel R[\sfrac{t'}{t}]$, with $t' \neq t$ fresh in $\Delta, \Theta$ and $R$.
	Then, assume $B[\Delta] \tauarrow_{\delta} \Delta'$. 
	Notice that $B[\Delta][\sfrac{t'}{t}] = B'[\Delta]$ since $t$ is fresh in $\Delta$.
	Hence, $B'[\Delta] \tauarrow_{\delta'} \Delta'[\sfrac{t'}{t}]$, with $\delta' = \delta[\sfrac{t'}{t}]$, by~\autoref{thm:alphamove}.
	Clearly, $t$ does not appear in $\delta$, hence, by~\autoref{thm:freshtag}, $B'[\Delta \parallel t\tags c!q] \tauarrow_{\delta'} \Delta'[\sfrac{t'}{t}] \parallel t\tags c!q$ (where we exploit the associativity of the parallel operator).
	Since $\Delta \parallel t\tags c!q \sim_s \Theta \parallel t\tags c!q$, we know that
	$B'[\Theta \parallel t\tags c!q] \tauarrow_{\delta'} \Theta'$, with $\Delta'[\sfrac{t'}{t}] \parallel t\tags c!q \sim_s \Theta'$.
	Moreover, by~\autoref{thm:freshtag}, $\Theta'$ is of the form $\Theta'' \parallel t\tags c!q$, and $B'[\Theta] \tauarrow_{\delta'} \Theta''$.
	By~\autoref{thm:alphamove}, we then know that $\Theta'' = \Theta'''[\sfrac{t'}{t}]$ for some $\Theta$ such that $B[\Theta] \tauarrow_{\delta} \Theta'''$.
	We know that $\Delta'[\sfrac{t'}{t}] \parallel t\tags c!q \sim_s \Theta'''[\sfrac{t'}{t}] \parallel t\tags c!q$, hence, by~\autoref{thm:alphamove}, 
	\begin{align*}
	(\Delta'[\sfrac{t'}{t}] \parallel t\tags c!q)[\sfrac{t}{t'}, \sfrac{t'}{t}] = \Delta' \parallel t'\tags c!q \sim_s 
	\Theta''' \parallel t\tags c!q = 
	(\Theta'''[\sfrac{t'}{t}] \parallel t\tags c!q)[\sfrac{t}{t'}, \sfrac{t'}{t}].
	\end{align*}
	Therefore, $(\Delta', \Theta''') \in \rel$.
\qed\end{proof}

In the following we will write $\tilde{t}\tags c!\tilde{q}$ for the process 
\[
t_0\tags c!q_0 \parallel t_1\tags c!q_1 \parallel \dots \parallel t_n\tags  c!q_n,
\]
with $c$ any channel, and $\tilde{t} = t_0, t_1, \dots t_n$ and $\tilde{q} = q_0, q_1, \dots q_n$ any sequences of (distinct) tags and qubit names.

\begin{lemma}\label{lem:cinesi}
	Let $\Delta \sim_s \Theta$. Then
	\begin{itemize}
		\item $env(\Delta) = env(\Theta)$;
%		\item $\nE(\Delta) \sim_s \nE(\Theta)$ for any trace-preserving superoperator over the environment;
			\item $\nE(\Delta) \sim_s \nE(\Theta)$ for any trace non-increasing superoperator over the environment.
	\end{itemize}
\end{lemma}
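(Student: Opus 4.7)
My plan is to prove each item by contradiction, exhibiting a distinguishing context whenever the desired property fails.

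For the first item, if $env(\Delta) \neq env(\Theta)$, then by~\autoref{thm:qind} there exists a projective measurement $\mathbb{M} = \{M_m\}$ on the environment qubits $\tilde{q} = \Sigma \setminus \Sigma'$ and an outcome $m^*$ whose probability $p_{m^*}(\cdot) = tr(M_{m^*} env(\cdot) M_{m^*}^\dag)$ differs on $\Delta$ and $\Theta$. The distinguishing context I would use is $B[\blank]_{\Sigma'} = [\blank] \parallel t_0\tags\meas[]{\tilde{q}}{y}.\ite{y = m^*}{t_1\tags\tau.\nil_{\tilde{q}}}{\nil_{\tilde{q}}}$ with $t_0, t_1$ fresh. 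Under the scheduler $\singleton{t_0}$ followed by $\singleton{t_1}$, $B[\Delta]$ reduces to a distribution of mass exactly $p_{m^*}(\Delta)$, since the $y \neq m^*$ branches are stuck at $\nil_{\tilde{q}}$ and collapse to $\bot$; analogously $B[\Theta]$ reaches mass $p_{m^*}(\Theta)$, and the resulting mass mismatch contradicts $\Delta \sim_s \Theta$.

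For the second item, I would assume $\nE(\Delta) \not\sim_s \nE(\Theta)$, witnessed by some context $B'[\blank]$, and construct a distinguishing context $B[\blank]$ for $\Delta$ and $\Theta$. Writing $\E = \{E_1, \ldots, E_k\}$, I would extend it to a measurement $\mathbb{M} = \{E_0, E_1, \ldots, E_k\}$ with $E_0^\dag E_0 = \I - \sum_{i \geq 1} E_i^\dag E_i$, and build $B[\blank]$ so that it first measures $\tilde{q}$ with $\mathbb{M}$, deadlocks the $y = 0$ branch, and then lets (a variant of) $B'[\blank]$ act on the surviving branches. On each original configuration $\sconf{\rho, P}$, the post-measurement sub-distribution is $\sum_{m > 0} tr(E_m \rho E_m^\dag) \sconf{E_m \rho E_m^\dag / tr(E_m \rho E_m^\dag), P'}$; its density operator is $\E(\rho)$, so~\autoref{thm:propertyA} makes it bisimilar to the single mixed-state configuration $\sconf{\E(\rho)/tr(\E(\rho)), P'}$ scaled by $tr(\E(\rho))$ and completed by $\bot$. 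Aggregating over configurations, the distribution reached by $B[\Delta]$ is bisimilar to $(\nE(\Delta) \parallel R'') \psum{p_\E(\Delta)} \bot$, for $R''$ the process in $B[\blank]$ emulating $B'[\blank]$ and $p_\E(\Delta) = tr(\E(qs(\Delta)))$. Since the first item gives $p_\E(\Delta) = p_\E(\Theta)$,~\autoref{thm:botDecomposability} allows me to peel off the common $\bot$, so the distinguishing behaviour of $B'[\blank]$ transfers back through $B[\blank]$, contradicting $\Delta \sim_s \Theta$.

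The hardest step will be the typing reconciliation in the second item: after the measurement, the context process in $B[\blank]$ owns $\tilde{q}$, whereas in $\nE(\Delta)$ these qubits remain environmental and may be accessed by the sub-processes of $B'[\blank]$. I expect to handle this by a case analysis on whether $B'[\blank]$'s process owns parts of $\tilde{q}$, combined with hidden-channel forwarding of $\tilde{q}$ back to $B'[\blank]$'s sub-processes in the spirit of~\autoref{thm:ignoreSending}, together with an auxiliary lemma allowing the $\nil_{\tilde{q}}$ introduced on the dead branch to be absorbed without affecting the scheduled reductions that witness the distinction.
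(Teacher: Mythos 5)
Your first item matches the paper's proof essentially step for step: the paper likewise exhibits a two-outcome measurement on the environment qubits (built from a positive operator $E$ with $tr(E \cdot env(\Delta)) \neq tr(E \cdot env(\Theta))$ rather than a projector, which is immaterial) followed by a tagged $\tau$ enabled on exactly one outcome, and derives the contradiction from the mass condition of $\sim_s$. Note only that \autoref{thm:qind} as stated gives one implication; you are invoking its (true, standard) converse, which deserves a word.

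For the second item your core construction --- completing $\E$ to a measurement with one extra outcome, filtering that outcome with a tagged $\tau$, merging the surviving Kraus branches into $\nE(\cdot)$ via \autoref{thm:propertyA}, and peeling off the common $\bot$ via \autoref{thm:botDecomposability} --- is exactly the paper's. The genuine gap is the contradiction framing: you assume $\nE(\Delta) \not\sim_s \nE(\Theta)$ is ``witnessed by some context $B'[\blank]$'' and embed $B'$ into $B$. Non-bisimilarity of $\sim_s$ is not witnessed by a single static context: refuting it requires an adaptive strategy that picks a fresh context after every exchange of moves, so a one-shot $B'$ planted inside $B$ cannot replay that strategy, and the ``transfer back through $B[\blank]$'' step does not go through as stated. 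The paper avoids this entirely by arguing in the positive direction: since $\Delta \sim_s \Theta$, whatever $B[\Delta]$ reaches must be $\sim_s$-related to what $B[\Theta]$ reaches, and the reached distributions are, up to \autoref{thm:propertyA}, $(\nE(\Delta) \parallel \tilde{t}\tags c!\tilde{q}) \psum{p_{\E}} \bot$ and $(\nE(\Theta) \parallel \tilde{t}\tags c!\tilde{q}) \psum{q_{\E}} \bot$; transitivity of $\sim_s$ together with \autoref{thm:botDecomposability} and \autoref{thm:ignoreSending} then yields $\nE(\Delta) \sim_s \nE(\Theta)$ without ever extracting a witness of non-bisimilarity. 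This also dissolves the typing reconciliation you flag as hardest: the context never hands $\tilde{q}$ to an embedded $B'$, it merely offers to send them on a fresh channel, and \autoref{thm:ignoreSending} strips that offer off, returning the qubits to the environment. (The paper additionally treats trace-preserving $\E$ separately with the simpler context $[\blank] \parallel t\tags \E(\tilde{q}).\tilde{t}\tags c!\tilde{q}$, where no measurement is needed.) Recasting your argument in this direct form, with the send-back suffix in place of the embedded $B'$, recovers the paper's proof.
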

\begin{proof}
	For the first point, suppose that $env(\Delta) \neq env(\Theta)$, we will show that $\Delta \not\sim_s \Theta$.
		      Since $env(\Delta)  \neq env(\Theta)$, then there exists a positive linear operator $E$ such that $tr(E \cdot env(\Delta)) \neq tr(E \cdot env(\Theta))$~\cite{heinosaariMathematicalLanguageQuantum2011}.
		      We can construct the measurement $\mathbb{M}_{E}$ with measurement operators $\{\sqrt{E}, \sqrt{\mathbb{I} - E}\}$ and the context
		      $B[\blank] = [\blank] \parallel t_0\tags{\mathbb{M}_{E}}(\tilde{q} \rhd x).R$, where $R = \ite{x = 0}{t_1\tags\tau.\nil_{\tilde{q}}}{\nil_{\tilde{q}}}$, $t_0, t_1$ are fresh tags and $\tilde{q}$ is the tuple of all the qubits outside of $\Delta$.
		      For $\Delta = \sum_i \distelem{p_i}{\sconf{\rho_i, P_i}}$ and $\Theta = \sum_j \distelem{p_j}{\sconf{\rho_j, P_i}}$ we have
		      \begin{align*}
			      B[\Delta]
			       & \tauarrow_{\singleton{t_0}} \sum_i \distelem{p_i}{\left(\sconf{\rho_i^E, P_i \parallel R[0/x]} \psum{q_i} \sconf{\rho_i^{\not E}, P_i \parallel R[1/x]}\right)}  \\
			       & \tauarrow_{\singleton{t_1}} \left(\sum_i \distelem{\frac{p_iq_i}{r_\Delta}}{\sconf{\rho_i^E, P_i \parallel \nil_{\tilde{q}}}}\right) \psum{r_\Delta} \bot
		      \end{align*}
		      and
		      \begin{align*}
			      B[\Theta]
			       & \tauarrow_{\singleton{t_0}} \sum_j \distelem{p_j}{\left(\sconf{\rho_j^E, P_j \parallel R[0/x]} \psum{q_j} \sconf{\rho_j^{\not E}, P_j \parallel R[0/x]}\right)}  \\
			       & \tauarrow_{\singleton{t_1}} \left(\sum_j \distelem{\frac{p_jq_j}{r_\Theta}}{\sconf{\rho_j^E, P_j \parallel \nil_{\tilde{q}}}}\right) \psum{r_\Theta} \bot
		      \end{align*}
		      where $\rho_n^E$ (respectively $\rho_n^{\not E}$) is the state resulting from the successful (failed) measurement of $E$ on $\rho_n$, $q_n$ is the probability $tr(E\otimes \mathbb{I} \cdot \rho_n)$, and $r_\Delta$ (respectively $r_\Theta$) is equal to $\sum_i p_iq_i$ ($\sum_jp_jq_j$).
		      We have that $tr(E \cdot env(\Delta)) = tr(E\otimes I \cdot \sum_ip_i\rho_i) = r_\Delta$, and thus $r_\Delta \neq r_\Theta$ and $\Delta, \Theta$ are not bisimilar.
		      
		 For the second point, we consider separately the case of a trace-preserving superoperator $\mathcal{E}$.
		 In this case, we take the distinct fresh tags $t$ and $\tilde{t}$, and we build a context $B[\blank] = [\blank] \parallel t\tags\E(\tilde{q}). \tilde{t}\tags c!\tilde{q}$.
		After a $\xlongrightarrow{\tau}_{t}$ transition, $\Delta$ goes in $\nE(\Delta) \parallel \tilde{t}\tags c!\tilde{q}$ and $\Theta$ goes in $\nE(\Theta) \parallel \tilde{t}\tags c!\tilde{q}$, from which we know $\nE(\Delta) \sim_s \nE(\Theta)$ thank to \autoref{thm:ignoreSending}.
%			If $\E$ is trace non-increasing, than it has a Kraus decomposition $\{K_i\}$ such that $\sum_i K_i^\dagger K_i \sqsubseteq \mathbb{I}$. 
%			We define $E = \sum_i K_i^\dagger K_i$.
%			We can construct the measurement $\widehat{\mathbb{M}}$ with measurement operators $\{\sqrt{E}, \sqrt{\mathbb{I} - E}\}$ and the context
%			$B[\blank] = \blank \parallel t_0\tags\widehat{\mathbb{M}}(\tilde{q} \rhd x).R$, where $R = \ite{x = 0}{t_1\tags\tau.\nil_{\tilde{q}}}{\nil_{\tilde{q}}}$, $t_0, t_1$ are fresh tags and $\tilde{q}$ is the tuple of all the qubits outside of $\Delta$.
%		\item If $\E = \{K\}$ is trace non-increasing, we proceed as the first bullet point. 
%\green{		We take a measurement $\mathbb{M}_K$ with $K$ its measurement operator associated with the outcome $0$.
%%		 $ = \{K, \sqrt{\I - K^\dagger K}\}$ which applies $\E$ and 
%		 Take then the context $B[\blank] = [\blank] \parallel t_0\tags\mathbb{M}_K(\tilde{q} \rhd x).\ite{x = 0}{t_1\tags\tau.\nil_{\tilde{q}}}{\nil_{\tilde{q}}}$, with $t_0, t_1$ fresh tags.
%		After two steps, $B[\Delta]$ ends up in $(\nil_{\tilde{q}}\parallel\nE(\Delta)) \psum{tr(\E(env(\Delta)))} \bot $ and similarly for $B[\Theta]$. 
%		Since we already proved that $env(\Delta) = env(\Theta)$, by \autoref{thm:botDecomposability} it follows that $\nE(\Delta) \sim_s \nE(\Theta)$.}
		
		Finally, 
		Suppose $\E = \{K_0, \ldots, K_{n-1}\}$ is a trace non-increasing superoperator with $n$ Kraus operators, we define $n$ superoperators $\mathcal{M}_i = \{K_i\}$ for $i = 0,\ldots, n-1$.
		We have that, for any $\rho$, $\E(\rho) = \sum_{i = 0}^{n-1} \mathcal{M}_i(\rho)$ and $tr(\E(\rho)) = \sum_{i = 0}^{n-1} tr(\mathcal{M}_i(\rho))$.
		Since $\E$ is trace non-increasing, we know by definition that $\I - \sum_{i = 0}^{n-1} K_i^\dagger K_i$ is a positive matrix, and we call this difference $M$.
		We build a measurement $\mathbb{M}$ with measurement operators $\{K_0, \ldots, K_{n-1}, \sqrt{M}\}$.
		This is  a $n+1$-outcome measurement, one for each Kraus operator of $\E$ and an additional one signifying that $\E$ does not happen.
		Take then the context $B[\blank] = [\blank] \parallel t\tags\mathbb{M}(\tilde{q} \rhd x).R$ where $R=\ite{x \neq n}{t'\tags\tau.\tilde{t}\tags c!\tilde{q}}{\tilde{t}\tags c!\tilde{q}}$ and $t, t', \tilde{t}$ are distinct fresh tags.
		Suppose that $\Delta = \sum_j \distelem{p_j}{\conf{\rho_j, P_j}}$, then a possible sequence of transition of $B[\Delta]$ is
		\begin{align*}
			&B[\Delta] \tauarrow_{\overline{t}} \sum_j \distelem{p_j}{\left(
				\sum_{i = 0}^{n} \distelem{tr(\mathcal{M}_i(\rho_j))}{\slrconf{\frac{\mathcal{M}_i(\rho_j)}{tr(\mathcal{M}_i(\rho_j))}, P_j \parallel R[i/x]}}
				\right)} \\
			& \tauarrow_{\overline{t'}}\sum_j \distelem{p_j }{\left[ 
				\left( 
				\sum_{i=0}^{n-1} \distelem{\frac{tr(\mathcal{M}_i(\rho_j))}{tr(\E(\rho_j))}}{\slrconf{\frac{\mathcal{M}_i(\rho_j)}{tr(\mathcal{M}_i(\rho_j))}, P_j \parallel \tilde{t}\tags c!\tilde{q} }}
				\right) \psum{tr(\E(\rho_j))}\bot
				\right]} = \Delta'
		\end{align*}
		Applying \autoref{thm:propertyA} to such $\Delta'$, we get by linearity
		\begin{align*}
			 \Delta' &\sim \sum_j \distelem{p_j}{\left[
				\overline{\left\langle{\frac{\sum_{i=0}^{n-1} \mathcal{M}_i(\rho_j)}{tr(\E(\rho_j))}, P_j \parallel \tilde{t}\tags c!\tilde{q} }\right\rangle}
				\psum{tr(\E(\rho_j))} \bot 
			\right]} \\
			&= \left(\sum_j \distelem{\frac{p_j tr(\E(\rho_j))}{p_{\E}}}{\overline{\left\langle\frac{\E(\rho_j)}{tr(\E(\rho_j))}, P_j \parallel \tilde{t}\tags c!\tilde{q}\right\rangle}}\right) \psum{p_{\E}} \bot \\
			&= (\nE(\Delta) \parallel \tilde{t}\tags c!\tilde{q}) \psum{p_{\E}} \bot
		\end{align*}
		where $p_{\E} = \sum_j p_j tr(\E(\rho_j))$ is the total probability of observing $\E$.
		This sequence of transition must be replicated also by $B[\Theta]$, and so we know that 
		\begin{align*}
			B[\Delta] \tauarrow_{t}\tauarrow_{t'} \ &\Delta' \sim (\nE(\Delta) \parallel \tilde{t}\tags c!\tilde{q}) \psum{p_{\E}} \bot \\
			&\ \rotatebox{90}{$\sim$} \\
			 B[\Theta] \tauarrow_{t}\tauarrow_{t'} \ &\Theta' \sim (\nE(\Theta) \parallel \tilde{t}\tags c!\tilde{q}) \psum{q_{\E}} \bot
		\end{align*}where $q_{\E}$ is the probability of observing $\E$ in $\Theta$.
		We thus know that $p_{\E} = q_{\E}$ (as it is expected, since $env(\Delta) = env(\Theta)$) and that $\nE(\Delta) \sim \nE(\Theta)$ thanks to \autoref{thm:botDecomposability} and \autoref{thm:ignoreSending}.
		
\qed\end{proof}

\begin{restatable}{theorem}{complete}\label{thm:complete}
	$\sim_s \ \subseteq \ \sim_l$.
\end{restatable}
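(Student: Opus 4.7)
The plan is to show that $\sim_s$ itself already satisfies all four clauses of \autoref{def:labBisim}, from which $\sim_s\ \subseteq\ \sim_l$ will follow immediately by coinduction, as $\sim_l$ is the largest labelled bisimulation. Fix $\Delta \sim_s \Theta$ typed by $(\Sigma, \Sigma')$. The typing clause is built into the definition of $\sim_s$, while the equality of environments $\tr_{\Sigma'}(qs(\Delta)) = \tr_{\Sigma'}(qs(\Theta))$ and the closure $\nE(\Delta) \sim_s \nE(\Theta)$ for every $\mathcal{E} \in \soset{\hilbert_{\Sigma \setminus \Sigma'}}$ are exactly the two items of \autoref{lem:cinesi}. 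Only the two labelled-transition clauses remain.

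For these, I would split on the label $\mu$ of the transition $\Delta \xlongrightarrow{\mu}_\delta \Delta'$. When $\mu = \tau$, instantiating the defining condition of $\sim_s$ with the trivial context $B[\blank] = [\blank]$ directly delivers a matching transition $\Theta \tauarrow_\delta \Theta'$ with $\Delta' \sim_s \Theta'$. For a visible output $\mu = c!v$, I would use the context $B[\blank] = [\blank] \parallel t\tags c?x.R$ where $t$ is a tag fresh in both $\Delta$ and $\Theta$ (which can always be arranged by $\alpha$-renaming via \autoref{thm:alpha}); I would take $R = \nil$ if $v$ is classical, and $R = t'\tags d!x$ with $d, t'$ fresh if $v$ is a qubit, so that the transferred qubit can be discarded afterwards using \autoref{thm:ignoreSending}. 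The transition of $\Delta$ lifts to a $\tau$-synchronization of $B[\Delta]$ whose scheduler is the pointwise pairing of $\delta$ with $\overline{t}$; saturated bisimilarity yields a matching $\tau$-transition of $B[\Theta]$, and \autoref{thm:syncmoves}, combined with the freshness of $t$, excludes all its cases except synchronization, forcing $\Theta$ itself to perform a $c!v$-transition scheduled by $\delta$. The residual context is then peeled off using \autoref{thm:freshtag} and \autoref{thm:ignoreSending} to obtain the required $\Delta' \sim_s \Theta'$. Inputs $\mu = c?v$ are handled symmetrically with the dual context $B[\blank] = [\blank] \parallel t\tags c!v.R$.

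The main obstacle will be the bookkeeping around randomized schedulers. The original $\delta$ is a distribution over single tags actually used by $\Delta$, but once $\Delta$ sits under the context, the synchronization is scheduled by a distribution over pairs; I have to argue that this pairing can be realised as the pointwise product of $\delta$ with $\overline{t}$, and that when saturated bisimilarity transports it to the $\Theta$-side the matching pair-scheduler decomposes along the \emph{same} first component $\delta$, so that the $\Theta$-move extracted through \autoref{thm:syncmoves} is genuinely scheduled by $\delta$. The deterministic-tagging assumption is what makes this go through, since it forces the scheduler to uniquely determine the successor distribution. A secondary concern, specific to quantum outputs, is that the context's temporary possession of the transferred qubit must not produce entanglement that cannot be discarded when the context is peeled off; this is controlled by keeping $R$ purely syntactic and by exploiting the freshness of $t, d, t'$ together with the linear typing discipline.
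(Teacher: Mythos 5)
Your overall strategy is the paper's: show that $\sim_s$ is itself a labelled bisimulation, discharging the environment and superoperator clauses via \autoref{lem:cinesi}, handling $\tau$ with the empty context, and handling visible actions by placing a complementary communicating partner in parallel and pairing the scheduler. The input case and the $\tau$ case are fine. The gap is in the output case $\mu = c!v$. Your receiving contexts ($R = \nil$ for classical $v$, $R = t'\tags d!x$ for a qubit) accept \emph{any} value sent on $c$, so when saturated bisimilarity transports the synchronization to the $\Theta$-side you only learn that $\Theta \xlongrightarrow{c!u}_\delta \Theta'$ for \emph{some} $u$, with $\Delta' \parallel R[\sfrac{v}{x}] \sim_s \Theta' \parallel R[\sfrac{u}{x}]$. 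Nothing in your argument forces $u = v$, yet the labelled bisimulation clause requires $\Theta$ to match on the \emph{same} label $c!v$, and \autoref{thm:ignoreSending} is only applicable when the two residuals carry the same qubit name. With $R = t'\tags d!x$ the residuals are $\Delta' \parallel t'\tags d!v$ and $\Theta' \parallel t'\tags d!u$, whose saturated bisimilarity does not by itself imply $u = v$.

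The paper closes exactly this hole by embedding an equality test in the continuation of the receiver: $R = \ite{x = v}{t'\tags\tau.\,t''\tags c!x}{t''\tags c!x}$ (and the analogous conditional for classical values). Since $\Delta' \parallel R[\sfrac{v}{x}]$ has the $t'$-tagged $\tau$-move enabled, the bisimilar $\Theta' \parallel R[\sfrac{u}{x}]$ must have it too, which forces $u = v$; only then does one strip the now-identical residual $t''\tags c!v$ via \autoref{thm:ignoreSending}. Your proposal needs this (or an equivalent value-observing device) added to $R$; as written, the output case does not go through. The scheduler-pairing bookkeeping you flag as the main obstacle is in fact the unproblematic part, handled by the linear definition of $\delta \parallel t$.
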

\begin{proof}
	We have to prove that $\sim_s$ is a labelled bisimulation. The first two conditions are already proven in \autoref{lem:cinesi}, we must check the last two.
	Suppose $\Delta \sim_s \Theta$ and $\Delta \xlongrightarrow{\mu}_\delta \Delta'$. We proceed by cases on $\mu$.

	If $\mu = \tau$, then $B[\Delta] \tauarrow_\delta \Delta'$ for the empty context $B[\blank] = \blank$, thus there exists a $\Theta'$ such that $B[\Theta] = \Theta \tauarrow_\delta \Theta'$ and $\Delta' \sim_s \Theta'$.

	If $\mu = c?v$, then $\Delta \parallel t \tags c!v \tauarrow_{\delta \parallel t} \Delta' \parallel \nil$, where given a randomized scheduler $\delta$, $\delta \parallel t$ is defined linearly by $\singleton{t'} \parallel t = \singleton{(t', t)}$ and $(\delta_1 \psum{p} \delta_2) \parallel t = (\delta_1 \parallel t) \psum{p} (\delta_2 \parallel t)$. Note that $\delta \parallel t$ is not defined when $\delta = \singleton{h}$ or $\singleton{(t_1, t_2)}$, but we do not need these cases as we know that $\Delta$ performs a visible action $c?v$ under the scheduler $\delta$.
	Since $\Delta \parallel t \tags c!v \tauarrow_{\delta\parallel t} \Delta' \parallel \nil$, then there exists a $\Theta'$ such that $\Theta \parallel t\tags c!v \tauarrow_{\delta\parallel t} \Theta' \parallel \nil$. From the semantics of synchronization we also know that $\Theta \xlongrightarrow{c?v}_\delta \Theta'$ and $\Delta' \parallel \nil \sim_s \Theta' \parallel \nil$.
	It is possible to show that $\Delta \parallel \nil \sim_s \Delta$ for any $\Delta$, and thus we get $\Delta' \sim_s \Theta'$.

	If $\mu = c!v$, when $v$ is a quantum name, we build the context $B[\blank] = \blank \parallel t \tags c?x.R$ with $R = \ite{x = v}{t'\tags \tau. t''\tags c!x}{t''\tags c!x}$. Then we have that $B[\Delta] \tauarrow_{\delta \parallel t} \Delta' \parallel R[v/x]$, and thus $B[\Theta] \tauarrow_{\delta \parallel t} \Theta' \parallel R[u/x]$ and $\Theta \xlongrightarrow{c!u}_\delta \Theta'$ for some quantum name $u$ possibly different from $v$.
	Thanks to the bisimilarity between $\Delta' \parallel R[v/x]$ and $ \Theta' \parallel R[u/x]$, we can prove $u = v$ : $\Delta' \parallel R[v/x] \tauarrow_{t'} \Delta' \parallel t''\tags c!v$, and also $\Theta' \parallel  R[u/x]$ must have the same action available, so it must be $u = v$.
	To sum up, we know that $\Theta \xlongrightarrow{c?v} \Theta'$ and that $\Delta' \parallel t''\tags c!v \sim_s \Theta' \parallel t''\tags c!v$.
	This implies $\Delta' \sim_s \Theta'$ thanks to \autoref{thm:ignoreSending}.
	The case for $\mu = c!v$ for a classical value $v$ is the same, using the context $B[\blank] = \blank \parallel t \tags c?x.\ite{x = v}{t'\tags \tau. \nil}{\nil}$.
\qed\end{proof}

\begin{restatable}{theorem}{correct}\label{thm:correct}
	$\sim_l \ \subseteq \ \sim_{s}$.
\end{restatable}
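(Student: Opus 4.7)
The plan is to show that the context closure $\rel = \{(B[\Delta], B[\Theta]) \mid \Delta \sim_l \Theta\}$ is a saturated bisimulation; the relation is automatically saturated by associativity of parallel composition (any context $B'[\blank]$ applied to $B[\Delta]$ yields a pair $B''[\Delta], B''[\Theta]$ already in $\rel$). Type-matching is immediate from the type clause of $\sim_l$. Mass-matching follows from the environment condition $\tr_{\Sigma'}(qs(\Delta)) = \tr_{\Sigma'}(qs(\Theta))$ together with Proposition~\ref{thm:envtrace} and the linearity of contexts, giving $\mass{B[\Delta]} = \mass{\Delta} = \mass{\Theta} = \mass{B[\Theta]}$.

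The core is transition-matching: given $B[\Delta] \tauarrow_\delta \Delta'$ with $B[\blank] = [\blank] \parallel R$, I apply Lemmas~\ref{thm:simplemoves} and~\ref{thm:syncmoves} to classify the move. In the pure-$\Delta$ case, $\Delta' = B[\Delta'']$ with $\Delta \tauarrow_\delta \Delta''$; the $\tau$-clause of $\sim_l$ gives $\Theta \tauarrow_\delta \Theta''$ with $\Delta'' \sim_l \Theta''$, hence $B[\Delta''] \rel B[\Theta'']$. In the pure-$R$ case, Lemma~\ref{thm:simplemoves} (third bullet) describes $\Delta'$ as $\sum_i p_i\, B_i[\nE_i(\Delta)]$ for some superoperators $\E_i$ acting on environment qubits, and guarantees $B[\Theta]$ exhibits the corresponding $\sum_i p_i\, B_i[\nE_i(\Theta)]$ whenever the environments agree. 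The environment-equality and superoperator-closure clauses of Definition~\ref{def:labBisim} then yield $\nE_i(\Delta) \sim_l \nE_i(\Theta)$, so the resulting pairs lie in $\rel$. The deadlock sub-case $\Delta' = \overline{\bot}$ is handled by Lemma~\ref{thm:alwaysbot}, since whether $R$ can move under the scheduler depends only on its tags, not on the quantum state.

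For synchronization, Lemma~\ref{thm:syncmoves} produces a visible move $\Delta \xlongrightarrow{\mu}_{\singleton{t_i}} \Delta''$ of $\Delta$ matched by a dual transition of $R$, with $\Delta' = B'[\Delta'']$. The labelled-action clauses of $\sim_l$ provide $\Theta \xlongrightarrow{\mu}_{\singleton{t_i}} \Theta''$ with $\Delta'' \sim_l \Theta''$, so the same synchronization is available for $B[\Theta]$, giving $B'[\Theta'']$ related by $\rel$. The purely-contextual sub-case, where $R$ makes an unobserved internal step, is independent of $\Delta,\Theta$ and is trivially matched.

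The main obstacle I anticipate is the environmental-superoperator case: I need the decomposition of $\Delta'$ through Lemma~\ref{thm:simplemoves} to cover every way a context can inspect or perturb the environment (including measurements, whose branches correspond to trace non-increasing $\E_i$ with normalisation factors realised by the coefficients $p_i$), and simultaneously to ensure the scheduler distribution $\delta$ is driven by the same tags on both sides so that determinism aligns the matching transitions. Provided Lemmas~\ref{thm:simplemoves} and~\ref{thm:syncmoves} cover all transitions of $B[\Delta]$ (which they do, since $B[\blank]$ is a parallel composition and every rule of the operational semantics decomposes through $\rulename{ParL}$, $\rulename{ParR}$, or $\rulename{Synch}$), the argument closes and we conclude $\sim_l\,\subseteq\,\sim_s$.
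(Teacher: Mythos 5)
Your overall route coincides with the paper's: both arguments reduce a transition of $B[\Delta]$ to the case analysis of Lemmas~\ref{thm:simplemoves} and~\ref{thm:syncmoves} and discharge each case through the clauses of Definition~\ref{def:labBisim}. But there is a genuine gap in how you close the relation. You take $\rel = \{(B[\Delta], B[\Theta]) \mid \Delta \sim_l \Theta\}$, the plain context closure, and in the pure-$R$ case you claim that the residual $\Delta' = \sum_i p_i\, B_i[\nE_i(\Delta)]$ is matched by $\Theta' = \sum_i p_i\, B_i[\nE_i(\Theta)]$ with ``the resulting pairs'' in $\rel$. Each \emph{summand} pair $(B_i[\nE_i(\Delta)], B_i[\nE_i(\Theta)])$ is indeed in $\rel$, but the pair of \emph{convex combinations} $(\Delta', \Theta')$ generally is not: the contexts $B_i[\blank]$ differ across $i$ (think of a context that measures an environment qubit and branches on the outcome), so $\Delta'$ is not of the form $B[\Delta'']$ for any single context $B$ and any $\Delta'' \sim_l$-related to the corresponding residual of $\Theta$. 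Hence $\rel$ as you defined it is not closed under transitions, and the coinductive argument does not go through. The same problem reappears for randomized schedulers $\delta = \delta_1 \psum{p} \delta_2$, a case your proposal never treats although the saturated bisimulation game quantifies over all scheduler distributions: there too the residuals are convex combinations of pairs in $\rel$.

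The paper resolves exactly this point by working with bisimulation \emph{up-to} $Cv \circ B$ (convex hull composed with context closure), whose soundness is established separately via the $b$-compatibility of $Cv$ (Lemma~\ref{lem:cvcomp}, which uses left-decomposability of $\tauarrow_\delta$) and of $B$ (Lemma~\ref{lem:bcomp}), combined in \autoref{thm:uptovalid}; the transition-matching is then carried out by induction on the structure of $\delta$, with the $\delta_1 \psum{p} \delta_2$ case absorbed by idempotency of $Cv$. To repair your proof you must either adopt that up-to technique, or enlarge $\rel$ to $Cv(B(\sim_l))$ from the outset and redo the verification for arbitrary convex combinations, which amounts to re-proving the decomposability facts of Lemma~\ref{lem:cvcomp} inline. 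A smaller remark: your claim that $\rel$ is ``automatically saturated by associativity of parallel composition'' silently identifies $(\Delta \parallel R) \parallel R'$ with $\Delta \parallel (R \parallel R')$, which are only bisimilar, not syntactically equal; the paper addresses this inside Lemma~\ref{lem:bcomp} rather than taking it for granted.
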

\begin{proof}
	It is sufficient to show that $\sim_l$ is a saturated bisimulation up-to $Cv$ and contexts.
	Assume $\Delta \sim_{l} \Theta$.
	To see that $\mass{\Delta} = \mass{\Theta}$ it is sufficient to notice that $env(\Delta) = env(\Theta)$, by definition of $\sim_{l}$.
	Then we apply~\autoref{thm:envtrace}.

	We now show by induction over $\delta$ that for all $B[\blank], \Delta, \Theta, \Delta'$, if $\Delta \sim_{l} \Theta$ and $B[\Delta] \tauarrow_\delta \Delta'$ then
	$\Theta'$ exists such that $B[\Theta] \tauarrow_\delta \Theta'$ and $\Delta'\ Cv(B(\sim_{l}))\ \Theta'$.
	If $\delta = \singleton{h}$ then the property is trivially true as $B[\Delta]$ cannot move with $\tauarrow_{\overline{h}}$.

	Let $\delta = \singleton{t}$, then by~\autoref{thm:simplemoves} we have to consider three cases.
	If $\Delta' = B[\Delta'']$ for some $\Delta'' \neq \overline{\bot}$ such that $\Delta \tauarrow_{\singleton{t}} \Delta''$, then by hypothesis that $\Delta \sim_{l} \Theta$ it holds that $\Theta \tauarrow_{\singleton{t}} \Theta' \neq \overline{\bot}$ and then $B[\Theta] \tauarrow_{\singleton{t}} B[\Theta']\ B(\sim_{l})\ \Delta'$.
	If $\Delta' = \bot$ then $\Delta \tauarrow_{\singleton{t}} \overline{\bot}$ and by $\Delta \sim_{l} \Theta$ it holds that $\Theta \tauarrow_{\singleton{t}} \overline{\bot}$ and then $B[\Theta] \tauarrow_{\singleton{t}} \overline{\bot}$.
	
	The last case is $\Delta' = \sum_i p_i B_i'[\nE_i(\Delta)]$, with $\E_i$ trace non-increasing superoperators.
	Then since $env(\Delta) = env(\Theta)$, $B[\Theta] \tauarrow_t \Theta' = \sum_i p_i B_i'[\nE_i(\Theta)]$.
	Note that $\Delta \sim_{l} \Theta$ implies $\nE(\Delta) \sim_{l} \nE(\Theta)$, hence $\Delta'\ Cv(B(\sim_{l}))\ \Theta'$.
	
%	The last case is $\Delta' = \sum_i p_i B'[\nE(\Delta)]$, with $\E$ a trace non-increasing superoperator.
%	Then since $env(\Delta) = env(\Theta)$, $B[\Theta] \tauarrow_t \Theta' = \sum_i p_i B'[\nE(\Theta)]$.
%	Note that $\Delta \sim_{l} \Theta$ implies $\nE(\Delta) \sim_{l} \nE(\Theta)$, hence $\Delta'\ Cv(B[\sim_{l}])\ \Theta'$.

	Let $\delta = \singleton{(t_1,t_2)}$, then by~\autoref{thm:syncmoves} we have to consider four cases.
	The first three are similar to the ones for $\delta = \singleton{t}$.
	Take the last, and assume $\Delta' = B'[\Delta'']$ with $\Delta \xlongrightarrow{\mu}_{\singleton{t_1}} \Delta''$.
	Then, by hypothesis, $\Theta \xlongrightarrow{\mu}_{\singleton{t_1}} \Theta'$ with $\Delta'' \sim_{l} \Theta'$, and by~\autoref{thm:syncmoves}, $B[\Theta] \tauarrow_{\singleton{(t_1, t_2)}} B'[\Theta']\ B(\sim_{l})\ \Delta'$.

	%	Let $c = \ite{t}{c_1}{c_2}$, then by~\autoref{thm:contextdecompose} we have to consider two cases.
	%	If $B[\Delta]_t = B[\Delta]$ and $B[\Theta] = B[\Theta]_t$ then $B[\Delta] \tauarrow_{\ite{t}{c_1}{c_2}} \Delta'$ and $B[\Theta] \tauarrow_{\ite{t}{c_1}{c_2}} \Theta'$ if and only if $B[\Delta] \tauarrow_{c_1} \Delta'$ and $B[\Theta] \tauarrow_{c_1} \Theta'$ and the result directly follows by induction hypothesis.
	%	Then we consider $B[\Delta]_t = B[\Delta_t]$, $B[\Delta]_{\overline t} = B[\Delta_{\overline t}]$, $B[\Theta]_t = B[\Theta_t]$, and $B[\Theta]_{\overline t} = B[\Theta_{\overline t}]$.
	%	Let $\Delta = \Delta_t \psum{p} \Delta_{\overline t}$, since $\Delta \sim_{l} \Theta$ we know that $\Theta = \Theta_t \psum{p} \Theta_{\overline t}$ with $\Delta_t \sim_{l} \Theta_t$ and $\Delta_t \sim_{l} \Theta_t$.
	%	From $B[\Delta] \tauarrow_{\ite{t}{c_1}{c_2}} \Delta'$ and by definition of the choice $\ite{t}{c_1}{c_2}$ we know that, since the reduction relation is decomposable, $B[\Delta_t] \tauarrow_{c_1} \Delta_1$ and $B[\Delta_{\overline t}] \tauarrow_{c_2} \Delta_2$ with $\Delta' = \Delta_1 \psum{p} \Delta_2$.
	%	Hence, by induction hypothesis, $B[\Theta_t] \tauarrow_{c_1} \Theta_1$ and $B[\Theta_{\overline t}] \tauarrow_{c_2} \Theta_2$ with $\Delta_i\ Cv(B[\sim_c])\ \Theta_i$, $i = 1,2$.
	%	By linearity of the reduction and by definition of $\ite{t}{c_1}{c_2}$, then $B[\Theta] \tauarrow_{\ite{t}{c_1}{c_2}} \Theta_1 \psum{p} \Theta_2\  Cv(Cv(B[\sim_c]))\ \Delta_1 \psum{p} \Delta_2$.
	%	The thesis follows by noticing that $Cv(Cv(\rel)) = Cv(\rel)$ for any $\rel$.

	Let $\delta = \delta_1 \psum{p} \delta_2$.
	Then, by definition of the scheduler $\delta_1 \psum{p} \delta_2$, $B[\Delta] \tauarrow_{\delta_1} \Delta_1$ and $B[\Delta] \tauarrow_{\delta_2} \Delta_2$ with $\Delta' = \Delta_1 \psum{p} \Delta_2$.
	By induction hypothesis, $B[\Theta] \tauarrow_{\delta_1} \Theta_1$ and $B[\Theta] \tauarrow_{\delta_2} \Theta_2$ with $\Delta_i\ Cv(B[\sim_{l}])\ \Theta_i$ for $i = 1,2$.
	Then, by linearity of $\tauarrow$ and definition of $\delta_1 \psum{p} \delta_2$, $B[\Theta] \tauarrow_{\delta_1 \psum{p} \delta_2} \Theta_1 \psum{p} \Theta_2\ \ Cv(Cv(B(\sim_{l})))\ \ \Delta_1 \psum{p} \Delta_2$.
	Finally, it is sufficient to see that $Cv(Cv(\rel)) = Cv(\rel)$ for any $\rel$.
\qed\end{proof}

\corrcompl*
\begin{proof}
	By~\autoref{thm:complete} and~\ref{thm:correct}.
\qed\end{proof}

\end{document}